\newcommand{\cb}{{\scriptscriptstyle \mathrm{CB}}}
\newcommand{\transpose}{{\mathsf{T}}}
\newcommand{\cL}{\mathcal{L}}
\newcommand{\cR}{\mathcal{R}}
\newcommand{\cA}{\mathcal{A}}
\newcommand{\cX}{\mathcal{X}}
\newcommand{\cY}{\mathcal{Y}}
\newcommand{\cB}{\mathcal{B}}
\newcommand{\bbZ}{\mathbb{Z}}
\newcommand{\bbR}{\mathbb{R}}
\newcommand{\bx}{\mathbf{x}}
\newcommand{\bY}{\mathbf{Y}}
\newcommand{\ba}{\mathbf{a}}
\newcommand{\bg}{\mathbf{g}}
\newcommand{\bh}{\mathbf{h}}
\newcommand{\bk}{\mathbf{k}}
\newcommand{\bn}{\mathbf{n}}
\newcommand{\br}{\mathbf{r}}
\newcommand{\bu}{\mathbf{u}}
\newcommand{\bv}{\mathbf{v}}
\newcommand{\bF}{\mathbf{F}}
\newcommand{\bU}{\mathbf{U}}
\newcommand{\bgamma}{\boldsymbol{\gamma}}
\newcommand{\bxi}{\boldsymbol{\xi}}
\newcommand{\brho}{\boldsymbol{\rho}}
\newcommand{\bzero}{\boldsymbol{0}}
\newcommand{\mE}{\mathrm{E}}
\newcommand{\mI}{\mathrm{I}}
\newcommand{\mR}{\mathrm{R}}
\newcommand\1{{\mathds{1}}}
\newcommand{\be}{\mathbf{e}}
\def\XXint#1#2#3{{\setbox0=\hbox{$#1{#2#3}{\int}$ }
\vcenter{\hbox{$#2#3$ }}\kern-.6\wd0}}
\numberwithin{theorem}{section}
\newtheorem{remark}[theorem]{Remark}
\title{Energy minimization of 2D incommensurate heterostructures}
\author{Paul Cazeaux%
\thanks{Department of Mathematics, University of Kansas, Lawrence, KS 66049 (\email{pcazeaux@ku.edu})}
\and
Mitchell Luskin%
\thanks{School of Mathematics, University of Minnesota, Minneapolis, MN 55455 (\email{luskin@umn.edu}, \email{massatt067@umn.edu})}
\and
Daniel Massatt%
\footnotemark[2]
}
\date{\today}
\numberwithin{equation}{section}
\begin{document}
\maketitle
\begin{abstract}
We derive and analyze a novel approach for modeling and computing the mechanical relaxation of incommensurate 2D heterostructures.  Our approach parametrizes the relaxation pattern by the compact local configuration space rather than real space, thus bypassing the need for the standard supercell approximation and giving a true aperiodic atomistic configuration.  Our model
extends the computationally accessible regime of weakly coupled bilayers with similar orientations or lattice spacing, for example materials with a small relative twist where the widely studied large-scale moir\'e patterns arise~\cite{Kim3364,KimRelax18}.  Our model also makes possible the simulation of multi-layers for which no inter-layer empirical atomistic potential exists, such as those composed of MoS$_2$ layers, and more generally makes possible the simulation of the relaxation of multi-layer heterostructures for which a planar moir\'e pattern does not exist.

\end{abstract}

\begin{keywords}
	2D materials, heterostructures, ripples, incommensurability, atomistic relaxation, Frenkel-Kontorova
\end{keywords}

\begin{AMS}
	65Z05, 70C20, 74E15, 70G75
\end{AMS}


\section*{Introduction}
Two-dimensional (2D) crystals have been intensely investigated both experimentally and theoretically since graphene was exfoliated from graphite \cite{Novoselov666}.
Graphene has been shown to have exceptional mechanical strength and electrical conductivity, but its lack of a band gap limits its technological application \cite{neto2009electronic}.
Many other two-dimensional crystals such as the insulator h-BN and semiconductor MoS$_2$ have since been experimentally realized.
More recently, physicists have developed the ability to stack one layer on another with a twist angle controlled to the scale of .1 degree with the goal of creating 2D materials with desired electronic, optical, and mechanical properties \cite{Geim2013}.

Although each layer is crystalline and has a periodic structure when isolated, the ground state mechanical configuration of weakly coupled multi-layer 2D systems is generally no longer periodic when the individual layers have incommensurate lattice constants such as when graphene is stacked on MoS$_2$ or when one 2D crystal such as graphene or MoS$_2$ is stacked and rotated on a like layer~\cite{MassattDOS16,Kubo2017}.
The periodic configuration of each layer is no longer a mechanical ground state when so stacked and their relaxation may create long range modulations along the moiré pattern that strongly affect electronic properties~\cite{nam2017lattice,KimRelax18}, such as the recently observed superconductivity at so-called magic angles~\cite{Cao2018superconductivity,wen2018dihedral,carr2019minimal}.

To study mechanical relaxation for a particular twist angle or set of lattice constants, previous works have approximated the incommensurate system using an artificial strain or twist to force the system to be periodic on a supercell.
In~\cite{Wijk2015}, for example, the relaxation of bilayer graphene on supercells was studied using interatomic potentials, REBO for intralayer coupling and Kolmogorov-Crespi for interlayer coupling.
This approach is no longer feasible for small twist angles because the length scale of the supercell approximation scales inversely to the twist angle, in addition to restricting the set of configurations which can be probed~\cite{2DPerturb15}.

Other previous works have proposed or derived continuum models on moir\'e cells or supercells by imposing \textit{a priori} periodic boundary conditions, bypassing the need to strain the system into periodicity as for atomistic models.
It is, however, not clear how to rigorously link such periodic continuum deformations with the underlying aperiodic atomistic configurations for incommensurate heterostructures.
In~\cite{Dai2016}, a continuum elastic model was used for intralayer interactions while a generalized stacking fault energy (GSFE) for bilayer graphene was used to capture interlayer interactions, and in~\cite{zhang2017energy} intralayer interactions in bilayer graphene were also modeled by continuum elasticity and the intralayer interactions were modeled by a disregistry energy calculated from the Kolmogorov-Crespi interlayer atomistic potential.

In~\cite{Espanol2d}, a discrete-to-continuum model of Ginzburg-Landau type for the approximation of the in-plane and transverse displacement of a deformable square lattice weakly coupled to a rigid square lattice was derived for an empirical potential including harmonic springs, torsion, and dihedral terms for intralayer interactions and a classical Lennard-Jones term to model the weak interlayer interactions.
An asymptotic derivation was given of a continuum intralayer and interlayer energy, and numerical simulations were given of planar moir\'e relaxation patterns.

This paper presents a new conceptual approach to modeling and computing the mechanical relaxation pattern of general weakly coupled incommensurate deformable multi-layers by parametrizing the relaxation pattern by local configuration space rather than real space.
    This concept of \textit{hull}, a compact parametrization of all possible local environments, was originally introduced for the study of quantum electronic models of general aperiodic solids~\cite{bellissard1994noncommutative,bellissard2003coherent,prodan2012quantum} and recently extended to incommensurately stacked multi-layer assemblies~\cite{Kubo2017, MassattDOS16}.
The local configuration (disregistry) is given by the projection of the atomic positions of each layer onto the unit cell of the other layers.
This method is thus not limited to periodic multi-layer configurations or by a supercell approximation and provides the first representation of {\em aperiodic} atomistic configurations.
The 2D theory presented in this paper was explored in a 1D model in~\cite{Cazeaux2016} where a rigorous proof using Aubry-Mather theory is given that energy minimizers can be parameterized by local configuration.
An application of this approach to small angle twisted bilayer graphene and MoS$_2$ is given in~\cite{carrrelax}.

    Our main result is the derivation of an elastostatics model~\eqref{def:CBEnergy}-\eqref{def:CBModel} for the relaxation of vertical stacks of any number of incommensurate, weakly coupled deformable layers.
    For the particular case of bilayer heterostructures, this model reduces to the well-posed variational problem~\eqref{eq:EulerLagrange} for the continuum displacement field on a periodic moir\'e domain even for aperiodic atomistic configurations, quite similar to existing real-space continuum models~\cite{Dai2016,zhang2017energy,Espanol2d} albeit obtained by a very different route.

    In section~\ref{sec:geometry}, we describe the geometry of multi-layered structures, and in section~\ref{sec:elastic}, we propose an atomistic model for the relaxation of weakly coupled incommensurate layers and study the aliasing of the disregistry causing the moiré effect.
    In section~\ref{sec:kin}, we present the continuum interpolant in configuration space leading to a novel multilayer elastostatics model using the Cauchy-Born approximation, and in section~\ref{sec:analysis} we give an analysis of this model for bilayers.
    In section~\ref{sec:numerics}, we present our numerical approximation and give computational results for twisted bilayers.


\section{Geometry of multi-layered structures}\label{sec:geometry}
Two dimensional heterostructures are vertical stacks of a few crystalline monolayers, which typically have different periodicities.
Due to the weak van der Waals nature of the interactions between layers, little relaxation usually take place as the layers are mechanically stacked on top of each other, rather each layer essentially keeps the crystalline structure it possesses as an isolated monolayer.
The resulting assembly is in general an aperiodic system with long-range order.
This order is usually most noticeable when the lattices have close, but slightly different spacing and/or orientation, creating large scale moir\'e patterns.
In this case, an ordered, locally collective relaxation motion of atoms can take place as the layers maximize the area of energetically favorable stacking (See Figure \ref{fig:moire_relax}).

\begin{figure}
    \centering
    \begin{subfigure}[t]{.49\textwidth}
        \centering
        \includegraphics[width=.95\textwidth]{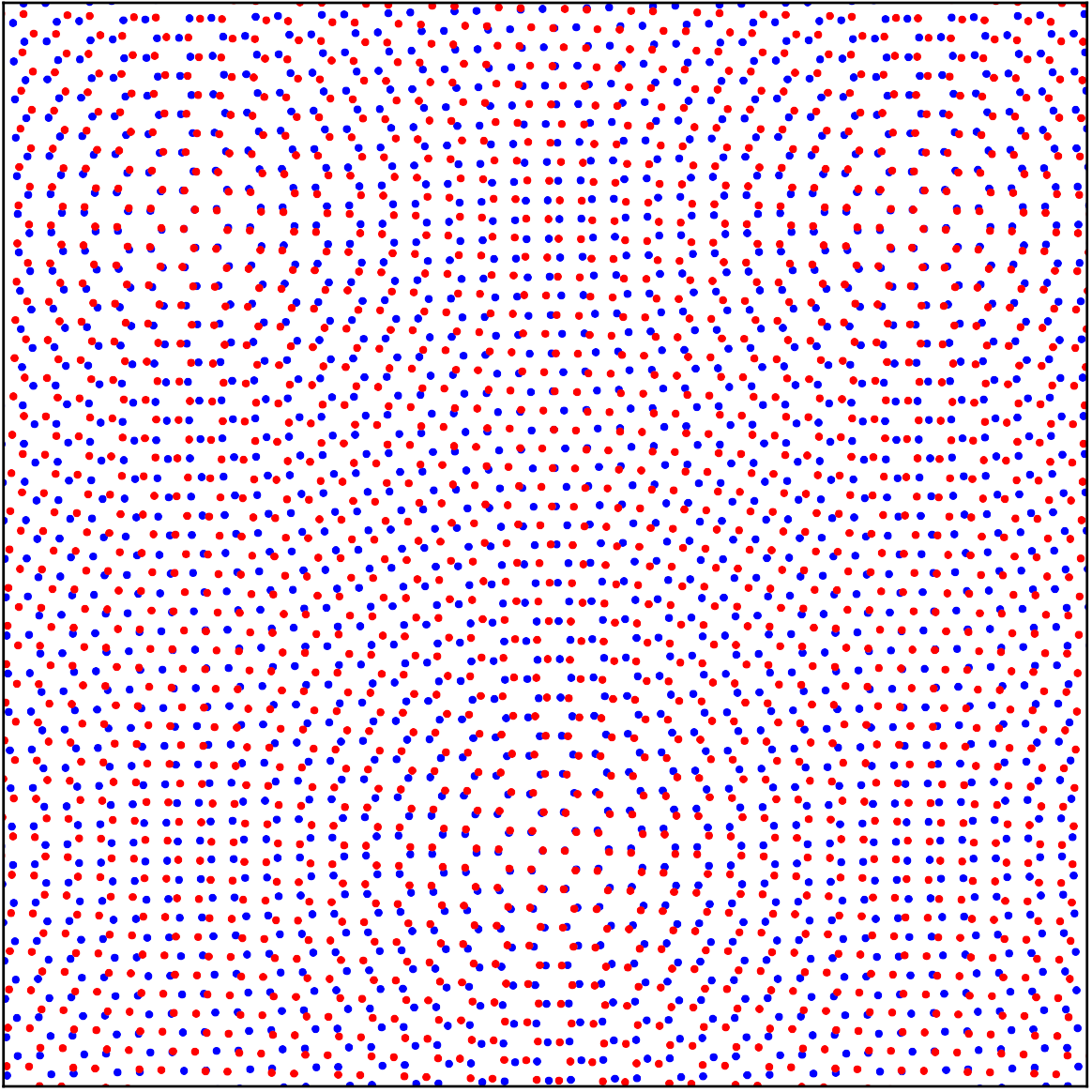}
        \caption{}
    \end{subfigure}
    \begin{subfigure}[t]{.49\textwidth}
        \centering
        \includegraphics[width=.95\textwidth]{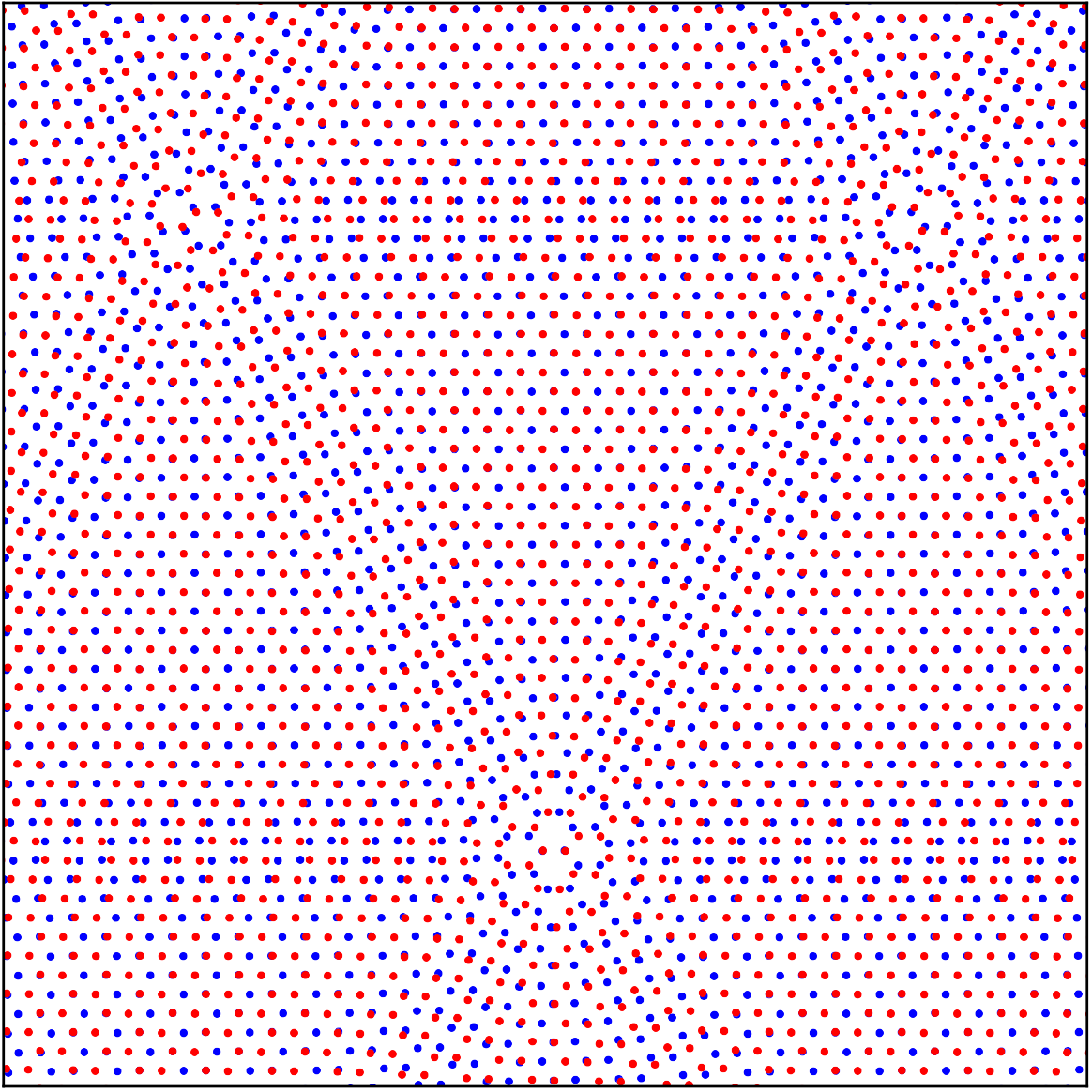}
        \caption{}
    \end{subfigure}
    \caption{Illustration of the collective, registry-driven relaxation of atomic moire patterns. (a) Unrelaxed atomic positions of G/G bilayer with a $3^\circ$ twist angle. (b) Domain structure after relaxation. The inter-layer interaction forces are enhanced by $\times 100$ for visualization purpose.}
    \label{fig:moire_relax}
\end{figure}

\subsection{Notation}

A systematic model of the relaxation of such heterostructures starts by a rigorous depiction and parameterization of the particular geometry of unrelaxed ideal multilayered structures, which will serve as a reference configuration in the elastic modeling of the relaxation phenomenon. We summarize in this section the findings of~\cite{Kubo2017}, where the geometry of these perfect multilayer structures was analyzed and discussed in detail.

We consider a 3-dimensional system of $p$ parallel 2-dimensional periodic atomic layers, denoted $\cL_j \subset \bbR^3$, $j = 1 \dots p$.
We denote by
\begin{itemize}
    \item $(\be_1, \be_2, \be_3)$ an orthonormal basis of the physical space such that each layer is perpendicular to $\be_3$; from now on, points in physical space will be identified with their cartesian coordinates $(x, y, z)^\transpose$ associated with this basis;
    \item $h_j$ the third coordinate, or height of the center of layer $j$. We assume that $0 = h_1 < h_2 < \dots < h_p$ without loss of generality;
    \item $\cR_j$, $j = 1 \dots p$ the 2-dimensional periodic lattice of layer $j$;
    \item $\mE_j$ the matrix in $\bbR^{2\times2}$ whose columns form a basis generating the layer $\cR_j$, that is,
    \begin{equation}\label{eq:lattice}
         \cR_j := \mE_j \bbZ^2 \subset \bbR^2;
    \end{equation}

    \item $\Gamma_j := \bbR^2 / \cR_j$ the quotient of $\bbR^2$ by the discrete lattice $\cR_j$, which has the topology of a 2-dimensional torus and can be canonically identified with the periodic unit cell $\widehat{\Gamma}_j := \mE_j [-1/2, 1/2)^2$ of layer $j$.
\end{itemize}

\subsection{Shifts and disregistry: the hull for incommensurate stackings}

The precise arrangement of the lattices $\cR_1$, $\cR_2,\ldots, \cR_p$ depends both on intrinsic characteristics, such as the nature of the lattice (hexagonal, square, etc.) or the value of the lattice constants, and on their position (disregistry) and orientation (twist angle). Depending on this arrangement, the overall structure can be periodic or aperiodic.
In general, periodicity arises when the intersection of the lattices $\cR_1 \cap \cdots \cap \cR_p$ forms a \textit{superlattice} of full rank.
In the generic case, the arrangement is not periodic,
meaning that a particular local arrangement of atoms is never exactly repeated as we translate horizontally the stack of layers.
\begin{figure}
    \begin{center}
        \includegraphics[width=.75\textwidth]{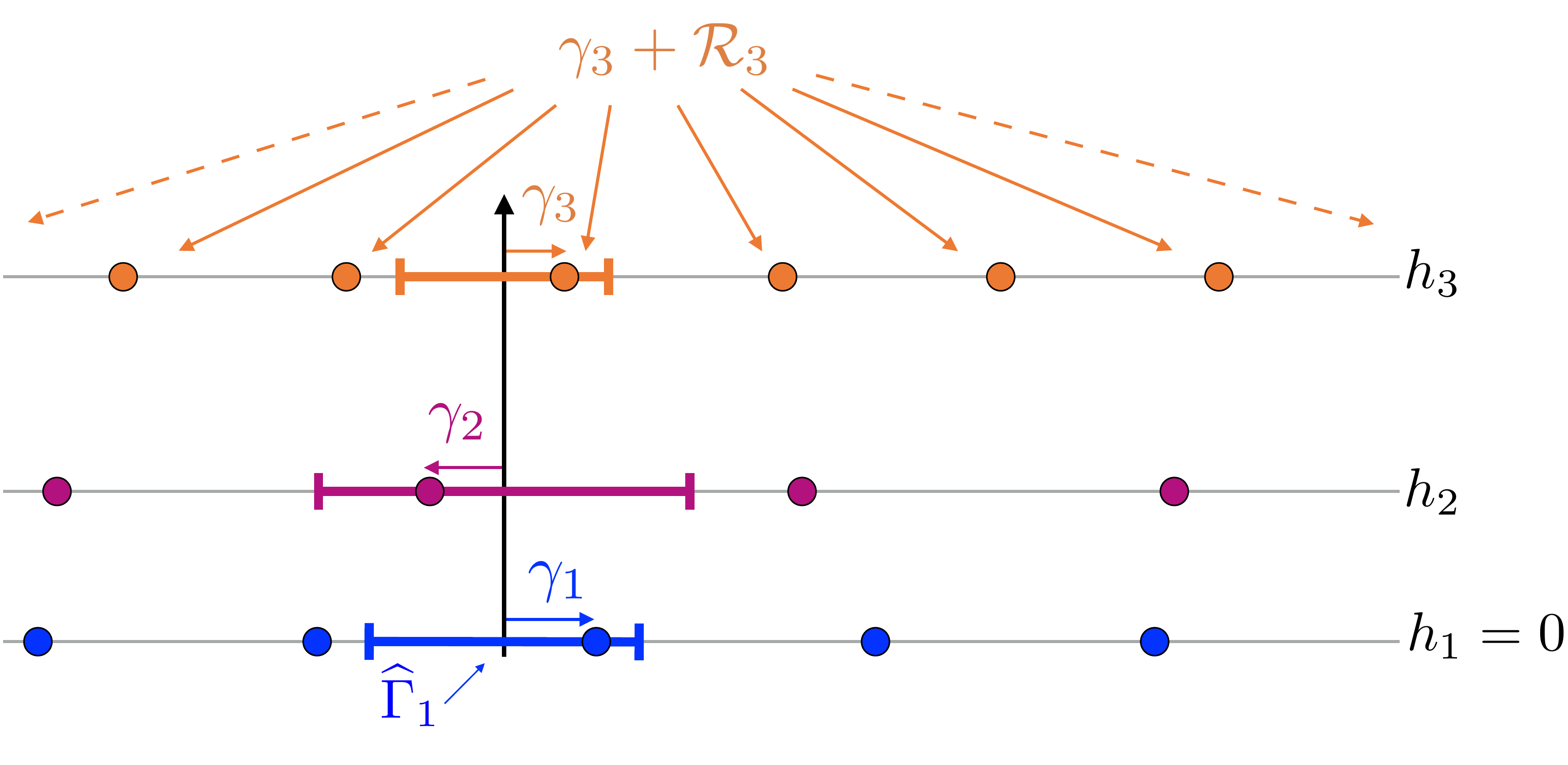}
    \end{center}
    \caption{Schematic representation of a configuration $\omega = (\bgamma_1, \bgamma_2, \bgamma_3) \in \Omega$.}
    \label{fig:hull}
\end{figure}

\paragraph{The hull}
In the context of multilayer stacks, the ensemble of possible global reference configurations
was rigorously identified in the earlier paper~\cite{Kubo2017}.
Since each layer is perfectly periodic in this reference, unrelaxed state, the position of all atoms in layer $j$ is fully identified by the knowledge of a single vector $\bgamma_j$ in its periodic unit cell $\Gamma_j$, which we call a \textit{disregistry}.
The vertices of this layer $j$ then form the shifted lattice $\bgamma_j + \cR_j$.
Once the origin is fixed, the reference configuration
can thus be uniquely parameterized as an element $\omega = (\bgamma_1, \dots, \bgamma_p)$ of the cartesian product:
\begin{equation}
    \Omega = \Gamma_1 \times \cdots \times \Gamma_p.
\end{equation}
We present in Figure~\ref{fig:hull} a one-dimensional example where an arbitrary configuration is labeled by an element of $\Omega$.
Finally, the change of coordinate origin in the plane is naturally associated with the action of the group $\bbR^2$ on $\Omega$ by translations $\mathtt{T}$ that are parallel to the layers:
\begin{equation}\label{def:Rdp1translations}
    \text{For } \mathbf{a} \in \bbR^2, \qquad
        {\mathtt T}_\mathbf{a}:  \left \{\begin{aligned}
            \Omega &\to \Omega, \\
            (\bgamma_1, \ldots, \bgamma_p) & \to \left(\bgamma_1 + \mathbf{a}, \dots, \bgamma_p + \mathbf{a} \right).
        \end{aligned}\right.
\end{equation}
This description is particularly relevant to incommensurate stacks since the spatial translates of any particular configuration form a dense subset of the full configuration space $\Omega$.
Recall indeed that the dual lattice of a full-rank lattice $\cR$ is defined by
\[
    \cR^* = \left \{ \bk \in \bbR^2 \ \vert \ \bk \cdot \bn \in \bbZ, \quad \forall \bn \in \cR \right \}.
\]
The following definition identifies the relevant incommensurability condition, first presented in~\cite{Kubo2017}:
\begin{definition}\label{definition:incommensurability}
    The collection of full-rank lattices $\cR_1, \cdots , \cR_p \subset \bbR^2$ is called incommensurate if we have for any $p$-tuple $(\bk_1, \dots, \bk_p) \in \cR_1^* \times \cdots \times \cR_p^*$,
    \begin{equation}\label{def:incommensurability}
        \sum_{j = 1}^p \bk_j = \mathbf{0} \quad \Leftrightarrow \quad \bk_j = \mathbf{0} \ \ \forall  j = 1, \dots, p.
    \end{equation}
\end{definition}
This condition is shown to be equivalent to the ergodicity property:
\begin{proposition} \label{prop:continuousHullErgodicity}
Let $\cR_1, \cdots , \cR_p$ be an incommensurate collection of lattices. Then we have the Birkhoff property: for any $f \in C(\Omega)$ and $\omega \in \Omega$,
    \begin{equation}\label{def:continuousBirkhoff}
        \lim_{r \to \infty} \frac{1}{\vert B_r \vert } \int_{B_r} f(\mathtt{T}_{-\ba} \omega) d\ba = \fint_\Omega f \qquad := \frac{1}{\vert \Omega\vert } \int_\Omega f(\omega) d\omega,
    \end{equation}
    where $B_r$ is the ball of radius $r$ centered at zero.
\end{proposition}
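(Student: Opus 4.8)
The plan is to run a Weyl-type equidistribution argument: $\Omega = \Gamma_1 \times \cdots \times \Gamma_p$ is a $2p$-dimensional torus, the map $\ba \mapsto \mathtt{T}_{-\ba}\omega = (\bgamma_1 - \ba, \dots, \bgamma_p - \ba)$ is a linear flow on it, and its equidistribution will be governed exactly by Definition~\ref{definition:incommensurability} through Fourier analysis. First I would reduce to testing against characters. The Pontryagin dual of $\Gamma_j = \bbR^2/\cR_j$ is $\cR_j^*$, so the dual of $\Omega$ is $\cR_1^* \times \cdots \times \cR_p^*$, and the characters
\[
\chi_{\bk_1, \dots, \bk_p}(\bgamma_1, \dots, \bgamma_p) = \exp\Bigl(2\pi i \sum_{j=1}^p \bk_j \cdot \bgamma_j\Bigr), \qquad (\bk_1, \dots, \bk_p) \in \cR_1^* \times \cdots \times \cR_p^*,
\]
form a conjugation-closed, point-separating subalgebra of $C(\Omega)$ containing the constants; by Stone--Weierstrass their finite linear combinations (trigonometric polynomials) are dense in $C(\Omega)$ for $\norm{\cdot}_\infty$. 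Since, for each $r$ and $\omega$, both $f \mapsto \tfrac{1}{|B_r|}\int_{B_r} f(\mathtt{T}_{-\ba}\omega)\,d\ba$ and $f \mapsto \fint_\Omega f$ are linear and bounded by $\norm{f}_\infty$, a standard $3\varepsilon$ density argument reduces \eqref{def:continuousBirkhoff} to the case $f = \chi_{\bk_1, \dots, \bk_p}$.

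For a single character one computes $\chi_{\bk_1,\dots,\bk_p}(\mathtt{T}_{-\ba}\omega) = \chi_{\bk_1,\dots,\bk_p}(\omega)\, \exp\bigl(-2\pi i (\sum_{j} \bk_j)\cdot \ba\bigr)$. If $\sum_j \bk_j = \mathbf{0}$, incommensurability forces $\bk_j = \mathbf{0}$ for all $j$, so $\chi \equiv 1$ and both sides of \eqref{def:continuousBirkhoff} equal $1$. Otherwise, setting $\bk := \sum_j \bk_j \neq \mathbf{0}$, the character $\chi$ is nonconstant, hence $\fint_\Omega \chi = 0$, while
\[
\frac{1}{|B_r|}\int_{B_r} \chi(\mathtt{T}_{-\ba}\omega)\,d\ba = \chi(\omega)\, \frac{1}{|B_r|}\int_{B_r} e^{-2\pi i \bk \cdot \ba}\,d\ba .
\]
The remaining scalar tends to $0$ as $r \to \infty$: writing $e^{-2\pi i \bk\cdot\ba} = \nabla_\ba \cdot \bigl(\tfrac{i\bk}{2\pi|\bk|^2} e^{-2\pi i \bk\cdot\ba}\bigr)$ and applying the divergence theorem bounds $\bigl|\int_{B_r} e^{-2\pi i \bk\cdot\ba}\,d\ba\bigr| \le \tfrac{1}{2\pi|\bk|}\,|\partial B_r| = \tfrac{r}{|\bk|}$, which is $o(|B_r|)$ since $|B_r| = \pi r^2$. (Equivalently this scalar is the normalized Fourier transform of a disk, $\propto J_1(2\pi|\bk|r)/(|\bk|r)$.) Hence the limit equals $0 = \fint_\Omega\chi$, proving the character case.

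Putting this together with linearity establishes \eqref{def:continuousBirkhoff} for every trigonometric polynomial; then, given $f \in C(\Omega)$, $\omega \in \Omega$, and $\varepsilon > 0$, choosing a trigonometric polynomial $g$ with $\norm{f - g}_\infty < \varepsilon$ gives, for every $r$,
\[
\Bigl| \frac{1}{|B_r|}\int_{B_r} f(\mathtt{T}_{-\ba}\omega)\,d\ba - \fint_\Omega f \Bigr| \le 2\varepsilon + \Bigl| \frac{1}{|B_r|}\int_{B_r} g(\mathtt{T}_{-\ba}\omega)\,d\ba - \fint_\Omega g \Bigr|,
\]
and letting $r \to \infty$ then $\varepsilon \to 0$ finishes the proof. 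The only genuinely analytic ingredient is the decay of the oscillatory average $\tfrac{1}{|B_r|}\int_{B_r} e^{-2\pi i \bk\cdot\ba}\,d\ba$; the conceptual content is simply that Definition~\ref{definition:incommensurability} says precisely that no nonconstant character of $\Omega$ is invariant under the flow $\mathtt{T}$, so I anticipate no real obstacle — indeed the same argument goes through verbatim with any F\o lner-type exhaustion in place of the balls $B_r$.
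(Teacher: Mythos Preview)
Your argument is correct: reducing to characters via Stone--Weierstrass, using the incommensurability condition~\eqref{def:incommensurability} to identify the invariant characters, and controlling the oscillatory ball average by the divergence theorem is the standard Weyl equidistribution route, and all the details check out (in particular the bound $\bigl|\int_{B_r} e^{-2\pi i \bk\cdot\ba}\,d\ba\bigr| \le r/|\bk|$ is clean and sufficient).

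As for comparison: the paper does not actually supply a proof of Proposition~\ref{prop:continuousHullErgodicity}; it states the result and refers to the earlier work~\cite{Kubo2017} for the underlying ergodicity theory (the discrete analogue, Proposition~\ref{prop:discreteHullErgodicity}, is explicitly attributed to Prop.~3.5 there). Your self-contained Fourier argument is thus a genuine addition rather than a duplication, and it has the virtue of making transparent that condition~\eqref{def:incommensurability} is exactly the statement that the $\bbR^2$-flow $\mathtt{T}$ on $\Omega$ has no nontrivial invariant character---hence unique ergodicity follows immediately. Your closing remark that balls can be replaced by any F{\o}lner sequence is also correct and worth noting.
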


This ensemble of possible configurations can therefore be identified with the so-called \textit{hull}~\cite{bellissard1994noncommutative,bellissard2003coherent}, introduced previously in the study of the electronic structure of aperiodic materials.
The idea behind the hull is to picture an arbitrary position of observation in the layer plane as the origin of coordinates, and the local environment as the neighboring atoms%
\footnote{A configuration is rigorously defined as the position of all atoms in the system relative to the origin, given an arbitrary translation of the system corresponding to a change of viewpoint, typically encoded as a Radon measure in $\mathfrak{M}(\bbR^3)$.
Formally, the hull is a dynamical system $(\Omega, \bbR^2, \mathtt{T})$ where $\Omega$ is the closure of the orbit of the atomic distribution generated by the atoms of all $p$ layers under the action of $\bbR^2$ through $\mathtt{T}$. Note that the group of translations $\mathtt{T}_\ba$ with $\ba \in \bbR^2$ acts on the space of compactly supported continuous functions $\mathcal{C}_c(\bbR^3)$ naturally through $\mathtt{T}_\ba f(\bx) = f(\bx - \ba)$, and thus on the space of Radon measures $\mathfrak{M}(\bbR^3)$ through $\mathtt{T}_\ba \mu (f) = \mu(\mathtt{T}_{-\ba} f)$.}.
This is particularly relevant to model the elastic relaxation of the stack, since as we shall see in Section~\ref{sec:atomisticmodel}, the local environment of an atom typically determines its energetics due to the locality of bonding.

\paragraph{The transversal}

In the context of atomistic models of the elastic behavior of the structures, the quantity of interest is the displacement of atoms.
This leads us to introduce the so-called \textit{canonical transversal of the hull}, denoted by $X$, which is formed as the set
\begin{equation}\label{def:transversal}
    X := \bigcup_{j = 1}^p \left ( \{ j \} \times X_j \right ) \text{ where } X_j := \left \{ \omega = (\bgamma_1, \dots, \bgamma_p) \in \Omega \quad s.t.\ \bgamma_j = \bzero \right \}.
\end{equation}
Elements of $X$ label special configurations where a lattice site in a particular layer is located at the origin, with $j$ being the layer index and $\omega \in X_j$ indexing the reference configuration.
The set of allowed translations within $X$ is a discrete subset of the full translation group $\bbR^2$ called a groupoid~\cite{bellissard1994noncommutative}.
A statement of ergodicity mirroring Proposition~\ref{prop:continuousHullErgodicity} can be formulated on the transversal (see Prop. 3.5 in~\cite{Kubo2017}):
\begin{proposition} \label{prop:discreteHullErgodicity}
Let $\cR_1, \cdots , \cR_p$ be an incommensurate collection of lattices. Then we have the Birkhoff property: for any $f \in C(X)$ and $\omega \in \Omega$,
    \begin{equation}\label{def:discreteBirkhoff}
        \lim_{r \to \infty}  \frac{1}{\vert B_r \vert}  \sum_{j = 1}^p  \sum_{\ba \in \cL_{j}\cap B_r} f(j, \mathtt{T}_{-\ba} \omega)    = \sum_{j = 1}^p \fint_{X_j} \frac{f(j, \cdot)}{\vert \Gamma_j \vert}, \quad \text{where } \cL_{j} := \bgamma_j + \cR_j.
    \end{equation}
\end{proposition}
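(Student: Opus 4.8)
The plan is to reduce \eqref{def:discreteBirkhoff} to a single‑layer identity and then deduce it from the continuous ergodic average of Proposition~\ref{prop:continuousHullErgodicity} by convolving the Dirac comb carried by $\cL_j$ against a smooth bump localized in one fundamental domain. By linearity of both sides in $f$, it suffices to fix a layer index $j$, put $g := f(j, \cdot) \in C(X_j)$, and establish
\[
    \lim_{r\to\infty}\frac{1}{|B_r|}\sum_{\ba\in\cL_j\cap B_r} g(\mathtt{T}_{-\ba}\omega) \;=\; \fint_{X_j}\frac{g}{|\Gamma_j|},
    \qquad \cL_j = \bgamma_j + \cR_j,
\]
for arbitrary $\omega = (\bgamma_1,\dots,\bgamma_p) \in \Omega$; summing over $j=1,\dots,p$ then gives \eqref{def:discreteBirkhoff}. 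Note first that for $\ba\in\cL_j$ the $j$‑th coordinate of $\mathtt{T}_{-\ba}\omega$ is $\bzero$ in $\Gamma_j$, so $\mathtt{T}_{-\ba}\omega\in X_j$ and the summand makes sense. I would then fix a nonnegative $\chi\in C_c(\bbR^2)$ whose support is a compact subset of the interior of the cell $\hGamma_j = \mE_j[-1/2,1/2)^2$ and with $m := \int_{\bbR^2}\chi > 0$, and introduce $\Phi_\chi\colon\Omega\to\bbR$ by
\[
    \Phi_\chi(\omega) := \chi(\bgamma_j)\,g\big(\mathtt{T}_{-\bgamma_j}\omega\big),
\]
where throughout $\bgamma_j$ denotes the canonical representative of the $j$‑th coordinate in $\hGamma_j$ (so $\mathtt{T}_{-\bgamma_j}\omega$ is unambiguous). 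Since $\chi$ vanishes near $\partial\hGamma_j$, the otherwise discontinuous choice of representative is invisible to $\Phi_\chi$; as $\mathtt{T}_{-\bgamma_j}\omega \in X_j$ and $g$ is continuous, $\Phi_\chi \in C(\Omega)$.

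Applying Proposition~\ref{prop:continuousHullErgodicity} to $\Phi_\chi$ at this $\omega$ gives $\tfrac{1}{|B_r|}\int_{B_r}\Phi_\chi(\mathtt{T}_{-\ba}\omega)\,d\ba \to \fint_\Omega\Phi_\chi$, and the rest is identifying both sides. For the limit, I would use the measure‑preserving bijection $\hGamma_j\times X_j\to\Omega$, $(\bv,\xi)\mapsto \mathtt{T}_{\bv}\xi$, with inverse $\omega\mapsto(\bgamma_j,\mathtt{T}_{-\bgamma_j}\omega)$ — valid because the measure $d\omega$ on $\Gamma_1\times\cdots\times\Gamma_p$ is translation invariant — together with $|X_j| = \prod_{k\ne j}|\Gamma_k| = |\Omega|/|\Gamma_j|$, to obtain
\[
    \fint_\Omega\Phi_\chi \;=\; \frac{1}{|\Omega|}\Big(\int_{\hGamma_j}\chi\Big)\Big(\int_{X_j} g\Big) \;=\; \frac{m}{|\Omega|}\int_{X_j} g \;=\; \frac{m}{|\Gamma_j|}\fint_{X_j} g .
\]

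For the prelimit integral, the point is that $\ba\mapsto\Phi_\chi(\mathtt{T}_{-\ba}\omega)$ is supported on $\bigcup_{\ba_0\in\cL_j}(\ba_0-\mathrm{supp}\,\chi)$, a disjoint union (the pairwise differences of these translates lie in $\mathrm{supp}\,\chi-\mathrm{supp}\,\chi\subset\mE_j(-1,1)^2$, which meets $\cR_j$ only at $\bzero$). On the box around $\ba_0$, writing $\ba = \ba_0 - \mathbf{s}$ with $\mathbf{s}\in\mathrm{supp}\,\chi$, a short computation gives $\mathtt{T}_{-\mathbf{s}}\mathtt{T}_{-\ba}\omega = \mathtt{T}_{-\ba_0}\omega$ (since $\bgamma_j-\ba\equiv\mathbf{s}$ in $\Gamma_j$ and $\mathbf{s}\in\hGamma_j$), hence $\Phi_\chi(\mathtt{T}_{-\ba}\omega) = \chi(\mathbf{s})\,g(\mathtt{T}_{-\ba_0}\omega)$, whose integral over the box is $m\,g(\mathtt{T}_{-\ba_0}\omega)$. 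Therefore
\[
    \int_{B_r}\Phi_\chi(\mathtt{T}_{-\ba}\omega)\,d\ba \;=\; m\!\!\sum_{\ba_0\in\cL_j\cap B_r}\!\! g(\mathtt{T}_{-\ba_0}\omega) \;+\; E_r,
\]
where $E_r$ accounts for the boxes straddling $\partial B_r$; since only $O(r)$ points of $\cL_j$ lie within a bounded distance of $\partial B_r$, one has $|E_r| \lesssim r\,\|g\|_\infty = o(|B_r|)$. Combining the three displays, dividing by $|B_r|$, cancelling $m$, and letting $r\to\infty$ yields the single‑layer identity, and summing over $j$ completes the proof.

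The argument is essentially the standard passage from a continuous Birkhoff average to an average over a Poincaré transversal, so the only real care is bookkeeping. The main obstacle — really the only subtle point — is the continuity of $\Phi_\chi$: one must keep $\mathrm{supp}\,\chi$ strictly inside the \emph{open} cell so that $\Phi_\chi$ extends continuously across $\partial\hGamma_j$, where the choice of representative jumps. Secondarily, one should verify that $E_r$ and the discrepancy between $\cL_j\cap B_r$ and the index set of boxes lying entirely in $B_r$ are both $O(r)$, hence negligible against $|B_r|=\pi r^2$; all substantive content is supplied by Proposition~\ref{prop:continuousHullErgodicity}.
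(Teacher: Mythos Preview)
Your argument is correct. The passage from the continuous Birkhoff average~\eqref{def:continuousBirkhoff} to the transversal average via the bump $\chi$ is the standard Poincar\'e-section trick, and you have handled the two genuine issues---continuity of $\Phi_\chi$ across $\partial\hGamma_j$ and the $O(r)$ boundary discrepancy---cleanly. The computation of $\fint_\Omega \Phi_\chi$ via the measure-preserving factorization $\Omega \simeq \hGamma_j \times X_j$ is exactly right.

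As for comparison: the paper does \emph{not} prove Proposition~\ref{prop:discreteHullErgodicity}. It is stated with a reference to Prop.~3.5 of~\cite{Kubo2017}, and no argument is given here. Your proof is therefore strictly more than the paper provides, and is self-contained modulo Proposition~\ref{prop:continuousHullErgodicity}. The route you take---lifting $g$ to a continuous function on $\Omega$ by smearing against a bump in the $\Gamma_j$ factor, then invoking the continuous ergodic theorem---is the natural one and is essentially how such statements are proved in the hull/groupoid literature (e.g.\ Bellissard's framework~\cite{bellissard2003coherent}), so it is very likely close in spirit to the argument in~\cite{Kubo2017}, though I cannot confirm the details from the present paper alone.

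One cosmetic remark: in the final display you could make the disjointness argument slightly tighter by noting directly that two translates $\ba_0 - \mathrm{supp}\,\chi$ and $\ba_0' - \mathrm{supp}\,\chi$ with $\ba_0 \neq \ba_0'$ in $\cL_j$ are disjoint because $\ba_0 - \ba_0' \in \cR_j \setminus \{\bzero\}$ while any point in their intersection would force $\ba_0 - \ba_0' \in \mathrm{supp}\,\chi - \mathrm{supp}\,\chi \subset \mathrm{int}(\hGamma_j) - \mathrm{int}(\hGamma_j)$, which meets $\cR_j$ only at $\bzero$. You state this, but the parenthetical could be expanded for a reader unfamiliar with the construction.
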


\section{Elastic models of relaxation}\label{sec:elastic}
Let us now introduce elastic models of moir\'e pattern relaxation based on the previously introduced configuration-space parameterization of the deformation. We first discuss a general atomistic energy model. Subsequently, we discuss a continuum model and how it is an effective approximation for the atomistic model.

\subsection{Atomistic kinematics}\label{sec:atomistickinematics}

Our main assumption in this paper is that there exists a one-to-one smooth correspondance between the displacement of any atom belonging to one of the layers and its initial local configuration through the elastic relaxation process.
We thus introduce a set of modulation fields with respect to the reference (unrelaxed) positions, which takes the form of a map $\bu$ : $X \to \bbR^2$. We will use the notation
\begin{equation}\label{def:atomisticmodulationfields}
    \bu_j(\omega) := \bu(j, \omega), \qquad 1 \leq j \leq p, \qquad \bu \in C(X, \bbR^2),
\end{equation}
for the evaluation of functions defined on $X$, and we understand $\bu_j(\omega)$ as the deformed position of an atom of layer $j$ initially placed at the origin in the configuration $\omega \in X_j$. The fields $\bu_j$ describe the structural modulation of each layer due to the interaction with the others.

\begin{remark}
    The key assumption behind the hull description is that there exists a one-to-one correspondance between the configuration of an atom and its modulated position in the relaxed configuration. This is a reasonable assumption due to the inherent stability of the individual lattices of the layers and the weakness of the Van der Waals interactions between the layers. In 1D, this is known to be strictly true for the Frenkel-Kontorova model for atomic chains with strictly convex nearest-neighbor atomic potentials sitting in an external incommensurate potential~\cite{AubryLeDaeron1983}.
\end{remark}
We first formulate an atomistic model with empirical multi-body interactions.
For simplicity of exposition, let us start from the simple configuration where none of the layers are initially shifted from the origin, i.e., $\omega = 0$, without loss of generality thanks to ergodicity.
By picturing temporarily the atom of layer $j$ initially placed at $\bxi \in \cR_j$ as a new origin of coordinates associated with the translated configuration $\mathtt{T}_{-\bxi} 0 \in X_j$, the displacement of this atom is prescribed to be $\bu_j(\mathtt{T}_{-\bxi} 0)$ by~\eqref{def:atomisticmodulationfields}.
We thus admit displacements and deformations of the form:
\begin{equation}\label{eq:atomdeformation}
    \bU_j(\bxi) := \bu_j(\mathtt{T}_{-\bxi} 0), \quad \bY_j(\bxi) := \bxi + \bu_j(\mathtt{T}_{-\bxi} 0), \qquad 1 \leq j \leq p, \quad \bxi \in \cR_j,
\end{equation}
where $\bu: X \to \bbR^2$ is an unknown modulation field.

\subsection{The atomistic energy}\label{sec:atomisticmodel}
We now propose a formal definition of the atomistic potential energy, which we make rigorous through the remainder of the section.
Let $\cB_j = \cR_j \setminus \{ \bzero \}$ denote intra-layer lattice directions or bonds. For discrete maps $\bv: X \to \bbR^2$ we define the finite difference stencils:
\begin{align}
    D_j\bv(j, \omega) & := \left ( \bv_j(\mathtt{T}_{-\brho} \omega ) - \bv_j(\omega)\right )_{\rho \in \cB_j}, && \text{for } (j,\omega) \in X,\\
    \Delta_{ij} \bv (j, \omega) & := \left ( \bv_i(\mathtt{T}_{-\brho} \omega ) - \bv_j(\omega)\right )_{\brho \in \bgamma_i + \cR_i}, && \text{for } (j,\omega) \in X \text{ and } i \neq j.
\end{align}
Next, we assume that there exists a many-body site energy $V_j$ for atoms in layer $j$ such that, formally, the atomistic potential energy can be presented as a sum over normalized site-energies
\begin{equation}\label{def:AtomPotE}
    \mathcal{E}^\mathrm{a}(\bu) := \sum_{j=1}^p \sum_{\xi \in \cR_j} \vert \Gamma_j \vert \Phi_{j,\mathtt{T}_{-\bxi} 0} (\bu) \quad \text{where} \ \Phi_{j,\omega} (\bu) := V_j \left [ D_j \bu , \{ \Delta_{ij} \bu \}_{i \neq j}  \right ](j, \omega).
\end{equation}
This quantity is not well-defined for general modulation $\bu$, and in fact the sum~\eqref{def:AtomPotE} does not exist in general for nonzero modulations $\bu$ which are continuous on the configuration space $X$. However the corresponding atomistic potential energy by unit area is well-defined in the thermodynamic limit:
\begin{equation}\label{def:UnitAtomPotE}
    E^\mathrm{a}(\bu) := \lim_{r \to \infty}  \sum_{j=1}^p \frac{\vert \Gamma_j \vert }{\vert B_r \vert} \sum_{\bxi \in \cR_j \cap B_r} \Phi_{j,\mathtt{T}_{-\bxi} 0} (\bu).
\end{equation}
Indeed, the Birkhoff property~\eqref{def:discreteBirkhoff} allows us to express directly the large volume spatial average as a simple average over the configuration space:
\begin{proposition}
    Given continuous many-body site energies $V_j$, $j = 1 \dots p$, the thermodynamic limit~\eqref{def:UnitAtomPotE} exists for any continuous modulation field $\bu$ as in~\eqref{def:atomisticmodulationfields} and equals:
    \begin{equation}\label{def:UnitAtomPotE2}
    E^\mathrm{a}(\bu) = \sum_{j=1}^p   \fint_{X_j} \Phi_{j, \omega} (\bu) d\omega.
\end{equation}
\end{proposition}

The definition~\eqref{def:UnitAtomPotE2} of an elastic energy for relaxed stackings is completely general and establishes the basis of an atomistic model for arbitrary incommensurate stackings which will be studied in forthcoming papers.
As a first step, we focus in this paper on showing how a simple continuum model can be formulated under certain conditions.

\begin{remark}
    In the case of multi-lattices such as graphene, a rigorous treatment should include additionally \textit{shift} fields allowing for independent motion of the sublattices~\cite{van2013symmetries}. For the sake of simplicity, we neglect this complication in this paper. This simplification can be viewed as an approximation where all atoms in the unit cell share the same displacement.
\end{remark}

\subsection{Separation of intra- and inter-layer contributions}
The physical nature of the bonds between atoms of one layer is typically different, and of a much stronger nature, than the van der Waals interactions between adjacent layers.
In particular, the bonding of atoms within an individual layer is mostly independent of the arrangement of the other layers.
Furthermore, the most significant contribution of interactions between different layers is a misfit energy due to the local disregistry between adjacent layers.

To reflect these considerations, we propose the site energy for atoms of layer $j$:
\begin{equation}\label{def:IntraInterDecomposition}
    \Phi_{j, \omega} (\bu) = \Phi_{j,\omega}^\mathrm{intra}(\bu) + \Phi_{j,\omega}^\mathrm{inter}(\bu).
\end{equation}
We make the form of each term $\Phi_{j,\omega}^\mathrm{intra}$ and $\Phi_{j,\omega}^\mathrm{inter}(\bu)$ explicit below.
The equilibrium multilayer structure is obtained by minimizing the total energy~\eqref{def:UnitAtomPotE2}
with respect to the modulation field $\bu$.
The competition between the intra-layer elastic (distortion) energy~\eqref{def:IntraSiteEnergy} and inter-layer misfit (disregistry) energy~\eqref{def:InterSiteEnergy} thus drives the deviation from the perfect (unrelaxed) moiré.

\subsection{Intra-layer elastic energy}
The intra-layer contribution $\Phi^\mathrm{intra}$ writes:
\begin{equation}\label{def:IntraSiteEnergy}
    \Phi_{j,\omega}^\mathrm{intra}(\bu) := V_j^\mathrm{intra} \left ( D_j \bu_j (\omega) \right ).
\end{equation}
Individual layers are simply two dimensional crystals, so the study of the corresponding internal elastic energy deriving from the potential $V_j^\mathrm{intra}$~\eqref{def:IntraSiteEnergy} is classical, see e.g.~\cite{ortner2013justification}.
We define the intra-layer atomistic potential energy:
\begin{equation}\label{def:IntraAtomPotE}
    E^\mathrm{a}_j (\bu_j) := \fint_{X_j } V_j^\mathrm{intra} \left ( D_j \bu_j(\omega)\right ) d \omega.
\end{equation}

\paragraph{Cauchy-Born approximate energy density}
The Cauchy-Born elastic energy density function $W$: $\bbR^{2\times2} \to \bbR \cup {\pm \infty}$ is as usual defined by
\begin{equation}\label{def:CB}
    W_j(\mathrm{F}) := V^\mathrm{intra}_j(F \cdot \cB_j ).
\end{equation}
When elastic deformations are smooth, the Cauchy-Born model is a popular approximation to the atomistic model $E^\mathrm{a}_j$. Indeed, for a smooth real-space displacement $\bU_j: \bbR^2 \to \bbR^2$ of atoms in layer $j$, $V^\mathrm{intra}_j(D_j \bU_j (\bxi))) \approx V^\mathrm{intra}_j \left (\nabla \bU_j(\bxi) \cdot \cB_j \right ) = W_j(\nabla \bU_j(\bxi) )$.
Note that the spatial gradient of the atomistic deformation only makes sense after an appropriate interpolant is constructed~\cite{ortner2013justification}.

\subsection{Inter-layer landscape: the local disregistry}

The energy landscape for inter-layer interactions is tightly linked to the notion of disregistry, i.e., the relative shift between the layers, which we now explore now more in depth.

In the case of lattices that are almost aligned the disregistry between two layers can be observed to vary slowly, on a lengthscale much larger than the lattice constants (the moiré effect) as seen on Figure~\ref{fig:moire_relax}.
A careful description of this slow variation of the disregistry in real space is key to formulating an effective model of slightly misaligned multilayer systems, and we propose the following construction.

By definition, the configuration defined above around a lattice site of layer $1$ (for example)placed at the origin is the collection of its disregistries relative to the other layers, $(\bgamma_j)_{2 \leq j \leq p}$, while $\bgamma_1 = \bzero$.
However, these vary quickly under continuous translations, see~\eqref{def:Rdp1translations}, and this simple description does not make obvious the moiré effect. Indeed, the apparent slow variation described above when moving through the layer is caused by spatial aliasing induced by discrete sampling (lattice translations).
Let us choose a point within layer $j$ as the origin in some global configuration $(\bgamma_1, \ldots, \bgamma_p) \in \Omega$, let us write $\bgamma_j = \mE_j \begin{bmatrix} \,  s & t \, \end{bmatrix}^\transpose$ with $0 \leq s,t < 1$. The lattice sites near the origin in layer $j$ are located at the four corners of a shifted unit cell of the lattice $\cR_j$:
\begin{equation}\label{def:Corners}
    \br_j^{\scriptscriptstyle 00} = \mE_j \begin{bmatrix} s\\t \end{bmatrix}, \quad
    \br_j^{\scriptscriptstyle 10} = \mE_j \begin{bmatrix} s-1\\t \end{bmatrix}, \quad \br_j^{\scriptscriptstyle 01} = \mE_j \begin{bmatrix} s\\t-1 \end{bmatrix}, \quad \br_j^{\scriptscriptstyle 11} = \mE_j \begin{bmatrix} s-1\\t-1 \end{bmatrix}.
\end{equation}
\begin{figure}[t]
    \centering
    \includegraphics[width=.6\textwidth]{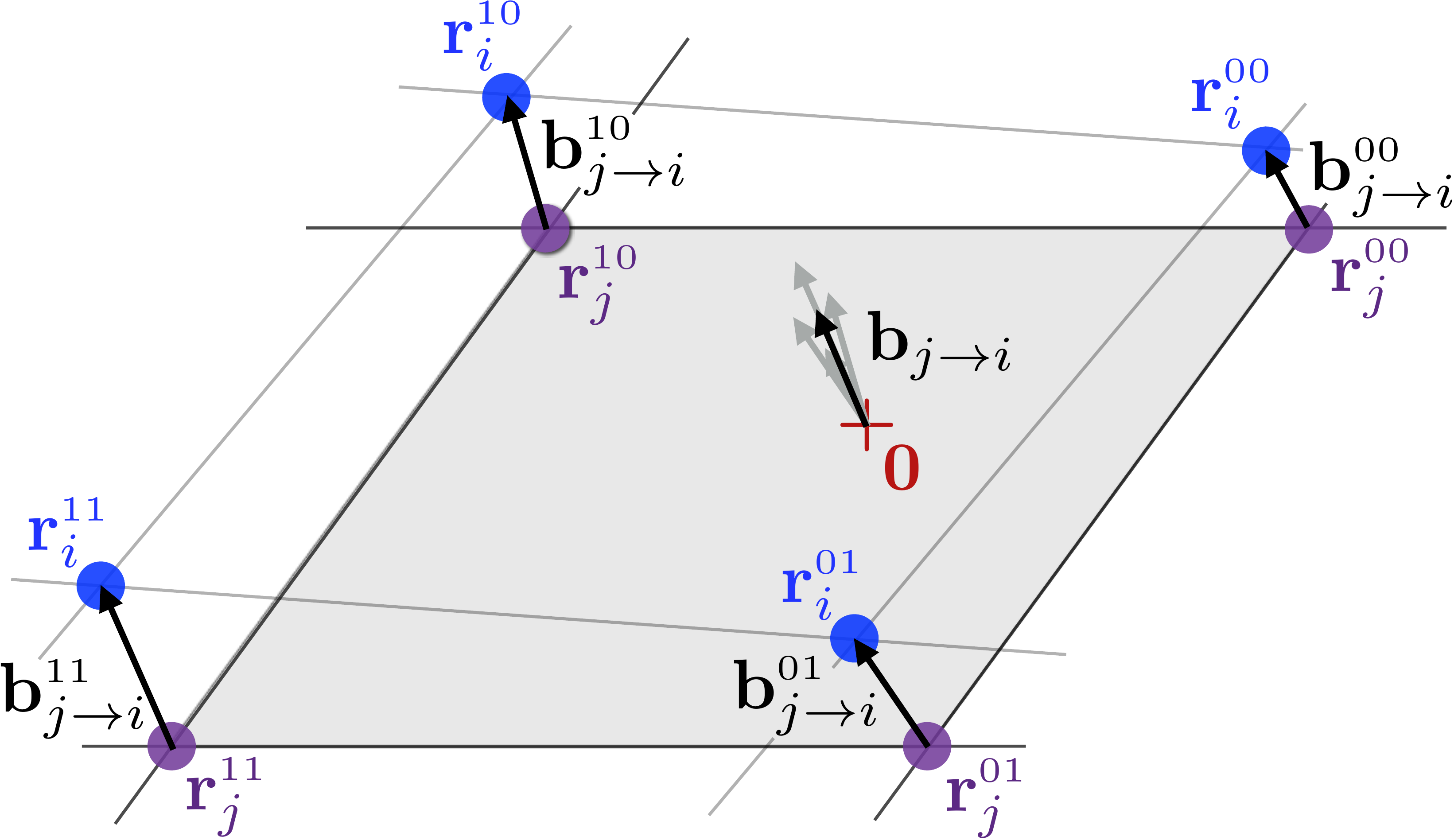}
    \caption{Illustration of the shifted unit cell used for bilinear interpolation of disregistries.}\label{fig:Corners}
\end{figure}
Similarly let us write $\bgamma_i = \mE_i \begin{bmatrix} \,  s' & t' \, \end{bmatrix}^\transpose$ and identify within layer $i$ the four lattice sites $\br_i^{\scriptscriptstyle 00}, \br_i^{\scriptscriptstyle 10}, \br_i^{\scriptscriptstyle 01}, \br_i^{\scriptscriptstyle 11}$. We measure the four disregistries between respective site pairs:
\[
    \begin{aligned}
        &\mathbf{b}_{\scriptscriptstyle j \to i}^{\scriptscriptstyle 00} := \br_i^{\scriptscriptstyle 00} - \br_j^{\scriptscriptstyle 00} = \mE_i \begin{bmatrix} s'\\t' \end{bmatrix} - \mE_j \begin{bmatrix} s\\t \end{bmatrix},
        && \mathbf{b}_{\scriptscriptstyle j \to i}^{\scriptscriptstyle 10} := \br_i^{\scriptscriptstyle 10} - \br_j^{\scriptscriptstyle 10} = \mathbf{b}_{\scriptscriptstyle j \to i}^{\scriptscriptstyle 00} + (\mE_j - \mE_i) \begin{bmatrix} 1\\0 \end{bmatrix}, \\
        &\mathbf{b}_{\scriptscriptstyle j \to i}^{\scriptscriptstyle 01} := \br_i^{\scriptscriptstyle 01} - \br_j^{\scriptscriptstyle 01} = \mathbf{b}_{\scriptscriptstyle j \to i}^{\scriptscriptstyle 00} + (\mE_j - \mE_i) \begin{bmatrix} 0\\1 \end{bmatrix},
        && \mathbf{b}_{\scriptscriptstyle j \to i}^{\scriptscriptstyle 11} := \br_i^{\scriptscriptstyle 11} - \br_j^{\scriptscriptstyle 11} = \mathbf{b}_{\scriptscriptstyle j \to i}^{\scriptscriptstyle 00} + (\mE_j - \mE_i) \begin{bmatrix} 1\\1 \end{bmatrix}.
    \end{aligned}
\]
We will assume henceforth that the lattices $\cR_1, \ldots, \cR_p$ are very close:
\begin{equation}\label{def:MoireCondition}
    \vert  \mE_j - \mE_i \vert \ll 1, \text{ for all }1 \leq i,j \leq p.
\end{equation}
Then these four disregistries differ only slightly from each other, as seen on Figure~\ref{fig:Corners}, motivating the reconstruction at the origin by bilinear interpolation in the variables $s$ and $t$ of the aliased disregistry from layer $j$ towards layer $i$:
\[
\begin{aligned}
     (1-s)(1-t) & \mathbf{b}_{j \to i}^{\scriptscriptstyle 00} + s (1-t) \mathbf{b}_{j \to i}^{\scriptscriptstyle 10}  + (1-s) t \mathbf{b}_{j \to i}^{\scriptscriptstyle 01} + s t  \mathbf{b}_{j \to i}^{\scriptscriptstyle 11} \\
    & = \mE_i \begin{bmatrix} s'\\t' \end{bmatrix} - \mE_j \begin{bmatrix} s\\t \end{bmatrix} + (\mE_j - \mE_i)  \left ( s ( 1-t) \begin{bmatrix} 1\\0 \end{bmatrix} + (1-s) t \begin{bmatrix} 0 \\ 1 \end{bmatrix} + st \begin{bmatrix} 1 \\ 1 \end{bmatrix} \right ) \\
    & = \mE_i \begin{bmatrix} s' - s \\t'  - t\end{bmatrix} \qquad \equiv \bgamma_i - \mE_i \mE_j^{-1} \bgamma_j \quad \textrm{mod}\ \mathcal{R}_i.
\end{aligned}
\]

This last quantity is well-defined in $\Gamma_i$, periodic in both variables $\bgamma_i$ and $\bgamma_j$.
We now define it as the local disregistry for any couple of layers in any configuration.
\begin{definition}
    For any configuration $(\bgamma_1, \ldots, \bgamma_p) \in \Omega$, we define the reference local disregistry $\mathbf{b}_{j \to i}$ from layer $j$ towards layer $i$ as  the quantity
    \begin{equation}\label{def:RefDisregistry}
        \mathbf{b}_{j \to i}(\bgamma_1, \ldots, \bgamma_p) = \bgamma_i - \mE_i \mE_j^{-1} \bgamma_j \in \Gamma_i.
    \end{equation}
\end{definition}
Note that when a lattice site of layer $j$ is chosen as the origin ($\bgamma_j = \bzero$), we recover the original definition of the disregistry, $\mathbf{b}_{j \to i}(\omega) = \bgamma_i$.
Furthermore, this new disregistry varies slowly under the action of translations~\eqref{def:Rdp1translations} when condition~\eqref{def:MoireCondition} is satisfied:
\[
    \mathbf{b}_{j \to i}(\mathtt{T}_\mathbf{a} \omega) = \mathbf{b}_{j \to i}(\omega) + (\mI - \mE_i \mE_j^{-1}) \mathbf{a},
\]
 confirming that~\eqref{def:RefDisregistry} captures the mesoscale pattern formation due to the moiré.

\subsection{Inter-layer misfit energy}\label{sec:misfitenergy}
The leading order contribution to the local misfit energy between two neighboring, \textit{slightly misaligned} layers $j$ and $j+1$, where $1 \leq j < p$ corresponds to the stacking-fault energy per unit area~\cite{vitek1968intrinsic} of an isolated bilayer structure formed of layers $j$ and $j+1$ and slightly rotated or stretched into periodic alignment, which we interpret as a misfit energy functional $\Phi^\mathrm{misfit}_{j,j+1}(\omega)$.

In practice, such a misfit energy has been shown in the physics literature to be most accurately described as a generalized stacking-fault energy~\cite{gong2014stacking} which can be determined from accurate Density Functional Theory calculations~\cite{marom2010stacking,constantinescu2013stacking}. Note that out-of-plane displacements can be taken into account by extending the definition of the misfit energy to include a variable inter-layer spacing (3D GSFE)~\cite{Dai2016}.

\begin{remark}
    Due to the high cost of accurate first-principles calculations including van der Waals bonding corrections, this energy density is typically computed in the approximation where both layers share the same lattice, $\mathrm{E}_a = \mathrm{E}_b$, or form a small supercell.
    Such functionals are directly available in the literature for a number of layer combinations such as G/G, G/hBN or hBN/hBN involving graphene (G) or hexagonal boron nitride (hBN) monolayers~\cite{zhou2015van}.  More recently, such a functional has been given for MoS$_2$ in \cite{carrrelax}.
    Alternatively, direct summation techniques based on empirical potentials such as the Kolmogorov-Crespi G/G potentials~\cite{kolmogorov2000smoothest} have been developed, and may allow to take into account additional local stress dependence of the stacking fault energy~\cite{zhang2017energy}.
\end{remark}

Because the local misfit energy depends only on the local reference disregistry of adjacent layers of interest, we further introduce the site-energies $\Phi^\mathrm{misfit}_{j\pm}: \Gamma_{j \pm 1} \mapsto \bbR$ localized on each layer, defined by:
\begin{equation}\label{def:GSFE}
   \Phi^\mathrm{misfit}_{j+}\left (\mathbf{b}_{j \to (j+1)}(\omega) \right ) = \Phi^\mathrm{misfit}_{(j+1)-}\left (\mathbf{b}_{(j+1) \to j}(\omega) \right ) = \Phi^\mathrm{misfit}_{j,j+1}(\omega).
\end{equation}

Note that by definition~\eqref{def:RefDisregistry} of the local reference disregistry $\mathbf{b}_{j \to (j+1)}$, these site energies satisfy the symmetry relation:
\begin{equation}\label{eq:GSFESym}
    \Phi^\mathrm{misfit}_{(j+1)-}\left (\mE_j \mathbf{s} \right ) = \Phi^\mathrm{misfit}_{j+} \left (-\mE_{j+1} \mathbf{s} \right ), \qquad \mathbf{s} \ \in\  \mE_j^{-1} \Gamma_j = \mE_{j+1}^{-1} \Gamma_{j+1} \equiv [-1/2,1/2)^2.
\end{equation}

When the layers are relaxed, we propose to write the inter-layer contribution $\Phi^\mathrm{inter}$ as:
\begin{equation}\label{def:InterSiteEnergy}
    \Phi_{j,\omega}^\mathrm{inter}(\bu) :=
            \begin{cases}
                \frac 12 \Phi^\mathrm{misfit}_{1+}\left (\mathbf{B}_{1 \to 2}[\bu](\omega)\right), & \text{if } j = 1,\\
                \frac 12 \Phi^\mathrm{misfit}_{j+}\left (\mathbf{B}_{j \to j+1}[\bu](\omega)\right) + \frac 12 \Phi^\mathrm{misfit}_{j-}\left (\mathbf{B}_{j \to j-1}[\bu](\omega) \right ), & \text{if } 1 < j < p, \\
                \frac 12 \Phi^\mathrm{misfit}_{p-}\left (\mathbf{B}_{p \to p-1}[\bu](\omega)\right), & \text{if } j = p,
            \end{cases}
\end{equation}
where $\mathbf{B}_{j \to i}[\bu]$ is a \textit{modulated} local disregistry of layer $j$ with respect to layer $i$,
taking into account the modulation of both layers.
However, the modulation of two different layers may not be defined simultaneously at the origin since $(i, \omega) \notin X$ (i.e, $\bgamma_i \neq \bzero$) in general when $(j,\omega) \in X$ (i.e, $\bgamma_j = \bzero$) for $i \neq j$. An appropriate definition will be presented below in Section~\ref{sec:ModDisregistry}.

\section{Kinematics in configuration space and continuum approximation}\label{sec:kin}

To build an effective elastic model approximating the atomistic energy~\eqref{def:UnitAtomPotE2}, two essential ingredients are now needed.
An appropriate continuum interpolant for the atomistic deformation is needed to define a spatial gradient used in the Cauchy-Born approximation~\eqref{def:CB} of the intra-layer energy.
Next, a modulated local disregistry is needed to compute the inter-layer energy~\eqref{def:InterSiteEnergy}.

\subsection{Continuum interpolant}\label{sec:continuuminterpolant}

    As noted earlier, an appropriate interpolant is needed to define a spatial gradient for the atomistic deformation.
    Continuum modulation fields
    relevant to a continuum (e.g., plate) modeling of the elastic behavior take the form of $j$ functions
    \begin{equation}\label{def:modulationfields}
        \bu^\mathrm{c}_j \in C(\Omega, \bbR^2), \qquad 1 \leq j \leq p,
    \end{equation}
    which parameterize the displacement at the origin in each of the layer planes depending on the global configuration $\omega$.

A first idea to construct an interpolant of an atomistic modulation field~\eqref{def:atomisticmodulationfields} into a continuum interpolant of the form~\eqref{def:modulationfields} is to interpolate the displacements of the closest sites to the origin as earlier for the definition of
the reference local disregistry~\eqref{def:RefDisregistry}.
Let $\omega = (\bgamma_1, \ldots, \bgamma_p) \in \Omega$ be an arbitrary configuration.
Consider a particular layer $j$ and write $\bgamma_j = \mE_j \begin{bmatrix} s \\ t \end{bmatrix}$ with $0 \leq s, t < 1$.
The nearest lattice sites in layer $j$ are located at the four points $\br_{ab}$ with $a, b \in \{ 0,1 \}$ defined as in~\eqref{def:Corners}.
Then, the bilinear interpolant of the atomic modulation of layer $j$ at the origin is
\begin{equation}\label{def:InterpolantDisplacement}
    \begin{aligned}
        \widetilde{\bu}^\mathrm{c}_j(\omega) := (1-s)(1-t)\; \bu_j(\mathtt{T}_{-\br_{00}} \omega) & + s(1-t)\; \bu_j(\mathtt{T}_{-\br_{10}} \omega) \\
                    & + (1-s)t\; \bu_j(\mathtt{T}_{-\br_{01}} \omega) + st\; \bu_j(\mathtt{T}_{-\br_{11}} \omega).
    \end{aligned}
\end{equation}
The modulation field $\widetilde{\bu}^\mathrm{c}_j$ is a continuous interpolant which allows us to lift the atomistic modulation defined on the transversal $X$ to a set of continuum modulation fields
defined on $\Omega$, for each of the layer planes.
\begin{remark}
    Alternatives could be proposed for this spatial interpolant~\eqref{def:InterpolantDisplacement}, for example using linear interpolation on any triangular mesh based on the lattice, or also higher-order schemes.
\end{remark}
A second idea is to avoid the construction of the interpolant~\eqref{def:InterpolantDisplacement} by instead
evaluating the atomistic modulation
at the (interpolated) local disregistry of layer $j$ with respect to the other layers:
\begin{equation}\label{def:InterpolantDisplacement2}
    \widehat{\bu}^\mathrm{c}_j(\omega) := \bu_j \left (\Pi_j \omega \right ),
\end{equation}
where we have introduced the configuration interpolant $\Pi_j:$ $\Omega \to \Omega$ defined by
\begin{equation}\label{def:InterpolantConfiguration}
    \Pi_j \omega := \left (\mathbf{b}_{j \to 1} (\omega), \ldots, \mathbf{b}_{j \to p} (\omega) \right ) \quad \text{so that} \quad (j, \Pi_j \omega) \in X.
\end{equation}
Any spatial interpolant such as~\eqref{def:InterpolantDisplacement} only relies on proximity in \textit{physical} space. The accuracy of a method based on its use can be certified, in the sense that it can be framed in the context of well understood atomistic-to-continuum methods~\cite{ortner2013justification}. While much easier to use, the configuration interpolant~\eqref{def:InterpolantDisplacement2} relies on proximity in \textit{configuration} space (as seen on Figure 3), so its accuracy is not so readily analyzed, depending on smoothness properties in configuration space that are not always guaranteed for solutions of the atomistic model~\eqref{sec:atomisticmodel}, see~\cite{Cazeaux2016}.

Let us highlight how the difference between the two proposed approaches is quantified by smoothness in configuration space with the following analysis.
Since the local disregistry varies slowly, at the scale of the moiré pattern,
both approaches are very close as long as the atomic modulation $\bu$ is smooth enough as a function of the configuration parameter. In particular, if $\bu_j$ is (locally) linear in $\omega$ then the two modulation interpolants~\eqref{def:InterpolantDisplacement}
and~\eqref{def:InterpolantDisplacement2} are in fact equal. This observation is the basis of the following estimate, which we prove in Appendix~\ref{sec:appendix}:
\begin{proposition}\label{prop:Interpolants}
Let $\bu \in W^{2,q}(X)$, then:
\begin{equation}\label{est:InterpolantDifference}
     \left \Vert \widehat{\bu}^\mathrm{c}_j - \widetilde{\bu}^\mathrm{c}_j \right \Vert_{L^q(\Omega)} \leq \left ( \frac{\vert \Gamma_j \vert}{2q+1} \right )^{1/q} \theta ^2 \; \left \Vert \nabla^2_\omega \bu_j \right \Vert_{L^q(X_j)}
\end{equation}
where $\nabla^2_\omega \bu_j$ is understood as a $2$-linear form for which the norm is defined as $\Vert \ell \Vert := \sup_{\vert \mathbf{h}_1 \vert = \vert \mathbf{h}_2 \vert = 1 } \left \vert \ell[\mathbf{h}_1, \mathbf{h}_2] \right \vert $ and
\begin{equation}\label{def:theta}
    \theta = \sqrt{p} \sup_{1 \leq i,j \leq p} \left \Vert \mE_i - \mE_j \right \Vert, \qquad \text{where } \Vert \cdot \Vert \text{ denotes the $\bbR^2$-operator norm}.
\end{equation}
\end{proposition}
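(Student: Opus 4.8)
The plan is to recognize the left-hand side of~\eqref{est:InterpolantDifference} as, fibrewise over $\Gamma_j$, exactly the error of \emph{bilinear interpolation} of $\bu_j$ among four neighbouring points of the transversal $X_j$, the small parameter being $\theta$, and then to run a classical interpolation-error estimate. Fix $\omega=(\bgamma_1,\dots,\bgamma_p)\in\Omega$, write $\bgamma_j=\mE_j(s,t)^\transpose$ with $0\le s,t<1$, and set $\psi:=\Pi_j\omega\in X_j$. First I would check that the four translated configurations $\mathtt{T}_{-\br_{ab}}\omega$, $a,b\in\{0,1\}$, lie in $X_j$, and that — repeating verbatim the aliasing computation already carried out for the reference local disregistry in~\eqref{def:RefDisregistry} — their disregistry coordinates, lifted consistently in a neighbourhood of $\psi$, equal $\psi+\mathbf{D}\,\bn_{ab}$, where $\bn_{ab}:=(a-s,\,b-t)^\transpose$ and $\mathbf{D}$ is the linear map $\bbR^2\to\prod_{i\neq j}\bbR^2$, $\bn\mapsto\big((\mE_j-\mE_i)\,\bn\big)_{i\neq j}$. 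Because the bilinear weights $w_{ab}$ satisfy $\sum_{a,b}w_{ab}=1$ and $\sum_{a,b}w_{ab}\,\bn_{ab}=\bzero$, one gets $\sum_{a,b}w_{ab}(\psi+\mathbf{D}\bn_{ab})=\psi=\Pi_j\omega$, whence
\[
\widetilde{\bu}^\mathrm{c}_j(\omega)-\widehat{\bu}^\mathrm{c}_j(\omega)=\sum_{a,b}w_{ab}\,\bu_j(\psi+\mathbf{D}\bn_{ab})-\bu_j(\psi),
\]
which in particular vanishes whenever $\bu_j$ is affine near $\psi$ — this is the remark preceding the statement.

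Next I would make this quantitative for smooth $\bu_j$ and then pass to $W^{2,q}(X_j)$ by density, which is legitimate because $\bu\mapsto\widehat{\bu}^\mathrm{c}_j$ and $\bu\mapsto\widetilde{\bu}^\mathrm{c}_j$ are bounded linear maps $L^q(X_j)\to L^q(\Omega)$ while $\nabla^2_\omega$ is the second-derivative map. Taylor's formula with integral remainder around $\psi$, together with the cancellation of the zeroth- and first-order terms just observed, gives
\[
\widetilde{\bu}^\mathrm{c}_j(\omega)-\widehat{\bu}^\mathrm{c}_j(\omega)=\sum_{a,b}w_{ab}\int_0^1(1-\tau)\,\nabla^2_\omega\bu_j\big(\psi+\tau\mathbf{D}\bn_{ab}\big)\big[\mathbf{D}\bn_{ab},\,\mathbf{D}\bn_{ab}\big]\,d\tau .
\]
Using $\|\mathbf{D}\bn\|^2=\sum_{i\neq j}\|(\mE_j-\mE_i)\bn\|^2\le(p-1)\sup_{i,k}\|\mE_i-\mE_k\|^2\,\|\bn\|^2\le\theta^2\,\|\bn\|^2$, the bound $\big|\nabla^2_\omega\bu_j(\cdot)[\mathbf{D}\bn,\mathbf{D}\bn]\big|\le\|\nabla^2_\omega\bu_j(\cdot)\|\,\|\mathbf{D}\bn\|^2$, and the elementary identity $\sum_{a,b}w_{ab}\,\|\bn_{ab}\|^2=s(1-s)+t(1-t)$, I obtain the pointwise estimate
\[
\big|\widetilde{\bu}^\mathrm{c}_j(\omega)-\widehat{\bu}^\mathrm{c}_j(\omega)\big|\le\theta^2\sum_{a,b}w_{ab}\,\|\bn_{ab}\|^2\int_0^1(1-\tau)\,\big\|\nabla^2_\omega\bu_j(\psi+\tau\mathbf{D}\bn_{ab})\big\|\,d\tau .
\]

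Finally I would integrate over $\Omega$. Writing $\Omega\cong\Gamma_j\times X_j$ via $\omega\leftrightarrow\big(\bgamma_j,(\bgamma_i)_{i\neq j}\big)$, the map $(\bgamma_i)_{i\neq j}\mapsto\Pi_j\omega$ is, for each fixed $\bgamma_j$, a measure-preserving bijection of $X_j$, and $d\bgamma_j=|\Gamma_j|\,ds\,dt$. For fixed $(s,t)$ the weights $w_{ab}$ and vectors $\mathbf{D}\bn_{ab}$ are constants, so Minkowski's integral inequality applied to the finite sum and the $\tau$-integral, combined with the translation invariance of $\|\cdot\|_{L^q(X_j)}$ on the torus $X_j$, yields
\[
\Big\|\widetilde{\bu}^\mathrm{c}_j(\cdot)-\widehat{\bu}^\mathrm{c}_j(\cdot)\Big\|_{L^q(X_j,\,d\psi)}\le\theta^2\,\big(s(1-s)+t(1-t)\big)\Big(\int_0^1(1-\tau)\,d\tau\Big)\big\|\nabla^2_\omega\bu_j\big\|_{L^q(X_j)} .
\]
Raising to the $q$-th power and integrating over a fundamental domain of $\Gamma_j$ then gives
\[
\big\|\widehat{\bu}^\mathrm{c}_j-\widetilde{\bu}^\mathrm{c}_j\big\|_{L^q(\Omega)}^q\le|\Gamma_j|\,\theta^{2q}\,\big\|\nabla^2_\omega\bu_j\big\|_{L^q(X_j)}^q\int_{[0,1]^2}\Big(\tfrac12\big(s(1-s)+t(1-t)\big)\Big)^q\,ds\,dt ,
\]
and the elementary bound $s(1-s)+t(1-t)\le\tfrac12$ on the last integrand (so that the factor is at most $4^{-q}\le(2q+1)^{-1}$ for $q\ge1$) produces the constant $\big(|\Gamma_j|/(2q+1)\big)^{1/q}$ in~\eqref{est:InterpolantDifference}.

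The step I expect to be the real obstacle is the first one: establishing rigorously that the four corner translates lie in $X_j$ and that, after all the torus identifications, their coordinates differ from $\Pi_j\omega$ by precisely $\mathbf{D}\bn_{ab}$. This is where the moiré-aliasing cancellation — already exploited in~\eqref{def:RefDisregistry} to define $\mathbf{b}_{j\to i}$ — must be invoked again, and it is what reduces the whole problem to a textbook bilinear-interpolation estimate; the remaining ingredients (the integral Taylor remainder, its extension to $W^{2,q}$ by density, the measure-preserving change of variables, and the arithmetic of the fundamental-domain integral) are routine.
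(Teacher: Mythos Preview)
Your proof is correct and follows essentially the same route as the paper's: identify $\mathtt{T}_{-\br_{ab}}\omega=\Pi_j\omega+\delta\omega_{ab}$ with $\delta\omega_{ab}$ of size $\mathcal{O}(\theta)$, use the integral Taylor remainder of order one to exhibit the difference as a second-derivative term, then exploit translation invariance of the Lebesgue measure on the torus $X_j$ to integrate. The only notable technical difference is bookkeeping of the constant: the paper bounds each $\|\delta\omega_{ab}\|\le\sqrt{2}\,\theta$ and applies Jensen's inequality before integrating, so that the factor $(2q+1)^{-1}$ appears directly as $\int_0^1(1-h)^{2q}\,dh$, whereas you use Minkowski together with the sharper identity $\sum_{a,b}w_{ab}\|\bn_{ab}\|^2=s(1-s)+t(1-t)$ and recover the stated constant via $4^{-q}\le(2q+1)^{-1}$---your intermediate estimate is in fact slightly tighter.
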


\subsection{Evaluation of the gradient}
Let us now define a spatial gradient associated with
a given atomistic modulation field
$\bu: X \mapsto \bbR^2$. A spatial derivative does not make sense directly for atomistic quantities, but we may compute one for the configuration interpolant $\widehat{\bu}^\mathrm{c}_j$ introduced earlier~\eqref{def:InterpolantDisplacement2}. Note that this is easier than for a traditional piecewise-differentiable spatial interpolant such as~\eqref{def:InterpolantDisplacement}.

Given a global configuration $\omega \in \Omega$, in the $j$-th layer, smooth real-space fields for the displacement, $\mathbf{U}_j^{c,\omega}$, and deformation, $\mathbf{Y}_j^{c,\omega}$, maybe be evaluated at any point $\bx \in \bbR^2$ from the continuum modulation field $\widehat{\bu}^\mathrm{c}_j$ by the formulae

\[
    \mathbf{U}^{\mathrm{c},\omega}_j(\bx) := \widehat{\bu}^\mathrm{c}_j(\mathtt{T}_{-\bx}\omega), \qquad \bY^{\mathrm{c},\omega}_j(\bx) := \bx + \widehat{\bu}^\mathrm{c}_j(\mathtt{T}_{-\bx}\omega).
\]
This representation allows us in particular to rewrite the strain as a directional derivative in the hull framework.
Given the action of the translation on the hull~\eqref{def:Rdp1translations}, it is appropriate to introduce the spatial derivative on $\Omega$ as the operation
\[
    \nabla_\bx \quad := \quad - \sum_{j=1}^p \nabla_{\bgamma_j},
\]
leading us to compute:
\[
    \nabla_\bx \widehat{\bu}^\mathrm{c}_j(\omega) = - \sum_{i\neq j} [\nabla_{\bgamma_i} \bu_j]  (\Pi_j \omega)  - \sum_{i \neq j} [\nabla_{\bgamma_i} \bu_j]  (\Pi_j \omega) \cdot (-\mE_i \mE_j^{-1}).
\]

Evaluating at a point $(j, \omega) \in X$, so that $\Pi_j \omega = \omega$, we obtain
the spatial gradient of the atomistic modulation:
\begin{equation}\label{def:GradAtomDef}
    \widehat{\nabla}_\bx \bu_j (\omega) := \sum_{i \neq j} [\nabla_{\bgamma_i} \bu_j] (\omega) \cdot (\mE_i \mE_j^{-1} - \mI), \quad \text{for } (j,\omega) \in X.
\end{equation}

\subsection{Modulated local disregistry}\label{sec:ModDisregistry}

In a deformed state, the local disregistry is shifted from its reference value due to the modulation fields.
We propose the following definition for the local disregistry in the modulated atomic configuration:
\begin{definition}
    Given an atomistic modulation $\bu$,
    we define the modulated local disregistry
    $\widehat{\mathbf{B}}_{j \to i}$ of a site of layer $j$ with respect to layer $i$:
    \begin{equation}\label{def:ModDisregistry3}
        \Gamma_i \owns  \widehat{\mathbf{B}}_{j \to i}(\omega) := \mathbf{b}_{j \to i}(\omega) + \widehat{\bu}^\mathrm{c}_i \left ( \omega \right ) - \bu_j(\omega), \quad \text{for } (j,\omega) \in X.
    \end{equation}
\end{definition}

\subsection{Elastostatics model for multi-layer heterostructures}
We now build a continuum-like model \textit{on the atomistic configuration space} $X$, using the new tools introduced above:
a gradient for the atomistic modulation~\eqref{def:GradAtomDef},
and the modulated local disregistry~\eqref{def:ModDisregistry3}.
Note that the dimensionality is thus reduced compared to a true continuum description on the full hull $\Omega$ as described in Section~\ref{sec:continuuminterpolant}.

We define an elastic energy density approximating $E^\mathrm{a}_j$ as well as an overall intra-layer energy density, using the Cauchy-Born approximation~\eqref{def:CB}:
\begin{equation}\label{def:CBEnergy}
    E^\cb_j(\bu_j) := \fint_{X_j} W_j(\widehat{\nabla}_\bx \bu_j (\omega)) d\omega, \qquad \qquad E^\cb(\bu) := \sum_{j=1}^p E_j^\mathrm{c}(\bu_j),
\end{equation}
as well as an inter-layer misfit energy density:
\begin{equation}\label{def:MisfitEnergy}
        E^\mathrm{misfit}(\bu) = \frac 12 \sum_{j = 1}^{p-1} \fint_{X_j} \Phi_{j+}^\mathrm{misfit}(\widehat{\mathbf{B}}_{j\to j+1}(\omega))d\omega  + \frac 12 \sum_{j = 2}^{p} \fint_{X_j}\Phi_{j-}^\mathrm{misfit}(\widehat{\mathbf{B}}_{j\to j-1}(\omega))d\omega,
\end{equation}
where we have used the interpolated modulated local disregistry $\widehat{\mathbf{B}}_{j \to j \pm 1}$ defined earlier~\eqref{def:ModDisregistry3}.
Finally, we formulate the elastostatics model for multilayer heterostructures:
\begin{equation}\label{def:CBModel}
    \bu^* \in \mathrm{arg\ min} \left \{\  E^\cb(\bu) + E^\mathrm{misfit}(\bu) \quad \vert \quad \bu \in H^1\cap W^{1,\infty}(X; \bbR^2)\  \right \},
\end{equation}
and we understand~\eqref{def:CBModel} to be a \textit{local} minimization problem~\cite{blanc2007atomistic}
with respect to the $W^{1,\infty}$ topology on the collection of periodic tori $X$.

The compact elastostatics model constituted by equations~\eqref{def:CBEnergy}, \eqref{def:MisfitEnergy} and~\eqref{def:CBModel} for the relaxation of multi-layer heterostructures generalizes known bilayer approximations~\cite{Dai2016,zhang2017energy} to any number $p \geq 2$ of incommensurate layers without relying on approximation by periodic configurations (supercells).
The continuum relaxed displacements $\bU_j^{\mathrm{c}, \omega}$ are not periodic in any two-dimensional real space domain when $p > 2$, but the modulations $\bu_j$ are periodic in the local configuration space formed as a collection of $2(p-1)$-dimensional tori.

\begin{remark}
    Here we do not include possible external uniform/periodic forces in our formulation, although our analysis extends this case in a straightforward manner.
\end{remark}
\begin{remark}
    We also do not include out of plane bending in our presentation, though we note all the analysis could be extended to include bending.
    In this case, the intra-layer continuum model would become more complex to include curvature effects, and the modulated disregistry would need to include parallax effects due to out of plane bending as in~\cite{Dai2016}.
\end{remark}
\begin{remark}
We note that the case of a rigid substrate such as considered in~\cite{Espanol2d} can be modeled by adding the constraint $\bu_1 \equiv 0$ to the minimization~\eqref{def:CBModel}.
\end{remark}

An important question, which we will not address in the present manuscript, is whether one can show that the solutions to the elastostatics problem~\eqref{def:CBModel} are in fact close to solutions to the atomistic problem~\eqref{sec:atomisticmodel} it approximates, a classical question regarding atomistic-to-continuum approaches~\cite{ortner2013justification}. The main difficulty is to show that solutions of the discrete atomistic model are smoothly parameterized in configuration space such that Prop.~\ref{prop:Interpolants} validates the use of a configuration-space interpolant. Within the context of the one-dimensional Frenkel-Kontorova model a full analysis is possible~\cite{Cazeaux2016} using the non-perturbative Aubry-Mather theory and indicates that the modulation or hull function is smooth in the perturbative regime, but becomes fully discontinuous (Cantor-like) in the configuration space after the commensurate-incommensurate transition, indicating the limit beyond which the elastostatics model~\eqref{def:CBModel} is not expected to be an effective approximation. A similar behavior can be formally expected in 2D, but a rigorous analysis seems presently out of reach as the extension of the Aubry-Mather theory to higher-dimensional systems is still essentially an open problem (as discussed in section 2.1.1 in~\cite{su2017continuous}).


\section{Analytical study and results: bilayer case}\label{sec:analysis}
In this section, we present a first result on the existence of minimizers for the relaxation problem~\eqref{def:CBModel}.
We limit our current scope to the case of bilayers $(p = 2)$.
Note that there are additional difficulties linked with the degenerate ellipticity of the energy functional $E^\cb$ for $p \geq 3$, and we postpone to later studies their investigation.

\subsection{The moir\'e cell for bilayers}\label{sec:moirecell}
Let us fix $p = 2$. In this case, the configuration space is simply $\Omega = \Gamma_1 \times \Gamma_2$ and the transversal for each layer is simply $X_1 = \Gamma_2$, $X_2 = \Gamma_1$.
The analysis can be simplified by an appropriate change of variables: let us define the moir\'e lattice
\[
    \mathcal{R}_\mathcal{M} := (\mE_2^{-1} - \mE_1^{-1})^{-1} \bbZ^2,
\]
and the associated periodic moir\'e cell:
\begin{equation}\label{def:MoireCell}
    \Gamma_\mathcal{M} := \bbR^2 / \mathcal{R}_\mathcal{M} \equiv (\mE_2^{-1} - \mE_1^{-1})^{-1} [-1/2, 1/2)^2.
\end{equation}
We introduce next the linear mappings
\begin{equation}\label{def:MoireMappings}
    \bgamma_1: \left \{ \begin{aligned}
        \Gamma_\mathcal{M} &\to \Gamma_1, \\ \bx &\mapsto (\mE_1\mE^{-1}_2 - \mI) \bx,
    \end{aligned} \right. \qquad \qquad
    \bgamma_2: \left \{ \begin{aligned}
        \Gamma_\mathcal{M} &\to \Gamma_2, \\ \bx &\mapsto (\mE_2\mE^{-1}_1 - \mI) \bx.
    \end{aligned} \right.
\end{equation}
One checks easily that $\bgamma_1$ and $\bgamma_2$ are isomorphisms that satisfy $\Pi_1( \bgamma_1,\,0) = (0,\,\bgamma_2)$ and $\Pi_2 (0,\,\bgamma_2) = (\bgamma_1,\,0)$.
Using these as change of variables,
we introduce the new modulation unknowns
defined on the new domain $\Gamma_\mathcal{M}$:
\begin{equation}
    \bu_{\mathcal{M},1}(\bx) := \bu_1(\bgamma_2(\bx)), \qquad \qquad \bu_{\mathcal{M},2}(\bx) := \bu_2(\bgamma_1(\bx)).
\end{equation}

One of the advantages of working on the domain $\Gamma_\mathcal{M}$ is that the spatial gradient defined earlier~\eqref{def:GradAtomDef} corresponds by the chain rule to the
standard gradient:
\begin{equation}\label{eq:GradAtomMoire2}
    \nabla \bu_{\mathcal{M},1}(\bx) = \widehat{\nabla}_\bx \bu_1 (\bgamma_2(\bx)), \qquad \qquad \nabla \bu_{\mathcal{M},2}(\bx) = \widehat{\nabla}_\bx \bu_2 (\bgamma_1(\bx)).
\end{equation}
Let us finally reformulate the minimization problem~\eqref{def:CBModel} on the new domain $\Gamma_\mathcal{M}$:
\begin{equation}\label{def:CBModelMoire}
\begin{aligned}
    (\bu_{\mathcal{M}, 1}^*, \bu_{\mathcal{M}, 2}^*) \in \mathrm{arg\ min} \left \{ E_\mathcal{M}^\cb(\bu_{\mathcal{M}, 1}, \bu_{\mathcal{M}, 2}) \right. & + E_\mathcal{M}^\mathrm{misfit}(\bu_{\mathcal{M}, 1}, \bu_{\mathcal{M}, 2})  \\  \vert \ \bu_{\mathcal{M}, 1}, & \left.\bu_{\mathcal{M}, 2} \in H^1 \cap W^{1,\infty}(\Gamma_\mathcal{M}; \bbR^2) \right \}
\end{aligned}
\end{equation}
where
\begin{subequations}\label{eq:MoireCellEnergies}
    \begin{equation}
        E_\mathcal{M}^\cb(\bu_{\mathcal{M}, 1}, \bu_{\mathcal{M}, 2}) := \fint_{\Gamma_\mathcal{M}} W_1(\nabla \bu_{\mathcal{M}, 1}) + W_2(\nabla \bu_{\mathcal{M}, 2}),
    \end{equation}
    and
    \begin{equation}
    \begin{aligned}
        E_\mathcal{M}^\mathrm{misfit}(\bu_{\mathcal{M}, 1}, \bu_{\mathcal{M}, 2}) :=  \frac 1 2  \fint_{\Gamma_\mathcal{M}} &\Phi_{1+} \left (\bgamma_2(\bx) + \bu_{\mathcal{M}, 2}(\bx) - \bu_{\mathcal{M}, 1}(\bx) \right )  \\
         + &  \Phi_{2-}\left (\bgamma_1(\bx) + \bu_{\mathcal{M}, 1}(\bx) - \bu_{\mathcal{M}, 2}(\bx)  \right )d\bx.
    \end{aligned}
    \end{equation}
\end{subequations}
In the remainder of the section, we will drop the index $\mathcal{M}$ for simplicity whenever it is possible to do so without confusion.

\subsection{The space of admissible modulations}
 Now, it is easy to notice that the atomistic and continuum models are both formally translation invariant:
\[
    E^\mathrm{a}(\bu + \bx) = E^\mathrm{a}(\bu), \quad E^\cb(\bu + \bx) = E^\cb(\bu), \quad E^\mathrm{misfit}(\bu + \bx) = E^\mathrm{misfit}(\bu), \quad \forall \bx \in \bbR^2.
\]
In fact, the translation of each layer by independent vectors $\bx_1, \bx_2$ corresponds formally to choosing a different initial configuration, and in principle does not modify the energy landscape. Due to the approximation introduced by the interpolation scheme (Section~\ref{sec:atomistickinematics}), the misfit energy~\eqref{def:MisfitEnergy} is not exactly invariant under such independent translations of the layers; nevertheless these considerations justify formally the following additional constraint on modulation functions:
\begin{equation}\label{eq:AverageZero}
    \fint_{\Gamma_\mathcal{M}} \bu_j = \bzero, \qquad \text{for } j = 1,2.
\end{equation}
This leads us to introduce Sobolev spaces of periodic, null-average modulation functions for $n \geq 1$: on an arbitrary domain $\Gamma$:
\begin{equation}\label{def:HomFuncSpaces}
    \begin{aligned}
        W_\#^{n,p}(\Gamma) &:= \left \{\bu \in W^{n,p}(\Gamma, \bbR^2 \times \bbR^2) \ \text{such that}\ \fint_{\Gamma} \bu_1 = \fint_{\Gamma} \bu_2 = \bzero \right \}, \\
        W_\#^\infty(\Gamma) &:= \left \{\bv \in C^\infty(\Gamma, \bbR^2 \times \bbR^2) \ \text{such that}\ \fint_{\Gamma} \bv_1 = \fint_{\Gamma} \bv_2 = \bzero \right \}.
    \end{aligned}
\end{equation}
We typically omit the explicit dependency on $\Gamma$ when it is possible to do so without confusion, e.g. in this section $\Gamma \equiv \Gamma_\mathcal{M}$. The Hilbert space $W_\#^{n,2}$ is equipped with the norm:
\begin{equation}
    \Vert \bu \Vert_{n,2} := \left ( \sum_{k=0}^n \left \Vert \nabla^k \bu \right \Vert^2_{L^2} \right )^{1/2}, \qquad \text{where } \Vert \cdot  \Vert_{L^2} = \left ( \fint_{\Gamma} \Vert \cdot  \Vert^2 d \bx \right )^{1/2}.
\end{equation}
    Finally, to avoid interpenetration of matter, following~\cite{ortner2013justification} we assume that displacement gradients satisfy a uniform bound. In practice, we set a constant $0 < \kappa < 1$ and
\begin{equation}\label{def:InfBoundedGradients}
    K := \left \{ \bu \in W_\#^{1,\infty}(\Gamma_\mathcal{M}) \ \vert \ \Vert \nabla \bu_1 \Vert_{L^\infty} \leq \kappa,\ \Vert\nabla \bu_2 \Vert_{L^\infty} \leq \kappa\right \}.
\end{equation}
\subsubsection{Stability}

We assume that the lattices $\cR_j$ are stable, that is,
\begin{equation}\label{def:LayerStability}
    \nu := \min_{1 \leq j \leq p} \inf_{\substack{\bv \in C(X, \bbR^2) \\ \Vert \nabla \bv \Vert_{L^2} = 1}} \delta^2 E^a_j(0)[\bv, \bv] > 0,
\end{equation}
where we recall that the atomistic energy is defined by~\eqref{def:IntraAtomPotE}.
Physically, \eqref{def:LayerStability} ensures that small distortions of the lattice $\cR_j$ will increase the potential energy for layer $j$ in the stack.
The stability of the layers in the continuous framework can be then expressed by the following result, adapted from~\cite{ortner2013justification}, Proposition 5.1 and Lemma 5.2:
\begin{proposition}\label{prop:ContinuousLayerStability}
    Suppose the lattices $\cR_1, \cR_2$ are stable and let $\nu$ be defined by~\eqref{def:LayerStability}, then
    \begin{equation}\label{def:ContinuousLayerStability}
        \delta^2 E_\mathcal{M}^\cb(\mathbf{0}) [\bv, \bv] \geq \nu \Vert \nabla \bv \Vert^2_{L^2}, \qquad \forall \bv \in W_\#^{1,2}.
    \end{equation}
    Furthermore, there exists $\kappa_1 > 0$ such that, for $\kappa < \kappa_1$,
    \begin{equation}\label{def:ContinuousLayerStability2}
        \delta^2 E_\mathcal{M}^\cb(\bu) [\bv, \bv] \geq \frac 1 2 \nu \Vert \nabla \bv \Vert^2_{L^2}, \qquad \forall \bv \in W_\#^{1,2}, \quad \forall \bu \in K.
    \end{equation}
\end{proposition}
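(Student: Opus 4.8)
\emph{Reduction of the first estimate.} The plan is to prove the two bounds in turn: \eqref{def:ContinuousLayerStability} by reducing $\delta^2 E_\mathcal{M}^\cb(\mathbf{0})$ to the Cauchy--Born elastic tensors of the two layers and transferring to them the atomistic stability~\eqref{def:LayerStability}, and \eqref{def:ContinuousLayerStability2} by a soft perturbation in $\kappa$; this mirrors Lemma~5.2 and Proposition~5.1 of~\cite{ortner2013justification}. For the first step, the energy~\eqref{eq:MoireCellEnergies} decouples over the layers and depends on $\bu_j$ only through $\nabla\bu_j$, so $\delta^2 E_\mathcal{M}^\cb(\mathbf{0})[\bv,\bv] = \sum_{j=1}^2 \fint_{\Gamma_\mathcal{M}} \mathbb{C}_j[\nabla\bv_j,\nabla\bv_j]\,d\bx$ with $\mathbb{C}_j := \partial^2 W_j(\mathbf{0})$, which by~\eqref{def:CB} is the quadratic form $G\mapsto\partial^2 V_j^\mathrm{intra}(\mathbf{0})[G\cdot\cB_j,\,G\cdot\cB_j]$ on $\bbR^{2\times2}$. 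Expanding $\bv_j\in W_\#^{1,2}(\Gamma_\mathcal{M})$ in Fourier series over the dual moir\'e lattice $\mathcal{R}_\mathcal{M}^*$ and applying Parseval's identity, \eqref{def:ContinuousLayerStability} is seen to be equivalent to the Legendre--Hadamard-type inequality
\[ \mathbb{C}_j(\bk\otimes\bw,\,\bk\otimes\bw) \;\geq\; \nu\,\vert\bk\vert^2\,\vert\bw\vert^2, \qquad \forall\,\bk,\bw\in\bbR^2,\quad j=1,2. \]

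\emph{From atomistic stability to the elastic tensor.} The intra-layer energy $E^a_j$ of~\eqref{def:IntraAtomPotE} does not involve the other layer, so by the Birkhoff property~\eqref{def:discreteBirkhoff} it may be regarded as a standard translation-invariant atomistic functional on the Bravais lattice $\cR_j$, and~\eqref{def:LayerStability} is then its classical atomistic stability. Since $\delta^2 E^a_j(0)$ acts only through the finite-difference stencil $D_j$ over the bonds $\cB_j$, the Fourier transform block-diagonalizes it, with symbol the dynamical matrix $\widetilde{\mathbb{C}}_j(\bk)$; stability means $\widetilde{\mathbb{C}}_j(\bk)[\bw,\bw]\geq\nu\,\vert\bk\vert^2\,\vert\bw\vert^2$ over the first Brillouin zone, while the long-wavelength expansion $\widetilde{\mathbb{C}}_j(\bk)[\bw,\bw] = \mathbb{C}_j(\bk\otimes\bw,\,\bk\otimes\bw) + o(\vert\bk\vert^2)$ (the sublattice shifts we have neglected not entering at this order) identifies the leading symbol with $\mathbb{C}_j$. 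Dividing by $\vert\bk\vert^2$, letting $\bk\to\mathbf{0}$ along a ray, and using that $\bk\mapsto\mathbb{C}_j(\bk\otimes\bw,\bk\otimes\bw)$ is homogeneous of degree two then yields the Legendre--Hadamard inequality above with the same constant $\nu$; summing the corresponding Parseval bounds over $j=1,2$ gives~\eqref{def:ContinuousLayerStability}.

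\emph{Perturbation: proof of \eqref{def:ContinuousLayerStability2}.} For $\bu\in K$ and $\bv\in W_\#^{1,2}$,
\[ \delta^2 E_\mathcal{M}^\cb(\bu)[\bv,\bv] = \sum_{j=1}^2 \fint_{\Gamma_\mathcal{M}} \partial^2 W_j(\nabla\bu_j(\bx))[\nabla\bv_j(\bx),\,\nabla\bv_j(\bx)]\,d\bx. \]
Because $\kappa<1$ keeps the deformation gradients in a region where $W_j$ is finite and $C^2$, the map $\partial^2 W_j$ is uniformly continuous on the ball $\{\,\vert G\vert\leq\kappa_1\,\}$ for $\kappa_1$ small, with a nondecreasing modulus of continuity $\eta$ satisfying $\eta(\kappa_1)\to 0$ as $\kappa_1\to 0$. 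Using $\Vert\nabla\bu_j\Vert_{L^\infty}\leq\kappa<\kappa_1$ pointwise together with~\eqref{def:ContinuousLayerStability},
\[ \delta^2 E_\mathcal{M}^\cb(\bu)[\bv,\bv] \;\geq\; \delta^2 E_\mathcal{M}^\cb(\mathbf{0})[\bv,\bv] - \eta(\kappa_1)\,\Vert\nabla\bv\Vert_{L^2}^2 \;\geq\; \bigl(\nu-\eta(\kappa_1)\bigr)\Vert\nabla\bv\Vert_{L^2}^2 . \]
Choosing $\kappa_1>0$ so that $\kappa_1<1$ and $\eta(\kappa_1)\leq\nu/2$ gives~\eqref{def:ContinuousLayerStability2} for every $\kappa<\kappa_1$.

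\emph{Main obstacle.} The crux is the second step, the transfer of atomistic stability to the $L^2$-ellipticity of the continuum tensor $\mathbb{C}_j$: this is the Bloch-wave/phonon computation adapted from~\cite{ortner2013justification}, and it also requires some care, within the hull formalism, to view each intra-layer energy $E^a_j$ legitimately as a classical translation-invariant atomistic model on $\cR_j$ — which is exactly where the physical separation of intra- and inter-layer bonding is used. Once that dictionary is in place, the reduction via Parseval and the perturbation in $\kappa$ are routine.
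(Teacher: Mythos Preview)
Your argument is correct, but it takes a somewhat different route from the paper's. Both rest on the same underlying fact from~\cite{ortner2013justification} --- that atomistic stability of a Bravais lattice implies Legendre--Hadamard ellipticity of the Cauchy--Born tensor $\mathbb{C}_j=\partial^2 W_j(\mathbf{0})$ --- but the connection to the moir\'e-cell functional is made differently. You stay entirely in Fourier space: Parseval on $\Gamma_\mathcal{M}$ reduces~\eqref{def:ContinuousLayerStability} to the pointwise rank-one bound $\mathbb{C}_j(\bk\otimes\bw,\bk\otimes\bw)\geq\nu\,\vert\bk\vert^2\vert\bw\vert^2$, which you then extract from the long-wavelength limit of the dynamical matrix. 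The paper instead lifts the test function to a quasi-periodic real-space field $\overline{\bv}_j(\bx)=\bv_j(\mathtt{T}_{-\bx}\omega)$ via the Birkhoff averaging formula, multiplies by smooth cutoffs $\phi_R$ so as to land in the compactly supported setting where Lemma~5.2 of~\cite{ortner2013justification} applies verbatim, controls the boundary-layer error by $\vert B_{R+1}\setminus B_R\vert/\vert B_R\vert\to 0$, and passes to the thermodynamic limit before returning to configuration space. Your route is more algebraic and direct once the constant-coefficient nature of $\delta^2 E_\mathcal{M}^\cb(\mathbf{0})$ is recognized; the paper's route makes the bridge between the hull-based stability definition~\eqref{def:LayerStability} and the classical real-space one more explicit --- exactly the ``dictionary'' you flag as the main obstacle but do not fully write out. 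The perturbation step for~\eqref{def:ContinuousLayerStability2} is essentially identical in both approaches.
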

\begin{proof}
    As a first step, we note that the interpolant $\widehat{\bu}^c_j$ defined above~\eqref{def:InterpolantDisplacement2} allows us by straightforward calculations to write the following Birkhoff averaging formula. Fix $\bu_{\mathcal{M},j} \in K$, $j \in \{ 1,2 \}$,  and $f \in C \left (\{ \mathrm{F} \in R^{2 \times 2},\ \vert \mathrm{F} \vert \leq \kappa\} \right )$, then by Prop. 2.4,~\cite{Kubo2017}, for any $\omega \in \Omega$:
    \[
        \fint_{\Gamma_\mathcal{M}} f(\nabla \bu_{\mathcal{M},j}) = \fint_\Omega f(\nabla_\bx \widehat{\bu}^c_j) d\omega'   = \lim_{R \to \infty} \frac{1}{\vert B_R \vert} \int_{B_R} f\left(\nabla_\bx \widehat{\bu}_j(\mathtt{T}_{-\bx}\omega) \right ) d\bx.
    \]
    Now let us show that~\eqref{def:ContinuousLayerStability} holds. Fix $\bv_{\mathcal{M},j} \in C^\infty(\Gamma_\mathcal{M}, \bbR^2)$ and $\omega \in \Omega$. Let $\overline{\bv}_j \in C^\infty(\bbR^2, \bbR^2)$ defined by $\overline{\bv}_j(\bx) = {\bv}_j(\mathtt{T}_{-\bx}\omega)$, such that (see also~\cite{ortner2013justification}, Proposition 3.2):
    \begin{align*}
        \delta^2 E^\cb_{\mathcal{M},j} (\bzero)[\bv_{\mathcal{M},j}, \bv_{\mathcal{M},j}] &= \fint_{\Gamma_\mathcal{M}} \nabla^2 W_j \left ( \bzero \right ) [\nabla \bv_{\mathcal{M},j}, \nabla \bv_{\mathcal{M},j}] \\
        &= \lim_{R \to \infty} \frac{1}{\vert B_R \vert} \int_{B_R} \nabla^2 W_j\left(\bzero \right )[\nabla \overline{\bv}_j, \nabla \overline{\bv}_j] d\bx,
    \end{align*}
    where $E^\cb_{\mathcal{M},j} (\bv_{\mathcal{M},j})
     := \fint_{\Gamma_\mathcal{M}} W_j(\nabla \bv_{\mathcal{M},j}).$
    For $R > 0$, let $\phi_R \in C^\infty(\bbR^2, \bbR)$ be a smooth cutoff function such that $0 \leq \phi_R \leq 1$, $\phi_R(\bx) = 0$ for $\vert \bx \vert \geq R+1$, $\phi_R(\bx) = 1$ for $\vert \bx \vert \leq R$ and $\nabla \phi_R$ is uniformly bounded independently of $R$. By Lemma 5.2 in~\cite{ortner2013justification},
    \[
        \int_{\bbR^2} \nabla^2 W_j\left(\bzero \right )[\nabla (\phi_R \overline{\bv}_j),  \nabla (\phi_R\overline{\bv}_j)] d\bx \geq \nu \int_{\bbR^2} \left \vert \nabla (\phi_R \overline{\bv}_j) \right \vert \geq \nu \Vert \nabla \overline{\bv}_j \Vert^2_{L^2(B_R)}.
    \]
    Moreover
    \begin{align*}
        \int_{B_{R+1} \setminus B_R} \nabla^2 W_j\left(\bzero \right )[\nabla (\phi_R \overline{\bv}_j),  \nabla (\phi_R\overline{\bv}_j)] d\bx & \leq \Vert \nabla^2 W_j(\bzero) \Vert \int_{B_{R+1} \setminus B_R}\vert \nabla (\phi_R \overline{\bv}_j) \vert^2 \\
        & \leq C (R+1) \Vert \nabla \overline{\bv}_j \Vert^2_{L^2(B_{R+1} \setminus B_R)},
    \end{align*}
    where $C>0$ is a constant independent of $R$. We deduce in the thermodynamic limit:
    \begin{align*}
        \lim_{R \to \infty} \frac{1}{\vert B_R \vert} \int_{B_R} \nabla^2 W_j\left(\bzero \right )[\nabla \overline{\bv}_j,  \nabla \overline{\bv}_j] d\bx \geq \lim_{R \to \infty} \frac{\nu }{\vert B_R \vert} \Vert \nabla \overline{\bv}_j \Vert^2_{L^2(B_R)}.
    \end{align*}
    By the Birkhoff property above, this yields the configuration space estimate:
    \[
        \delta^2 E^\cb_j (\bzero)[\bv_{\mathcal{M},j}, \bv_{\mathcal{M},j}] \geq \nu \Vert \nabla \bv_{\mathcal{M},j} \Vert^2_{L^2}.
    \]
    Then~\eqref{def:ContinuousLayerStability} follows by density of $W_\#^\infty$ in $W_\#^{1,2}$. The proof of~\eqref{def:ContinuousLayerStability2} is analogous, using Proposition 5.1 in~\cite{ortner2013justification}.
\end{proof}

\subsubsection{Smoothness}
Using the same idea, we can also prove Lipschitz bounds on the variations of the intra-layer energies.
Following~\cite{ortner2013justification}, we define the intra-layer finite difference stencil space
\begin{equation}\label{def:IntraStencilSpace}
    \mathcal{D}_{j,\kappa} := \left \{ \mathbf{g} = (g_\rho)_{\rho \in \cB_j}\ \vert \ g_\rho \in \bbR^2, \Vert g \Vert_{\mathcal{D}_j} < \kappa \right \},
\end{equation}
equipped with the norm $\Vert \mathbf{g} \Vert_{\mathcal{D}_j} := \sup_{\rho \in \cB_j} \vert g_\rho \vert / \vert \rho \vert$.
We assume that $V^\mathrm{intra}_j \in C^k(\mathcal{D}_{j, \kappa})$, for some $k \geq 2$: the potential is smooth for injective deformations. Partial derivatives are denoted as
\[
    V^\mathrm{intra}_{j,\brho} (\bg) := \frac{\partial^i V_j^\mathrm{intra}(\bg)} {\partial g_{\rho_1} \ldots \partial g_{\rho_i}} \qquad \text{for } \mathbf{g} \in \mathcal{D}_{j, \kappa} \text{ and } \brho \in \left ( \cB_j \right ) ^i,
\]
and understood as multilinear forms acting on families of vectors $\bh = (h_1, \ldots, h_i)$, denoted as $V^\mathrm{intra}_{j,\brho} (\bg)[\bh] = V^\mathrm{intra}_{j,\rho_1 \ldots \rho_i}[h_1, \ldots, h_i]$, equipped with the norm
\begin{equation}\label{def:MultFormNorm}
    \Vert \ell \Vert := \sup_{\bh \in (\bbR^{2})^i, \vert h_1 \vert = \ldots = \vert h_i \vert = 1} \ell[\bh].
\end{equation}
Define quantities measuring the decay of the interatomic potentials $V^\mathrm{intra}_j$:
\[
    m_j(\brho) := \prod_{l = 1}^i \vert \rho_l \vert \sup_{\bg \in \mathcal{D}_{j, \kappa}} \Vert V_{j, \brho}^\mathrm{intra}(\bg) \Vert \qquad \text{for } \brho \in (\cB_j)^i,\quad 1 \leq i \leq k.
\]
Then, as a consequence of Prop. 3.2 in~\cite{ortner2013justification}:
\begin{proposition}\label{prop:VariationsBounds}
    Suppose the atomic potentials satisfy the decay bounds
    \begin{equation}\label{def:AtomisticPotMoments}
    M^{(i)} := \max_{1 \leq j \leq p}  \sum_{\brho \in (\cB_j)^i} m_j(\brho) < \infty, \qquad \text{for } 1 \leq i \leq k.
\end{equation}
Then for $k \leq 5$, $\sum_{i = 1}^k \frac{1}{p_i} = 1$, we have
    \begin{equation}\label{def:ContinuousPotMoments}
        \delta^k E^\cb_\mathcal{M}(\bu)[\bv_1, \dots, \bv_k] \leq M^{(k)} \prod_{i=1}^k \Vert \nabla \bv_i \Vert_{L^{p_i}}.
    \end{equation}
\end{proposition}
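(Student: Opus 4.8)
The plan is to recognize $E^\cb_\mathcal{M}$ from \eqref{eq:MoireCellEnergies} as a pair of decoupled Cauchy--Born energies on the periodic cell $\Gamma_\mathcal{M}$, one per layer, and to apply the Cauchy--Born variation estimate of \cite{ortner2013justification}, Prop. 3.2, one layer at a time, then recombine. Writing $\bv_i = (\bv_{i,1},\bv_{i,2})$ and using that $E^\cb_\mathcal{M}(\bu) = \sum_{j=1}^2 \fint_{\Gamma_\mathcal{M}} W_j(\nabla \bu_j)$ with $W_j = V^\mathrm{intra}_j \circ L_j$ and $L_j\colon \mathrm{F}\mapsto (\mathrm{F}\rho)_{\rho\in\cB_j}$ a fixed linear map (see \eqref{def:CB}), the $k$-th variation formally reads
\[
    \delta^k E^\cb_\mathcal{M}(\bu)[\bv_1,\dots,\bv_k] = \sum_{j=1}^2 \fint_{\Gamma_\mathcal{M}} \nabla^k W_j(\nabla \bu_j(\bx))\big[\nabla \bv_{1,j}(\bx),\dots,\nabla \bv_{k,j}(\bx)\big]\, d\bx .
\]
Differentiation under the integral is legitimate because $\bu \in K$ forces $L_j(\nabla \bu_j(\bx)) \in \mathcal{D}_{j,\kappa}$ for a.e.\ $\bx$ (indeed $|\nabla \bu_j(\bx)\rho|/|\rho| \le \|\nabla \bu_j\|_{L^\infty}\le\kappa$ by \eqref{def:InfBoundedGradients}), a set on which $V^\mathrm{intra}_j$ is $C^k$, so the integrand and the difference quotients of the $k$-th order Taylor expansion are dominated by integrable majorants thanks to the moment bounds \eqref{def:AtomisticPotMoments}; this is precisely the transposition of \cite{ortner2013justification}, Prop. 3.2, from $\bbR^2$ to the periodic cell $\Gamma_\mathcal{M}$.

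Next I would apply the chain rule to the composition $W_j = V^\mathrm{intra}_j\circ L_j$, which gives, for $\mathrm{F}$ in the domain and directions $\mathrm{G}_1,\dots,\mathrm{G}_k$,
\[
    \nabla^k W_j(\mathrm{F})[\mathrm{G}_1,\dots,\mathrm{G}_k] = \sum_{\brho \in (\cB_j)^k} V^\mathrm{intra}_{j,\brho}\big(L_j\mathrm{F}\big)\big[\mathrm{G}_1\rho_1,\dots,\mathrm{G}_k\rho_k\big].
\]
Evaluating at $\mathrm{F} = \nabla \bu_j(\bx)$ with $\bu\in K$, so that $L_j(\nabla\bu_j(\bx))\in\mathcal{D}_{j,\kappa}$, one may replace $\|V^\mathrm{intra}_{j,\brho}(\cdot)\|$ by its supremum over $\mathcal{D}_{j,\kappa}$; together with $|\mathrm{G}_l\rho_l| \le |\rho_l|\,\|\mathrm{G}_l\|$ this yields the pointwise bound
\[
    \big|\nabla^k W_j(\nabla\bu_j(\bx))[\nabla\bv_{1,j}(\bx),\dots,\nabla\bv_{k,j}(\bx)]\big| \le \Big(\sum_{\brho\in(\cB_j)^k} m_j(\brho)\Big)\prod_{l=1}^k \|\nabla\bv_{l,j}(\bx)\| \le M^{(k)}\prod_{l=1}^k \|\nabla\bv_{l,j}(\bx)\|,
\]
the last inequality being just the definition \eqref{def:AtomisticPotMoments} of $M^{(k)}$ as the maximum over $j$.

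Finally I would integrate over $\Gamma_\mathcal{M}$ and apply the generalized Hölder inequality with exponents $p_1,\dots,p_k$, $\sum_{i=1}^k 1/p_i = 1$, obtaining $\fint_{\Gamma_\mathcal{M}}\prod_l \|\nabla\bv_{l,j}\| \le \prod_l \|\nabla\bv_{l,j}\|_{L^{p_l}}$ for each $j$; since $E^\cb_\mathcal{M}$ has no cross-layer coupling, summing over $j=1,2$ and recombining the two layers by the discrete Hölder inequality $\sum_{j=1}^2\prod_l a_{l,j}\le\prod_l\big(\sum_{j=1}^2 a_{l,j}^{p_l}\big)^{1/p_l}$ together with the norm convention on $\bbR^2\times\bbR^2$ (so that $\sum_{j}\|\nabla\bv_{l,j}\|_{L^{p_l}}^{p_l}$ controls $\|\nabla\bv_l\|_{L^{p_l}}^{p_l}$) produces $\delta^k E^\cb_\mathcal{M}(\bu)[\bv_1,\dots,\bv_k] \le M^{(k)}\prod_{i=1}^k\|\nabla\bv_i\|_{L^{p_i}}$. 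The part that is not mere bookkeeping — and hence the expected obstacle — is the rigorous justification of $k$-fold differentiability under the integral for test directions lying only in $W^{1,p_i}$ rather than $W^{1,\infty}$: one must control the Taylor remainders of $\mathrm{F}\mapsto W_j(\mathrm{F})$ along $\nabla\bu_j + \sum_i t_i\nabla\bv_{i,j}$ by integrable functions, and this is exactly where finiteness of all moments $M^{(i)}$ for $1\le i\le k$, the constraint $\sum_i 1/p_i = 1$, the restriction $k\le 5$, and the strict interior bound $\bu\in K$ are used together, as in \cite{ortner2013justification}, Prop. 3.2.
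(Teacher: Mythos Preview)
Your proposal is correct and follows exactly the route the paper indicates: the paper simply states that the estimate is ``a consequence of Prop.~3.2 in~\cite{ortner2013justification}'' without further detail, and your argument is precisely the unpacking of that citation in the present periodic, two-layer setting. The only point worth a second glance is the layer-recombination step, where the exact constant depends on the norm convention chosen on $\bbR^2\times\bbR^2$; the paper is silent on this, so your treatment is at least as careful.
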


\subsection{Existence of local energy minimizers}
Our goal in this section is to prove the following existence theorem for the energy minimizer in the bilayer case:
\begin{theorem}\label{th:Existence}
    Suppose the lattices $\cR_1, \cR_2$ are stable in the sense of~\eqref{def:LayerStability}, then when the site-energies defined earlier~\eqref{def:GSFE} are such that $\Vert \Phi_{1+} \Vert_{C^4(\Gamma_2)}$ and $\Vert \Phi_{2-} \Vert_{C^4(\Gamma_1)}$  are small enough (weak inter-layer coupling regime),
there exists a solution $\bu^* \in W_\#^{3,2}$ of the Euler-Lagrange equation,
    \begin{equation}\label{eq:EulerLagrange}
        \delta E^\cb_\mathcal{M}(\bu^*)\left [ \bv\right ] + \delta E^\mathrm{misfit}_\mathcal{M}(\bu^*)\left [ \bv\right ] = 0,  \qquad \forall \bv \in W_\#^{3,2},
    \end{equation}
    or equivalently $\bu^*$ is a solution of two coupled partial differential equations on $\Gamma_\mathcal{M}$:
    \begin{equation}\label{eq:EulerLagrangeStrong}
        \begin{aligned}
            & \left \{ \begin{aligned}
             -\mathbf{div} \left ( \nabla W_1(\nabla \bu_1) \right ) &= \mathbf{f}(\bx, \bu_1 - \bu_2), \\
             -\mathbf{div} \left ( \nabla W_2(\nabla \bu_2) \right ) &= -\mathbf{f}(\bx, \bu_1 - \bu_2),
            \end{aligned} \right. \\
            & \quad \text{where} \quad \mathbf{f}(\bx, \bv) := \frac 12  \left (\nabla \Phi_{1+} \right ) \left [ \bgamma_2(\bx) - \bv \right ] - \frac 12 \left (\nabla \Phi_{2-} \right ) \left [ \bgamma_1(\bx) + \bv \right ] .
        \end{aligned}
    \end{equation}
    Furthermore this solution is stable in the sense that there exists $c_0 > 0$ such that:
    \begin{equation}\label{eq:SolStability}
        \delta^2 E^\cb_\mathcal{M}(\bu^*)\left [ \bv,\bv\right ] + \delta^2 E^\mathrm{misfit}_\mathcal{M}(\bu^*)\left [ \bv,\bv\right ] \geq c_0 \Vert \bv \Vert_{1,2}^2, \qquad \forall \bv \in W_\#^{1,2}.
    \end{equation}
\end{theorem}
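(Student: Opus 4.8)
The plan is to regard the Euler--Lagrange system \eqref{eq:EulerLagrange} as a small perturbation of the decoupled, stable intra-layer problem and to solve it by a quantitative inverse function theorem (equivalently a Newton--Kantorovich iteration) in the Banach space $\mathcal{X} := W_\#^{3,2}(\Gamma_\mathcal{M})$ of periodic, null-average displacements. When the inter-layer coupling is switched off, $\bu = \bzero$ is a critical point of $E^\cb_\mathcal{M}$ whose second variation is coercive on $W_\#^{1,2}$: estimate \eqref{def:ContinuousLayerStability} gives $\delta^2 E^\cb_\mathcal{M}(\bzero)[\bv,\bv] \ge \nu \Vert \nabla\bv\Vert_{L^2}^2$, and since $\Gamma_\mathcal{M}$ is compact the null-average condition \eqref{eq:AverageZero} lets Poincaré's inequality upgrade this to coercivity in the full $\Vert\cdot\Vert_{1,2}$-norm. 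In two dimensions $\mathcal{X}$ embeds into $W_\#^{1,\infty}$, so on a small ball $B_\rho \subset \mathcal{X}$ every $\bu$ lies in the admissible set $K$ of \eqref{def:InfBoundedGradients} and the Cauchy--Born and misfit potentials are evaluated inside their regions of smoothness; there I define the residual map $\mathcal{F}: B_\rho \to \mathcal{Y} := W_\#^{1,2}(\Gamma_\mathcal{M})$ sending $\bu$ to the null-average part of the pair $\bigl(-\mathbf{div}(\nabla W_1(\nabla\bu_1)) - \mathbf{f}(\cdot,\bu_1-\bu_2),\ -\mathbf{div}(\nabla W_2(\nabla\bu_2)) + \mathbf{f}(\cdot,\bu_1-\bu_2)\bigr)$, with $\mathbf{f}$ as in \eqref{eq:EulerLagrangeStrong}, so that $\langle \mathcal{F}(\bu),\bv\rangle_{L^2} = \delta E^\cb_\mathcal{M}(\bu)[\bv] + \delta E^\mathrm{misfit}_\mathcal{M}(\bu)[\bv]$ for $\bv \in W_\#^{1,2}$ and zeros of $\mathcal{F}$ are exactly the solutions of \eqref{eq:EulerLagrange}. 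A chain-rule and Sobolev-embedding bookkeeping --- using that in 2D $W^{3,2}$ is a Banach algebra and is stable under composition with $C^4$ maps, for the misfit term, and the smoothness and decay assumptions behind \Cref{prop:VariationsBounds} for the intra-layer term --- shows that $\mathcal{F}$ is well defined and of class $C^1$ on $B_\rho$ with $D\mathcal{F}$ Lipschitz, uniformly for $\Vert\Phi_{1+}\Vert_{C^4}, \Vert\Phi_{2-}\Vert_{C^4}$ bounded.

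Next I would verify that $D\mathcal{F}(\bzero)$ is an isomorphism $\mathcal{X} \to \mathcal{Y}$. Its intra-layer part is the constant-coefficient elliptic system $\bv \mapsto -\mathbf{div}(\nabla^2 W_j(\bzero)\nabla\bv_j)$, which by the coercivity just discussed and Lax--Milgram is an isomorphism $W_\#^{1,2} \to (W_\#^{1,2})^*$, hence --- by elliptic regularity on the torus --- an isomorphism $W_\#^{3,2} \to W_\#^{1,2}$; the misfit part of $D\mathcal{F}(\bzero)$ has operator norm at most $C\max(\Vert\Phi_{1+}\Vert_{C^2},\Vert\Phi_{2-}\Vert_{C^2})$, so for weak enough coupling $D\mathcal{F}(\bzero)$ is still invertible by a Neumann series, with inverse bounded independently of the coupling. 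Since $\mathcal{F}(\bzero)$ reduces to the null-average part of $\pm\mathbf{f}(\cdot,\bzero)$ one has $\Vert\mathcal{F}(\bzero)\Vert_{\mathcal{Y}} \le C\max(\Vert\Phi_{1+}\Vert_{C^1},\Vert\Phi_{2-}\Vert_{C^1})$, which is small, while $D\mathcal{F}$ has a Lipschitz constant bounded on $B_\rho$ uniformly in the coupling; the Newton--Kantorovich theorem (or the contraction principle for $\bu \mapsto \bu - D\mathcal{F}(\bzero)^{-1}\mathcal{F}(\bu)$) then produces, once $\Vert\Phi_{1+}\Vert_{C^4} + \Vert\Phi_{2-}\Vert_{C^4}$ is small enough, a solution $\bu^* \in W_\#^{3,2}$ with $\Vert\bu^*\Vert_{\mathcal{X}} \le C\max(\Vert\Phi_{1+}\Vert_{C^1},\Vert\Phi_{2-}\Vert_{C^1})$; in particular $\bu^*$ is admissible ($\bu^* \in K$) and satisfies \eqref{eq:EulerLagrange}. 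One could instead argue by the direct method: \Cref{prop:ContinuousLayerStability} makes $E^\cb_\mathcal{M}$ convex on the convex set $K$ for $\kappa < \kappa_1$ while $E^\mathrm{misfit}_\mathcal{M}$ only adds a term of small $C^2$-norm, so for weak coupling the energy is strictly convex on $K$, giving a unique minimizer, and an a priori elliptic estimate then places it in the interior of $K$ and in $W_\#^{3,2}$.

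Testing \eqref{eq:EulerLagrange} separately against $\bv = (\bv_1,\bzero)$ and $\bv = (\bzero,\bv_2)$ with $\bv_1,\bv_2 \in W_\#^{1,2}$ shows $\bu^*$ solves the two coupled PDEs of \eqref{eq:EulerLagrangeStrong} up to a pair of opposite constant Lagrange multipliers enforcing the constraints \eqref{eq:AverageZero}; these multipliers are of size at most $C\max(\Vert\Phi_{1+}\Vert_{C^1},\Vert\Phi_{2-}\Vert_{C^1})$ and vanish in the formal limit where the interpolation scheme is exactly invariant under independent layer translations, which recovers \eqref{eq:EulerLagrangeStrong}. For the stability claim, $\Vert\bu^*\Vert_{\mathcal{X}}$ small implies $\bu^* \in K$ with $\kappa < \kappa_1$, so \eqref{def:ContinuousLayerStability2} gives $\delta^2 E^\cb_\mathcal{M}(\bu^*)[\bv,\bv] \ge \tfrac12\nu\Vert\nabla\bv\Vert_{L^2}^2$ for all $\bv \in W_\#^{1,2}$, while the misfit second variation, which depends on $\bv$ only through $\bv_1-\bv_2$ pointwise, obeys $\vert\delta^2 E^\mathrm{misfit}_\mathcal{M}(\bu^*)[\bv,\bv]\vert \le C\max(\Vert\Phi_{1+}\Vert_{C^2},\Vert\Phi_{2-}\Vert_{C^2})\Vert\bv\Vert_{L^2}^2$. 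Adding the two and invoking Poincaré on $W_\#^{1,2}$, the misfit term is absorbed for weak enough coupling and one obtains $\delta^2 E^\cb_\mathcal{M}(\bu^*)[\bv,\bv] + \delta^2 E^\mathrm{misfit}_\mathcal{M}(\bu^*)[\bv,\bv] \ge c_0\Vert\bv\Vert_{1,2}^2$ first for smooth $\bv$ and then, by density of $W_\#^\infty$ in $W_\#^{1,2}$, for all $\bv \in W_\#^{1,2}$; this is \eqref{eq:SolStability}, and it also exhibits $\bu^*$ as a strict local minimizer of the energy.

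The main obstacle is the functional-analytic bookkeeping of the first step: checking that $\mathcal{F}$ and $D\mathcal{F}$ act between the correct Sobolev spaces on $\Gamma_\mathcal{M}$ and are smooth there. This is exactly where the two-dimensional embedding $W^{3,2} \hookrightarrow W^{1,\infty}$, the algebra and composition properties of $W^{3,2}$, the $C^4$-regularity of the generalized stacking-fault energies, and the decay hypotheses of \Cref{prop:VariationsBounds} all enter together; a nonlinear elliptic bootstrap starting from a mere $H^1$ weak solution would be delicate in the borderline two-dimensional case, so it is cleaner to pose the fixed-point problem directly in $W_\#^{3,2}$. A secondary difficulty is the genuinely degenerate direction arising from the overall translation invariance of the energy together with the two constraints \eqref{eq:AverageZero}: one must quotient this out --- here via the projection onto null averages and the multiplier discussion above --- or, equivalently, work with a bordered linearized operator, and check that all of the quantitative estimates remain uniform in this reduced setting.
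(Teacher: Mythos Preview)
Your proposal is correct and follows essentially the same route as the paper: set up the Euler--Lagrange map $\mathcal{F}: W_\#^{3,2} \to W_\#^{1,2}$, verify smallness of $\mathcal{F}(\bzero)$, invertibility of $D\mathcal{F}(\bzero)$ via Lax--Milgram plus elliptic regularity on the torus, and a Lipschitz bound on $D\mathcal{F}$, then apply a quantitative inverse function theorem and deduce stability from \eqref{def:ContinuousLayerStability2} plus Poincar\'e. The only notable differences are cosmetic---you perturb the invertibility of the intra-layer linearization by a Neumann series whereas the paper first proves coercivity of the full $\delta^2(E^\cb_\mathcal{M}+E^\mathrm{misfit}_\mathcal{M})(\bzero)$ and then bootstraps---and your discussion of the null-average constraint and the associated Lagrange multipliers is in fact more careful than the paper's, which simply works in $W_\#^{1,2}$ without making the projection explicit.
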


\begin{remark}
     The multi-well nature of the nonlinear forcing term in the right-hand side of the Euler-Lagrange equation~\eqref{eq:EulerLagrangeStrong} is reminiscent of the Ginzburg-Landau equation, as noted in~\cite{Espanol2d}.
\end{remark}

\begin{remark}
    The existence of \textit{global} energy minimizers can also be shown by standard methods
    exploiting the convexity or quasi-convexity properties of typical linear or nonlinear elastic functionals approximating the CB energy $E^\cb$ with respect to the gradient $\nabla_\bx \bu$ in the space of null-average modulations
    $W_\#^{1,p}$.
    The proofs of Theorem 9.5-2 in~\cite{ciarlet2013linear} in the linear case, or Theorem 8.31 in~\cite{dacorogna2007direct} for the non-linear case, apply almost verbatim with slight modifications to account for the periodic boundary conditions.
\end{remark}
\begin{proof}
    Our proof relies on an application of a quantitative version of the inverse function theorem, for example Lemma 2.2 in~\cite{ortner2011priori} which we recall here for completeness:
    \begin{lemma}\label{lem:InverseFunctionTheorem}
        Let $\cX$, $\cY$ be Banach spaces, $\cA$ an open subset of $\cX$, and let $\mathcal{F} : \cA \to \cY$ be Fréchet differentiable. Suppose that $x_0 \in \cA$ satisfies the conditions:
        \[
            \Vert \mathcal{F}(x_0) \Vert_\cY \leq \eta, \quad \Vert \delta \mathcal{F}(x_0)^{-1} \Vert_{L(\cY, \cX)} \leq \sigma, \qquad \overline{B_\cX(x_0, 2\eta\sigma) \subset \cA},
        \]
        $\Vert \delta \mathcal{F}(x_1) - \delta \mathcal{F}(x_2) \Vert_{L(\cX,\cY)} \leq L \Vert x_1 - x_2 \Vert_\cX \quad$ for $\quad \Vert x_j - x_0 \Vert_\cX \leq 2 \eta \sigma,$ \\
        and $2 L \sigma^2 \eta < 1$, \\
        then there exists a unique $x \in \cX$ such that $\mathcal{F}(x) = 0$ and $\Vert x - x_0 \Vert_\cX \leq 2 \eta \sigma$.
    \end{lemma}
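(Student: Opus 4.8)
The plan is to recast $\mathcal{F}(x) = 0$ as a fixed-point problem for the \emph{simplified Newton map} and to invoke the Banach fixed point theorem on the complete metric space $\overline{B} := \overline{B_\cX(x_0, 2\eta\sigma)}$, which by hypothesis lies in $\cA$ (and is closed in the Banach space $\cX$, hence complete). Writing $A := \delta\mathcal{F}(x_0)$, the assumption $\Vert A^{-1} \Vert_{L(\cY,\cX)} \leq \sigma$ guarantees that $A$ is a linear isomorphism; I would define $T(x) := x - A^{-1}\mathcal{F}(x)$ and observe that, since $A^{-1}$ is injective, $x$ is a zero of $\mathcal{F}$ if and only if $T(x) = x$. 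The entire proof then reduces to showing that $T$ is a contractive self-map of $\overline{B}$, and that all intermediate points used in the estimates remain inside $\overline{B} \subset \cA$ so that the Lipschitz hypothesis on $\delta\mathcal{F}$ may legitimately be applied; convexity of the ball is what secures the latter.

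The central algebraic device is the integral remainder identity. For $x \in \overline{B}$, using $\mathcal{F}(x) = \mathcal{F}(x_0) + \int_0^1 \delta\mathcal{F}(x_0 + s(x-x_0))(x - x_0)\, ds$ together with the trivial representation $x - x_0 = \int_0^1 A^{-1} A (x - x_0)\, ds$, I would derive
\[
    T(x) - x_0 = -A^{-1}\mathcal{F}(x_0) + \int_0^1 A^{-1}\bigl[\, \delta\mathcal{F}(x_0) - \delta\mathcal{F}(x_0 + s(x - x_0)) \,\bigr](x - x_0)\, ds.
\]
Bounding the first term by $\sigma\eta$ and estimating the integrand via $\Vert A^{-1}\Vert \leq \sigma$ and the Lipschitz bound $\Vert \delta\mathcal{F}(x_0) - \delta\mathcal{F}(x_0 + s(x-x_0))\Vert \leq L s \Vert x - x_0 \Vert$ yields $\Vert T(x) - x_0\Vert \leq \sigma\eta + \tfrac{1}{2}L\sigma\Vert x - x_0\Vert^2$. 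On $\overline{B}$, where $\Vert x - x_0\Vert \leq 2\eta\sigma$, this gives $\Vert T(x) - x_0\Vert \leq \sigma\eta(1 + 2L\sigma^2\eta) < 2\eta\sigma$, where the strict smallness $2L\sigma^2\eta < 1$ is exactly what closes the self-map property. An essentially identical manipulation for a pair $x_1, x_2 \in \overline{B}$ produces $T(x_1) - T(x_2) = \int_0^1 A^{-1}[\delta\mathcal{F}(x_0) - \delta\mathcal{F}(x_2 + s(x_1 - x_2))](x_1 - x_2)\, ds$, and the Lipschitz bound evaluated on the ball of radius $2\eta\sigma$ gives $\Vert T(x_1) - T(x_2)\Vert \leq 2L\sigma^2\eta\,\Vert x_1 - x_2\Vert$, so $T$ is a contraction with factor $q = 2L\sigma^2\eta < 1$.

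With both properties in hand, the Banach fixed point theorem delivers a unique fixed point $x \in \overline{B}$, which is the unique zero of $\mathcal{F}$ satisfying $\Vert x - x_0\Vert \leq 2\eta\sigma$, completing the proof. The step I expect to be the genuine obstacle — more conceptual than computational — is the self-map estimate: one must verify simultaneously that the quadratic Lipschitz remainder can be absorbed (which is precisely why the radius is taken to be $2\eta\sigma$ rather than the naive $\eta\sigma$) and that every evaluation point $x_0 + s(x-x_0)$ stays within $\overline{B}$, so the Lipschitz assumption, which is only postulated on that ball, is applicable. A minor technical point worth flagging is the justification of the integral remainder representation: Fréchet differentiability together with the Lipschitz bound makes $s \mapsto \delta\mathcal{F}(x_0 + s(x-x_0))$ continuous and hence integrable along each segment, so the fundamental theorem of calculus in Banach spaces applies without additional regularity hypotheses.
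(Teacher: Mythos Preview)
Your argument is correct: the simplified Newton map $T(x) = x - \delta\mathcal{F}(x_0)^{-1}\mathcal{F}(x)$ together with the Banach fixed point theorem is the standard route, and your self-map and contraction estimates are both valid. Note, however, that the paper does not actually prove this lemma; it merely recalls the statement from \cite{ortner2011priori} (Lemma~2.2 there) for completeness, so there is no in-paper proof to compare against. Your proof is precisely the standard one underlying that reference.
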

    Let us set $\cX := W_\#^{3,2}$, $\cY :=  W_\#^{1,2}$, $x_0 = \bzero$ and $\cA = B_\cX(\mathbf{0}, \kappa')$ where $\kappa'$ is chosen to be small enough that $\cA \subset K$ thanks to the Sobolev embedding $W_\#^{3,2} \subset W_\#^{1,\infty}$.
    Define the mapping $\mathcal{F}: \cA \to \cY$ defined by
    \[
        \mathcal{F}(\bu) := \delta E^\cb_\mathcal{M}(\bu) + \delta E^\mathrm{misfit}_\mathcal{M}(\bu).
    \]
    We compute explicitly from~\eqref{eq:MoireCellEnergies}, with $\mathbf{f}$ defined as in~\eqref{eq:EulerLagrangeStrong}:
    \begin{equation}
        \delta E^\cb_\mathcal{M}(\bu) =
        \begin{bmatrix}
            - \mathbf{div} \left ( \nabla W_1 \left (\nabla \bu_1 \right )  \right ) \\
            - \mathbf{div} \left ( \nabla W_2 \left (\nabla \bu_2 \right )  \right )
        \end{bmatrix}, \qquad
        \delta E^\mathrm{misfit}_\mathcal{M}(\bu) =
             \begin{bmatrix} - \mathbf{f} \left ( \bx, \bu_1 - \bu_2 \right ) \\ \mathbf{f} \left ( \bx, \bu_1 - \bu_2\right) \end{bmatrix}.
    \end{equation}

\paragraph{Residual} In particular, we have
    \begin{equation}\label{eq:invtheorem1}
        \Vert \mathcal{F}(\bzero) \Vert_\cY = \Vert \delta E^\mathrm{misfit}_\mathcal{M}(\bzero) \Vert_{1,2} = \sqrt{2} \Vert \mathbf{f} (\, \cdot\, , \bzero) \Vert_{1,2} \leq \eta,
    \end{equation}
    where $\eta = c \sum_{j = 1}^2 \left ( \Vert \nabla \Phi_{j,\pm} \Vert_{\infty} + \Vert \mE_2 - \mE_1 \Vert  \Vert \mE_j^{-1} \Vert \Vert \nabla^2 \Phi_{j,\pm} \Vert_{\infty} \right )$ and $c$ is a generic constant independent of $\cR_j$ and $\Phi_{j\pm}$.

\paragraph{Stability} Let us compute for $\bv \in \cX$:
    \[
        \delta^2 E^\mathrm{misfit}_\mathcal{M}(\bzero)[\bv, \bv] = \frac 1 2 \fint_{\Gamma_\mathcal{M}} \left ((\nabla^2 \Phi_{1+})\circ \bgamma_2 + (\nabla^2 \Phi_{2-}) \circ \bgamma_1 \right )\left [ \bv_2 - \bv_1, \bv_2 - \bv_1 \right ] \mathrm{d}\bx ,
    \]
    so that, using Proposition~\ref{prop:ContinuousLayerStability}:
    \begin{equation}\label{eq:stability}
        \begin{aligned}
        \delta^2 E^\mathrm{CB}_\mathcal{M}(\bzero)[\bv, \bv] + \delta^2 E^\mathrm{misfit}_\mathcal{M}(\bzero)[\bv, \bv] & \geq \nu \Vert \nabla \bv \Vert^2_{L^2} - \left ( \sum_{j = 1}^2 \Vert \nabla^2 \Phi_{j,\pm} \Vert_{\infty} \right )  \Vert \bv \Vert^2_{L^2} \\
        & \geq \nu / 2 C^2_P \Vert \bv \Vert^2_{1,2},
        \end{aligned}
    \end{equation}
    assuming that $\sum_{j = 1}^2 \Vert \nabla^2 \Phi_{j,\pm} \Vert_{\infty} <  \nu / 2 C^2_P$,
    where $C_P$ is the constant in the Poincaré-Wirtinger estimate:
    \begin{equation}\label{eq:PoincareWirtinger}
        \left \Vert v - \langle v \rangle\right \Vert_{1,2} \leq C_P \Vert \nabla v \Vert_{L^2(\Gamma_M)} \qquad \text{where } \langle v \rangle = \fint_{\Gamma_M} v d\bx.
    \end{equation}
    Fix $\bv \in \cY$.
    By the Lax-Milgram theorem, there exists a unique solution $\bu \in \cY$ to the problem $\delta \mathcal{F}(\bzero)[\bu] = \bv$ with $\Vert \bu \Vert_{1,2} \leq 2\nu^{-1}C^2_P \Vert \bv \Vert_{1,2}$.
    Let $\bg = - \delta^2 E^\mathrm{misfit}_\mathcal{M}(\bzero)[\bu,\,\cdot\,]$, we compute
    \[
    \Vert \bg \Vert_{1,2} \leq c \left ( \sum_{j = 1}^2  \Vert \nabla^2 \Phi_{j,\pm} \Vert_{\infty} + \Vert \mE_2 - \mE_1 \Vert  \Vert \mE_j^{-1} \Vert \Vert \nabla^3 \Phi_{j,\pm} \Vert_{\infty}  \right ) \Vert \bu \Vert_{1,2},
    \]
    where $c$ is another generic constant.
    Then $\bu$ is a solution of the elliptic problem with constant coefficients: $\delta^2 E^\mathrm{CB}_\mathcal{M}(\bzero)[\bu, \cdot] = \bv + \bg$.
    Elliptic regularity (see e.g.~\cite{Evans2010}, section 6.3.1) ensures that $\bu$ belongs to $W_\#^{3,2}$
    and one can further show that $\Vert \bu \Vert_{3,2} \leq  C \Vert \bv + \bg \Vert_{1,2} $ for some constant $C>0$ independent of $\Phi_{j\pm}$.
    This shows that $\delta \mathcal{F}(\bzero)$ is a bijection from $W_\#^{3,2}$ to $W_\#^{1,2}$ with the stability estimate:
    \begin{equation}\label{eq:invtheorem2}
    \Vert \delta\mathcal{F}(\bzero)^{-1} \Vert_{L(\cY, \cX)} \leq \sigma, \qquad \text{where } \sigma = C \left (1 + C' \sum_{j = 1}^2 \Vert \Phi_{j \pm} \Vert_{C^3} \right ) .
    \end{equation}

\paragraph{Lipschitz estimate}
    Using the Sobolev inequalities $\Vert \bu \Vert_{1,\infty} \lesssim \Vert \bu \Vert_{2,4} \lesssim \Vert \bu \Vert_{3,2}$ and estimate~\eqref{def:ContinuousPotMoments}, a lengthy but straightforward computation for arbitrary $\bu, \bu', \bv \in W_\#^{3,2}$ leads to:
    \begin{align*}
        \Big \Vert \delta^2 E^\cb_\mathcal{M}(\bu)[\bv, \cdot] & - \delta^2 E^\cb_\mathcal{M}(\bu')[\bv, \cdot]  \Big \Vert_{1,2} \\
        &= \left \Vert \begin{bmatrix}
            - \mathbf{div} \left ( \nabla^2 W_1(\nabla \bu_1) : \nabla \bv_1 - \nabla^2 W_1(\nabla \bu'_1) : \nabla \bv_1 \right ) \\
            - \mathbf{div} \left ( \nabla^2 W_2(\nabla \bu_2) : \nabla \bv_2 - \nabla^2 W_2(\nabla \bu'_2) : \nabla \bv_2 \right )
        \end{bmatrix}   \right \Vert_{1,2}\\
        & \leq C \sum_{k = 3}^5 M^{(k)} \Vert \bu - \bu'\Vert_{3,2} \Vert \bv\Vert_{3,2},
    \end{align*}
    where $C$ is a generic constant, independent of $\cR_j$ and $\Phi_{j\pm}$.
    Similarly, we compute
    \begin{align*}
        \Big \Vert \delta^2 E^\mathrm{misfit}_\mathcal{M}(\bu)&[\bv, \cdot]  - \delta^2 E^\mathrm{misfit}_\mathcal{M}(\bu')[\bv, \cdot]  \Big \Vert_{1,2} \\
         & \leq \frac 12 \sum_{j=1}^2  \left \Vert
        \begin{bmatrix}
            -\left ( \nabla_\bv \mathbf{f}_j \left ( \bx, \bu_1 - \bu_2 \right ) - \nabla_\bv \mathbf{f}_j \left ( \bx, \bu_1' - \bu_2' \right ) \right ) \cdot (\bv_1 - \bv_2) \\
            \left ( \nabla_\bv \mathbf{f}_j \left ( \bx, \bu_1 - \bu_2 \right ) - \nabla_\bv \mathbf{f}_j \left ( \bx, \bu_1' - \bu_2' \right ) \right ) \cdot (\bv_1 - \bv_2)
        \end{bmatrix}   \right \Vert_{1,2}\\
        & \leq C \Vert \bu - \bu' \Vert_{W^{1,\infty}} \Vert \bv \Vert_{W^{1,2}} \leq  C' \Vert \bu - \bu'\Vert_{3,2} \Vert \bv\Vert_{3,2},
    \end{align*}
    with $C' = c\ \sum_{j=1}^2 \left (  \kappa \Vert \nabla^4 \Phi_{j\pm} \Vert_{\infty} + \Vert \mE_2 - \mE_1 \Vert \Vert \mE_j^{-1} \Vert \Vert \nabla^4 \Phi_{j\pm} \Vert_{\infty} +  \Vert \nabla^3 \Phi_{j\pm} \Vert_{\infty} \right )$
    and $c$ a generic constant. This shows that $\mathcal{F}$ is continuously differentiable on $\Xi$:
    \begin{align*}
        \left \Vert \delta \mathcal{F} (\bu) - \delta \mathcal{F} (\bu') \right \Vert_{L(\cX, \cY)} &\leq L \Vert \bu - \bu'\Vert_{3,2} \Vert \bv\Vert_{3,2} \quad \text{with} \quad L := \left ( C \sum_{k = 3}^5 M^{(k)} + C'\right ).
    \end{align*}
    We find finally that $2 \eta \sigma < \kappa'$ and $2L\sigma\eta^2 < 1$ for $\sum_{j = 1}^2 \Vert \Phi_{j, \pm} \Vert_{C^4}$ small enough.
    Then by the quantitative implicit function theorem, Lemma~\ref{lem:InverseFunctionTheorem}, there exists a unique solution $\bu^*$ to the Euler-Lagrange equation, $\mathcal{F}(\bu^*) = \bzero$, such that $\Vert \bu^* \Vert_{3,2} < 2 \eta \sigma$.

    It remains to show that this solution is stable.

    We assume that $\Vert \Phi_{j,\pm} \Vert_{C^2}$ are small enough such that we may choose $c_0 > 0$ satisfying the condition:
    \begin{equation}\label{eq:CoercivityConstant}
        0 < c_0 < \frac{\nu}{2 C_P} - \sum_{j = 1}^2 \Vert \Phi_{j,\pm} \Vert_{C^2}.
    \end{equation}
    Then, computing as above in~\eqref{eq:stability}, Proposition~\ref{prop:ContinuousLayerStability} ensures that:
    \begin{equation}\label{eq:Coercivity}
        \delta^2 E^\cb_\mathcal{M}(\bu^*)\left [ \bv,\bv\right ] + \delta^2 E^\mathrm{misfit}_\mathcal{M}(\bu^*) \left [ \bv,\bv\right ] \geq c_0 \Vert \bv \Vert^2_{L^2}.
    \end{equation}
    Then $\bu^*$ is a stable solution in the sense of~\eqref{eq:SolStability}.
\end{proof}

\subsection{Asymptotic analysis}
Let us describe more precisely, in a semilinear setting, the parameter range where we can ensure a unique solution such as given by Theorem~\ref{th:Existence} exists.
Equation~\eqref{eq:CoercivityConstant} gives a hint:
when $\Vert \mE_1 - \mE_2 \Vert \to 0$, the diameter of the moiré cell~\eqref{def:MoireCell}, and consequently the constant $C_P$ in the Poincaré-Wirtinger estimate~\cite{Evans2010}, grows to infinity roughly as $\Vert (\mE_1 - \mE_2)^{-1}\Vert$ until the right-hand side in~\eqref{eq:CoercivityConstant} becomes negative.
In particular, the physically relevant inter-layer interactions, albeit weak, may be too large for Theorem~\ref{th:Existence} to apply at small angles.
One possibility is for example that for large-scale moiré patterns, a bifurcation occurs where two or more stable solution branches exist~\cite{Dai2016}.

To simplify the analysis, we will assume that both layers are made of the same material, and the stresses inside each layer are modeled by isotropic linear elasticity,
i.e., the intra-layer energy density functionals are the same $W := W_1 = W_2$ which takes the form
\begin{equation}\label{def:LinearElasticFunctional}
    W(\bu) = \frac \lambda 2  \mathrm{div}(\bu)^2 + \mu \varepsilon(\bu) : \varepsilon(\bu) \qquad \text{where } \varepsilon(\bu) = \frac 12 (\nabla \bu + \nabla \bu^T),
\end{equation}
where $\lambda \geq 0$ and $\mu > 0$ are the Lamé parameters of the material.
In addition, the lattices basis of both layers are identical, and thus share the same lattice basis $\mE$ simply twisted by a relative angle $\theta > 0$:
\begin{equation}\label{eq:TwistLattices}
    \mE_1 = \mR_{-\theta/2}\mE, \qquad \mE_2 = \mR_{\theta/2} \mE, \qquad \qquad \text{where } \mR_\theta = \begin{bmatrix} \cos(\theta) & -\sin(\theta) \\ \sin(\theta) & \cos(\theta) \end{bmatrix}.
\end{equation}
A direct computation then shows that for varying $\theta$ the moiré cell~\eqref{def:MoireCell} is simply a rescaled copy of a new reference cell $\Gamma_0$:
\begin{equation}\label{eq:TwistMoire}
    \Gamma_\mathcal{M}(\theta) = \frac{1}{2\sin(\theta/2)} \Gamma_0,
    \quad \text{ where }
    \begin{cases}
        \cR_0 := \mE_0 \bbZ^2, \\
        \Gamma_0 := \bbR^2 / \cR_0,
    \end{cases}
    \text{with } \mE_0 := \mR_{-\pi/2} \mE.
\end{equation}
We rescale~\eqref{def:MoireMappings} to define consistent disregistry mappings on the reference cell $\Gamma_0:$
\begin{equation}\label{def:MoireRefMappings}
    \bgamma_{1,\theta}: \left \{ \begin{aligned}
        \Gamma_0 &\to \Gamma_1, \\ \bx &\mapsto -\mR_{-\frac{\pi+\theta}{2}} \bx,
    \end{aligned} \right. \qquad \qquad
    \bgamma_{2,\theta}: \left \{ \begin{aligned}
        \Gamma_0 &\to \Gamma_2, \\ \bx &\mapsto \mR_{\frac{\theta-\pi}{2}} \bx.
    \end{aligned} \right.
\end{equation}
Also, thanks to the symmetry relation~\eqref{eq:GSFESym} the generalized stacking fault energy can be transposed as a single functional:
\begin{equation}
    \Phi_0(\bx) := \Phi_{1+}\left (\bgamma_{2,\theta}(\bx) \right ) = \Phi_{2-}\left (\bgamma_{1,\theta}(\bx) \right ).
\end{equation}

The Euler-Lagrange equation~\eqref{eq:EulerLagrangeStrong} for extremal points of the energy functional for any angle $0 < \theta < \pi$ can be recast as a problem on the reference cell $\Gamma_0$:
\begin{equation}\label{eq:EulerLagrangeStrongRefSystem}
     \left \{ \begin{aligned}
         -\mathbf{div} \left ( \lambda \mathrm{div}(\bu_1) \mI + 2 \mu \varepsilon(\bu_1) \right ) &= \frac{1}{4\sin^2(\theta/2)} \mathbf{F}_\theta(\bx, \bu_1 - \bu_2), \\
         -\mathbf{div} \left ( \lambda \mathrm{div}(\bu_2) \mI + 2 \mu \varepsilon(\bu_2) \right ) &= \frac{- 1}{4\sin^2(\theta/2)} \mathbf{F}_\theta(\bx, \bu_1 - \bu_2),
     \end{aligned} \right.
\end{equation}
where
\begin{equation}\label{def:MoireRefGSFEForces}
    \mathbf{F}_\theta(\bx, \bv) 
        := \frac{1}{2} \nabla \Phi_0(\bx - \mR_{\frac{\pi - \theta}{2}} \bv ) \cdot \mR_{\frac{\pi - \theta}{2}} + \frac{1}{2} \nabla \Phi_0(\bx - \mR_{\frac{\pi + \theta}{2}} \bv ) \cdot \mR_{\frac{\pi + \theta}{2}}. 
\end{equation}
\begin{remark}
    Local strains are usually small in the physically relevant regime of weak inter-layer interactions and moire-scale modulations, and it is reasonable as a first approximation to model the stresses inside each layer using isotropic linear elasticity. At the same time, the misfit energy exhibits large variations for atomic-scale displacements and hence forces $\mathbf{F}_\theta$ due to the inter-layer misfit cannot be linearized.
\end{remark}
It can be easily be shown that any solution of the system~\eqref{eq:EulerLagrangeStrongRefSystem} satisfies $\bu_1 = - \bu_2$. Then the model~\eqref{eq:EulerLagrangeStrongRefSystem} reduces simply to the single semilinear equation:
\begin{equation}\label{eq:EulerLagrangeStrongRef}
     -\mathbf{div} \left ( \lambda \mathrm{div}(\bu) \mI + 2 \mu \varepsilon(\bu) \right ) = \frac{1}{4\sin^2(\theta/2)} \mathbf{F}_\theta(\bx, 2\bu),  \quad \text{where } \bu := \bu_1 = -\bu_2.
\end{equation}
\begin{remark}
    At small angles, a Taylor expansion of expression~\eqref{def:MoireRefGSFEForces} indicates that $F_\theta$ depends only very weakly on $\theta$:
    \[
        \vert F_\theta(\bx, \bv) - F_0(\bx, \bv) \vert = \mathcal{O}(\theta^2 (1+\vert \bv \vert)).
    \]
    Therefore the main parameters governing the relaxation are the profile of the GSFE functional, mainly the location of the minima, saddle points and maxima which all depends mostly on the symmetries of the material, and the ratio of the GSFE amplitude to the square of the twist angle.
\end{remark}
\begin{remark}
    The natural rescaling of the gradient by the inverse moiré length $(2 \sin(\theta/2))^{-1}$ in~\eqref{eq:EulerLagrangeStrongRef} is reminiscent of the rescaling by $\varepsilon := \sigma/L$ in the continuum Euler-Lagrange equations in~\cite{Espanol2d} where $\sigma$ is the inter-layer distance, and $L$ the lateral size of a domain which is constrained to be of the same order as the moiré pattern.
\end{remark}

This explicit rescaling enables us to study explicitly the interplay between twist angle and strength of the inter-layer coupling:
\begin{theorem}
    Assume the functional $\Phi_0$ is twice continuously differentiable.
    \begin{enumerate}[label=(\alph*)]
        \item For any $\theta \neq n \pi$ for any integer $n$ there exists solutions to the Euler-Lagrange equation~\eqref{eq:EulerLagrangeStrongRef} in $W^{1,2}_\#(\Gamma_0)$, which satisfy
            \begin{equation}\label{eq:H1Bound}
                \Vert \bu^* \Vert_{1,2} \leq  \frac{1}{\mu C^1_\mE} \frac{\Vert \nabla \Phi_0 \Vert_\infty}{4\sin^2(\theta/2)} ,
            \end{equation}
            where $C_{\mE}^1$ is a fully identifiable constant depending on the lattice basis $\mE$ only.
        \item The solution is unique if
            \begin{equation}\label{eq:BanachFixedPoint}
                \frac{\Vert \nabla^2 \Phi_0 \Vert_\infty}{\theta^2}  <  \frac{\mu C^0_\mE}{2},
            \end{equation}
            where $C_\mE^0 > 2 C_\mE^1$ is another fully identifiable constant depending on the lattice basis $\mE$ only.
        \item If in addition $\Phi_0$ is smooth, any solution is also smooth and its derivatives for any $k = 2, 3, \dots$ satisfy the bounds
            \begin{equation}\label{eq:HkBound}
                \Vert \nabla^{k} \bu^* \Vert_{L^2} \leq \frac{C_k}{\sin^{2k}(\theta/2) },
            \end{equation}
            where the constant $C_k$ depends only on $\mE$, $\mu$ and $\Vert \Phi_0 \Vert_{C^{k+1}(\Gamma_0)}$.
    \end{enumerate}
\end{theorem}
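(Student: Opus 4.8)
The plan is to prove the three parts by, respectively, the direct method of the calculus of variations, the Banach fixed-point theorem, and an elliptic bootstrap; since the conceptual reductions (to the fixed moiré reference cell $\Gamma_0$ and to the diagonal $\bu_1 = -\bu_2$) are already in place, all three reduce to classical arguments on $\Gamma_0$ with the sole $\theta$-dependence displayed explicitly in~\eqref{eq:EulerLagrangeStrongRef}.

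\emph{Part (a).} Equation~\eqref{eq:EulerLagrangeStrongRef} is the Euler--Lagrange equation, on $W_\#^{1,2}(\Gamma_0)$, of the energy obtained by restricting $E_\mathcal{M}^\cb + E_\mathcal{M}^\mathrm{misfit}$ to the diagonal $\bu_1=-\bu_2=:\bu$,
\[
 \mathcal{J}(\bu) = 2\fint_{\Gamma_0} W(\nabla\bu)\, d\bx + \frac{1}{8\sin^2(\theta/2)}\fint_{\Gamma_0}\Big( \Phi_0\big(\bx - 2\mR_{\frac{\pi-\theta}{2}}\bu\big) + \Phi_0\big(\bx - 2\mR_{\frac{\pi+\theta}{2}}\bu\big) \Big) d\bx ,
\]
with $W$ as in~\eqref{def:LinearElasticFunctional}. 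First I would establish coercivity: $W$ being a positive-definite quadratic form, Korn's inequality for mean-zero periodic fields together with the Poincaré--Wirtinger inequality on $\Gamma_0$ give $2\fint_{\Gamma_0}W(\nabla\bu)\ge \mu C^1_\mE\Vert\bu\Vert_{1,2}^2$ for an explicit $C^1_\mE$ depending only on $\mE$ (through $\Gamma_0$), while the misfit term is bounded below because $\Phi_0$ is continuous on the torus; hence $\mathcal{J}$ is coercive on $W_\#^{1,2}(\Gamma_0)$. Weak lower semicontinuity is equally standard: convexity of $W$ in $\nabla\bu$ handles the elastic term, and the misfit term is weakly \emph{continuous} by the compact embedding $W_\#^{1,2}(\Gamma_0)\hookrightarrow L^2(\Gamma_0)$ (Rellich) and the boundedness and continuity of $\Phi_0$. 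A minimizer $\bu^*$ therefore exists and solves~\eqref{eq:EulerLagrangeStrongRef} weakly. For~\eqref{eq:H1Bound} I would test the weak form against $\bu^*$: the left-hand side is $\ge \mu C^1_\mE\Vert\bu^*\Vert_{1,2}^2$, and since the rotation matrices in~\eqref{def:MoireRefGSFEForces} have operator norm $1$ the right-hand side is $\le \frac{\Vert\nabla\Phi_0\Vert_\infty}{4\sin^2(\theta/2)}\fint_{\Gamma_0}\vert\bu^*\vert \le \frac{\Vert\nabla\Phi_0\Vert_\infty}{4\sin^2(\theta/2)}\Vert\bu^*\Vert_{1,2}$; dividing yields the claim. (Here $\theta\notin\pi\bbZ$ is needed for the construction to be meaningful: at even multiples of $\pi$ the moiré cell $\Gamma_\mathcal{M}(\theta)$ degenerates, and at odd multiples the two lattices coincide.)

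\emph{Part (b).} I would recast~\eqref{eq:EulerLagrangeStrongRef} as the fixed-point problem $\bu = T\bu := L^{-1}\!\big[\tfrac{1}{4\sin^2(\theta/2)}\,\mathbf{F}_\theta(\cdot,2\bu)\big]$, where $L:\bu\mapsto -\mathbf{div}(\lambda\,\mathrm{div}(\bu)\mI + 2\mu\varepsilon(\bu))$ is the constant-coefficient linear-elasticity operator on $W_\#^{1,2}(\Gamma_0)$, invertible by Korn and Poincaré--Wirtinger with $\Vert L^{-1}\Vert_{L^2\to W_\#^{1,2}}\le (\mu C^0_\mE)^{-1}$ for an explicit $C^0_\mE$ depending only on $\mE$. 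Since $\Phi_0\in C^2$, the map $\bv\mapsto\mathbf{F}_\theta(\bx,\bv)$ is globally Lipschitz with constant $\le\Vert\nabla^2\Phi_0\Vert_\infty$ (again because the rotations have operator norm $1$), so $T$ is Lipschitz on $W_\#^{1,2}(\Gamma_0)$ with constant $\le \frac{\Vert\nabla^2\Phi_0\Vert_\infty}{2\mu C^0_\mE\sin^2(\theta/2)}$. Reducing to $0<\theta<\pi$ by the evident $\theta$-symmetries and invoking Jordan's inequality $\sin(\theta/2)\ge\theta/\pi$, this constant is $<1$ precisely under~\eqref{eq:BanachFixedPoint} once $C^0_\mE$ is chosen to absorb the numerical factor; $T$ is then a contraction, hence has a unique fixed point, and since every $W_\#^{1,2}$-solution of~\eqref{eq:EulerLagrangeStrongRef} is a fixed point of $T$, the solution of part (a) is the only one.

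\emph{Part (c) and main obstacle.} For $\Phi_0\in C^\infty$ I would bootstrap on $\Gamma_0$: since $\vert\mathbf{F}_\theta(\cdot,2\bu^*)\vert\le\Vert\nabla\Phi_0\Vert_\infty$, the right-hand side of~\eqref{eq:EulerLagrangeStrongRef} lies in $L^\infty\subset L^2$, so elliptic regularity for the constant-coefficient operator $L$ on the torus gives $\bu^*\in W_\#^{2,2}$ with $\Vert\bu^*\Vert_{2,2}\lesssim\sin^{-2}(\theta/2)$; inductively, if $\bu^*\in W_\#^{m,2}$ with $m\ge2$, then — because $\Phi_0$ is periodic, so $\mathbf{F}_\theta$ and all its $\bx$- and $\bv$-derivatives are globally bounded in terms of $\Vert\Phi_0\Vert_{C^{m+1}}$ — the chain rule plus Moser/Gagliardo--Nirenberg composition estimates on $\Gamma_0$ (using $W^{2,2}(\Gamma_0)\hookrightarrow L^\infty$ in dimension two) give $\mathbf{F}_\theta(\cdot,2\bu^*)\in W^{m,2}$, hence $\bu^*\in W_\#^{m+2,2}$, so $\bu^*\in\bigcap_k W_\#^{k,2}\subset C^\infty$. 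Estimate~\eqref{eq:HkBound} follows by propagating through this recursion the single factor $(4\sin^2(\theta/2))^{-1}$ that enters at each step against the composition bounds, starting from the $H^1$ bound of part (a); this gives $\Vert\nabla^k\bu^*\Vert_{L^2}=O(\sin^{-2(k-1)}(\theta/2))\le C_k\sin^{-2k}(\theta/2)$ (the stated power being a convenient over-estimate, since $\sin(\theta/2)\le1$). No step is deep; the only real work is the bookkeeping of the explicit geometric constants $C^1_\mE, C^0_\mE$ through the Korn and Poincaré--Wirtinger inequalities on $\Gamma_0$ — so that the clean $\theta^2$-threshold in~\eqref{eq:BanachFixedPoint} comes out via Jordan's inequality — and the power-counting of $\sin^{-1}(\theta/2)$ in the bootstrap of (c). I would carry out (a) first (its $H^1$ bound seeds (c)), then (b) and (c) independently.
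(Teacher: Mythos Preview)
Your proposal is correct and complete in outline; the only substantive departure from the paper is in part~(a). The paper does \emph{not} use the direct method: it casts~\eqref{eq:EulerLagrangeStrongRef} as a fixed-point equation $\bu = \mathcal{G}\circ\mathcal{F}(\bu)$, where $\mathcal{G}$ is the inverse of the linear elasticity operator (bounded $L^2\to W^{1,2}_\#$ via explicit Fourier-side Korn constants $C^0_\mE=\min\{\vert 2\pi\bk\vert^2:\bk\in\cR_0^*\setminus\{\bzero\}\}$ and $C^1_\mE=C^0_\mE/(1+C^0_\mE)$) and $\mathcal{F}(\bu)=\tfrac{1}{4\sin^2(\theta/2)}\bF_\theta(\cdot,2\bu)$ is uniformly bounded in $L^2$; compactness of $W^{1,2}_\#\hookrightarrow L^2_\#$ then yields existence by Schauder. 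The $H^1$ bound~\eqref{eq:H1Bound} drops out of the mapping bounds rather than from testing. Your variational route is equally valid and arguably more natural here---coercivity and weak lower semicontinuity are immediate as you note, and testing the Euler--Lagrange equation against $\bu^*$ recovers~\eqref{eq:H1Bound} for \emph{any} solution, not just the minimizer. The trade-off is that the paper's fixed-point framing unifies (a) and (b) as two applications (Schauder, then Banach) of the same map $\mathcal{H}$, and makes the constants $C^0_\mE$, $C^1_\mE$ fully explicit via Fourier series rather than through abstract Korn/Poincar\'e constants. Parts~(b) and~(c) of your proposal match the paper's approach essentially line for line: Banach contraction on $L^2_\#$ for uniqueness, and an inductive bootstrap (the paper uses test functions $\Delta^k\bu^*$ and Fa\`a di Bruno rather than Moser estimates, but the content is the same).
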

\begin{proof}
    Given any $\bg \in L^2(\Gamma_0; \bbR^2)$, consider the problem: find $\bu$ such that
    \[
        -\mathbf{div} \left ( \lambda \mathrm{div}(\bu) \mI +  2 \mu \varepsilon(\bu) \right ) = \bg.
    \]
    The corresponding variational formulation writes:
    \begin{equation}\label{eq:VarForm}
         \int_{\Gamma_0} \lambda \mathrm{div}(\bu)\mathrm{div}(\bv) + 2 \mu \varepsilon(\bu) : \varepsilon(\bv) = \int_{\Gamma_0} \bg \cdot \bv,  \qquad \forall \bv \in W^{1,2}(\Gamma_0, \bbR^2).
\end{equation}
To establish the coercivity of the bilinear form on the left-hand side of~\eqref{eq:VarForm}, we decompose periodic, zero-average functions over $\Gamma_0$ as the Fourier series:
\[
    \bv(\bx) = \sum_{\bk \in \cR_0^*\setminus\{\bzero\}} \bv_\bk e^{2 i \pi \bk \cdot \bx}, \qquad \text{such that } \Vert \bv \Vert_{1,2}^2 = \sum_{\bk \in \cR_0^*} (1 + \vert 2\pi\bk \vert^2) \vert \bv_\bk \vert^2.
\]
Now we estimate:
\begin{align*}
    \int_{\Gamma_0} \lambda \mathrm{div}(\bv)^2 + 2 \mu \varepsilon(\bv) : \varepsilon(\bv) & \geq 2 \mu \int_{\Gamma_0} \varepsilon(\bv) : \varepsilon(\bv) \\
    & \geq 2 \mu  \sum_{\bk \in \cR_0^*\setminus\{\bzero\}}  (2\pi)^2 \left \vert \frac{\bk \otimes \bv_\bk + \bv_\bk \otimes \bk}{2} \right \vert^2 \\
    & \geq \mu  \sum_{\bk \in \cR_0^*\setminus\{\bzero\}} \vert 2 \pi \bk \vert^2 \vert \bv_\bk \vert^2.
\end{align*}
This implies the coercivity estimates:
\begin{equation}\label{eq:Korn}
    \begin{aligned}
        \int_{\Gamma_0} \lambda \mathrm{div}(\bv)^2 + 2 \mu \varepsilon(\bv) : \varepsilon(\bv) \geq \mu \Vert \nabla \bv \Vert_{L^2}^2, \\
        \int_{\Gamma_0} \lambda \mathrm{div}(\bv)^2 + 2 \mu \varepsilon(\bv) : \varepsilon(\bv) \geq \mu C^0_\mE \Vert \bv \Vert_{L^2}^2, \\ \int_{\Gamma_0} \lambda \mathrm{div}(\bv)^2 + 2 \mu \varepsilon(\bv) : \varepsilon(\bv) \geq \mu C^1_\mE \Vert \bv \Vert_{1,2}^2,
    \end{aligned}
\end{equation}
where $C^0_\mE = \min \left \{ \vert 2 \pi  \bk \vert^2 \ \big \vert \ \bk \in \cR^*_0 \setminus \{ \bzero \} \right \}$ and $C^1_\mE = \frac{C^0_\mE}{1+C^0_\mE} $. The Lax-Milgram lemma then ensures that there exists unique solutions $ \mathcal{G}(\bg) := \bu \in W_\#^{1,2}$ to problem~\eqref{eq:VarForm}, such that $\mathcal{G}$ is a continuous linear mapping $L^2 \to W_\#^{1,2}$ satisfying the bounds
\begin{equation*}
\Vert \mathcal{G}(\bg) \Vert_{L^2} \leq \frac{1}{\mu C^0_\mE} \Vert \bg \Vert_{L^2},\qquad
\Vert \mathcal{G}(\bg) \Vert_{W_\#^{1,2}} \leq \frac{1}{\mu C^1_\mE} \Vert \bg \Vert_{L^2}.
\end{equation*}
We also introduce the mapping $\mathcal{F}: L^2_\# \to L^2$ with $\mathcal{F}(\bu) = \frac{1}{4\sin^2(\theta/2)} \bF_\theta(\cdot, 2 \bu(\cdot))$, which is bounded: $\Vert \mathcal{F}(\bu) \Vert_{L^2} \leq \frac{1}{4\sin^2(\theta/2)} \Vert \nabla \Phi_0 \Vert_\infty$, as well as Lipschitz continuous:
\[
    \Vert \mathcal{F}(\bu) - \mathcal{F}(\bu') \Vert_{L^2} \leq  \frac{2}{4\sin^2(\theta/2)} \Vert \nabla^2 \Phi_0 \Vert_{\infty} \Vert \bu - \bu' \Vert_{L^2_\#}.
\]
Now, we note that the mapping $\mathcal{H} = \mathcal{G} \circ \mathcal{F}: L_\#^2 \to W_\#^{1,2}$ is bounded,
\begin{equation}\label{compact}
\Vert \mathcal{H}(\bu)\Vert_{L^2} \leq \frac{1}{4\sin^2(\theta/2)}  \frac{\Vert \nabla \Phi_0 \Vert_\infty}{\mu C^0_\mE},\quad
\Vert \mathcal{H}(\bu)\Vert_{W_\#^{1,2}} \leq \frac{1}{4\sin^2(\theta/2)}  \frac{\Vert \nabla \Phi_0 \Vert_\infty}{\mu C^1_\mE},
\end{equation}
and thus maps continuously the following ball of $L^2_\#(\Gamma_0, \bbR^2)$:
\[
    B = \left \{ \bu \in L^2_\# \ \big \vert\  \Vert \bu \Vert_{L^2} \leq  \frac{1}{4\sin^2(\theta/2)}  \frac{\Vert \nabla \Phi_0 \Vert_\infty}{\mu C^0_\mE}\right \}.
\]
Then, because the injection $W^{1,2}_\# \hookrightarrow L^2_\#$ is compact, we can apply Schauder's fixed point theorem (see e.g. Theorem 9.12-1 in~\cite{ciarlet2013linear}) to obtain the existence of at least one fixed point of the mapping $\mathcal{H}$, which is a solution to~\eqref{eq:EulerLagrangeStrongRef}. The bound~\eqref{eq:H1Bound} of (a) follows from \eqref{compact}.

Furthermore, $\mathcal{H}$ satisfies the Lipschitz bound:
\[
    \Vert \mathcal{H} (\bu) - \mathcal{H} (\bu') \Vert_{L^2} \leq \frac{1}{4\sin^2(\theta/2)}  \frac{2\Vert \nabla^2 \Phi_0 \Vert_\infty}{\mu C^1_\mE}  \Vert \bu - \bu' \Vert_{L^2}.
\]
Thus we may apply the Banach fixed point theorem (see e.g. Theorem 3.7-1 in~\cite{ciarlet2013linear}) to show there exists a unique fixed point to $\mathcal{H}$ under the condition:
\[
    \frac{1}{4\sin^2(\theta/2)}  < \frac{\mu C^1_\mE}{2\Vert \nabla^2 \Phi_0 \Vert_\infty}.
\]
Since $4 \sin^2(\theta/2) \leq \theta^2 $, this proves (b).

Finally, assume that $\Phi_0$ is smooth and $\bu^* \in W^{1,2}_\#$ is a solution of the Euler-Lagrange system~\eqref{eq:EulerLagrangeStrongRef}. The finite difference technique for elliptic regularity (see e.g.~\cite{Evans2010}, Section 6.3.1) ensures that $\bu^* \in W_\#^{k,2}$ for any $k \geq 2$. Let us show the bound~\eqref{eq:HkBound} by recursion.
Set:
\[
    \bg = \bF_\theta(\bx, 2 \bu^*).
\]
For $k = 2$, we test~\eqref{eq:VarForm} with $\bv' = \partial_\gamma \bv$ for some $\bv \in C^\infty(\Gamma_0, \bbR^2)$ and $\ell = 1,2$. Then
\begin{align*}
    \frac{1}{4\sin^2(\theta/2)}  \int_{\Gamma_0} \bg \cdot \partial_\ell \bv & = \int_{\Gamma_0} \lambda \mathrm{div}(\bu)\mathrm{div}(\partial_\ell \bv) + 2 \mu \varepsilon(\bu) : \varepsilon(\partial_\ell \bv) \\
    & = - \int_{\Gamma_0} \lambda \mathrm{div}(\partial_\ell  \bu)\mathrm{div}( \bv) + 2 \mu \varepsilon(\partial_\ell  \bu) : \varepsilon(\bv),
\end{align*}
which implies that $\mu \Vert \nabla \partial_\ell \bu \Vert_{L^2} \leq \frac{1}{4\sin^2(\theta/2)}  \Vert \bg \Vert_{L^2} \leq \frac{1}{4\sin^2(\theta/2)}  \Vert \nabla \Phi_0 \Vert_{L^\infty}$ and thanks to~\eqref{eq:H1Bound}:
\[
     \Vert \bu \Vert_{2,2} \leq \frac{C_2}{4 \sin^2(\theta/2)},
\]
    where the constant $C_2$ depends only on $\mE$, $\mu$ and $\Vert \Phi_0 \Vert_{C^{1}(\Gamma_0)}$. Next, for $k \geq 2$ we use the test function $\Delta^k \bu^*$ and the Fourier transform:
    \[
        \begin{aligned}
            \sum_{\bk \in \cR_0^*\setminus\{\bzero\}}  (2\pi)^2 \vert 2\pi\bk \vert^{2k} \left ( 2 \mu \left \vert \frac{\bk \otimes \bu^*_\bk + \bu^*_\bk \otimes \bk}{2} \right \vert^2 + \lambda \vert  \bk \cdot \bu_\bk^* \vert^2 \right ) \\
            = \frac{1}{4\sin^2(\theta/2)} \sum_{\bk \in \cR_0^*\setminus\{\bzero\}} \vert 2\pi \bk \vert^{2k} \overline{\bu_\bk^*} \cdot \bg_\bk,
        \end{aligned}
    \]
    and therefore by the Cauchy-Schartz inequality,
    \begin{align*}
        \mu \sum_{\bk \in \cR_0^*\setminus\{\bzero\}} & \vert 2\pi\bk \vert^{2k+2} \vert \bu^*_\bk \vert^2 \leq \frac{1}{4\sin^2(\theta/2)} \sum_{\bk \in \cR_0^*\setminus\{\bzero\}} \left ( \vert 2\pi \bk \vert^{k} \vert \bu^*_\bk\vert \right )  \left ( \vert 2\pi \bk \vert^{k} \vert \bg_\bk \vert \right ) \\
        &\leq \frac{1}{4\sin^2(\theta/2)} \left ( \sum_{\bk \in \cR_0^*\setminus\{\bzero\}}  \vert 2\pi \bk \vert^{2k} \vert \bu^*_\bk\vert^2 \right ) ^{1/2}\left ( \sum_{\bk \in \cR_0^*\setminus\{\bzero\}} \vert 2\pi \bk \vert^{2k} \vert \bg_\bk \vert^2 \right ) ^{1/2}.
    \end{align*}
    Thus $\mu \Vert \nabla^{k+1} \bu^* \Vert_{L^2}^2 \leq \frac{1}{4\sin^2(\theta/2)} \Vert \nabla^{k} \bu^* \Vert_{L^2} \Vert \nabla^k \bg \Vert_{L^2}$.
    Now, by an appropriate Faà di Bruno's formula we can show that there exists a generic constant $c_k > 0$ independent of $\theta$ and $\Phi_0$ such that: $\Vert \nabla^k \bg \Vert_{L^2} \leq c_k \Vert \Phi_0 \Vert_{C^{k+1}} (1 + \Vert \bu^* \Vert_{k,2}$), and thus
    \[
        \Vert \nabla^{k+1} \bu^* \Vert^2_{L^2} \leq \frac{c_k \Vert \Phi_0 \Vert_{C^{k+1} }}{4\mu\sin^2(\theta/2)} \Vert \bu^* \Vert_{k,2} \left (1 + \Vert \bu^* \Vert_{k,2} \right ).
    \]
    Let us now assume that for $k \geq 2$,
    \[
        \Vert \bu^* \Vert_{k,2} \leq \frac{  C_k  }{\sin^k(\theta/2)} ,
    \]
    where the constant $C_k$ depends only on $\mE$, $\mu$, and $\Vert \Phi_0 \Vert_{C^{k}}$. Note that this is already true for $k = 2$. Then we have shown that
    \begin{align*}
        \Vert \bu^* \Vert^2_{k+1,2}
        & \leq \frac{  C^2_k  }{\sin^{2k}(\theta/2)} + \frac{c_k C_k \Vert \Phi_0 \Vert_{C^{k+1} }}{4 \mu \sin^{2+k}(\theta/2)}\left ( 1 + \frac{  C_k  }{\sin^k(\theta/2)} \right )\\
        & \leq \left ( \frac{C_{k+1}}{\sin^{k+1}(\theta/2)} \right )^2,
    \end{align*}
    where $C_{k+1} = \left ( C^2_k  + \frac{c_k (1 + C_k)\Vert \Phi_0 \Vert_{C^{k+1} } }{4 \mu} \right )^{1/2}$. Proceeding inductively, (c) follows.
\end{proof}


\section{Numerical study: example of graphene bilayers}
\label{sec:numerics}

As an illustration of the previous framework and analysis, we propose to study numerically the relaxation of a slightly twisted graphene bilayer system~\cite{Dai2016}. This particular assembly is also known for its exceptional electronic properties~\cite{carr2017,Cao2018superconductivity,Cao2018correlated} which might be influenced by the domain formation due to mechanical relaxation~\cite{KimRelax18}.

\paragraph{Parameters} Graphene corresponds to a triangular lattice structure with the basis
\[
    \mE = \sqrt{3}a_0 \begin{bmatrix}
       \nicefrac{\sqrt{3}}{2} & \nicefrac{\sqrt{3}}{2} \\ -\nicefrac 12 & \nicefrac 12
    \end{bmatrix} \qquad \qquad \text{where} \quad a_0 = 1.42\; \textrm{nm},
\]
such that the twisted lattices are given by~\eqref{eq:TwistLattices}.
The inter-layer GSFE is assumed to take the form~\cite{carrrelax}:
\begin{equation}
    \Phi^\mathrm{misfit}_{1+}(\bgamma_2) = \phi(2\pi\mathrm{E}_2^{-1} \bgamma_2), \qquad \Phi^\mathrm{misfit}_{2-}(\bgamma_1) = \phi(2\pi\mathrm{E}_1^{-1} \bgamma_1),
\end{equation}
where the symmetry-adapted functional $\phi$ is defined periodically on $[0, 2\pi)^2$ as:
\[
    \begin{aligned}
    \phi\left ( \begin{bmatrix}
            v \\ w
        \end{bmatrix} \right ) := c_0 & + c_1 \left [ \cos(v) + \cos(w) + \cos(v+w)\right ] \\
        & + c_2 \left [ \cos(v+2w) + \cos(v-w) + \cos(2v+w)\right ] \\
        & + c_3 \left [ \cos(2v) + \cos(2w) + \cos(2v+2w)\right ].
    \end{aligned}
\]
The Lamé parameters $\lambda,$ $\mu$~\eqref{def:LinearElasticFunctional} of the graphene sheets as well as the GSFE coefficients $c_{0-3}$ accurately fitted from vdW-DFT calculations~\cite{carrrelax} are summed up in Table~\ref{table:GrapheneCoefficients}.
\begin{table}[t]
    \centering
    \begin{tabular}{|r | r || r | r | r | r |}
        \hline
        $\lambda$ & $\mu$ & $c_0$ & $c_1$ & $c_2$ & $c_3$ \\ \hline
        37,950 & 47,352 & 6.832 & 4.064 & -0.374 & -0.095\\
        \hline
      \end{tabular}
      \caption{Elastic moduli and GSFE coefficients for graphene bilayers in units of meV/unit cell area.}
      \label{table:GrapheneCoefficients}
\end{table}

\paragraph{Discretization and numerical scheme} For consistency, we discretize here directly the minimization problem~\eqref{def:CBModel}. Let us define a uniform $N \times N$ grid on the torus $[-1/2,1/2)^2$:
\begin{equation}
    \mathcal{G}_N := \left \{ \frac{\delta - \left \lfloor \nicefrac{N}{2} \right \rfloor} N, \dots, \frac {-1} N, 0, \frac 1 N, \dots, \frac{\left \lfloor \nicefrac{N}{2} \right \rfloor}{N} \right \}^2, \quad \text{where } \delta = \begin{cases}
        0, & N \text{ odd}, \\ 1, & N \text{ even}.
    \end{cases}
\end{equation}
We introduce next the set of unknown nodal values $\bU^N = \{ \bU^N_\bn \}_{\bn \in \mathcal{G}_N} $ such that nodal values of the displacements $\bu_1$ and $\bu_2$ on $\Gamma_2$ and $\Gamma_1$ respectively are given by
\[
    \bu_1^N(\mE_2 \bn) = \bU^N_\bn = -\bu_2^N(-\mE_1 \bn) \qquad \text{for } \bn \in \mathcal{G}_N,
\]
where we have used the symmetry~\eqref{eq:EulerLagrangeStrongRef} to reduce the number of free variables. We next interpolate these values by the Fourier series:
\begin{equation}\label{eq:FourierDiscretization}
    \bu_1^N(\bgamma_2) = \sum_{\bk \in \mathcal{G}_N} \widehat{\bU}^N_{\bk} e^{\frac{2i\pi}{N} \bk \cdot \mE_2^{-1}\bgamma_2}, \qquad
    \bu_2^N(\bgamma_1) = - \sum_{\bk \in \mathcal{G}_N} \widehat{\bU}^N_{\bk} e^{-\frac{2i\pi}{N}\bk \cdot \mE_1^{-1}\bgamma_1}.
\end{equation}
Note that the Fourier coefficients appearing in this expression can be computed efficiently from the nodal values $\bU^N$ by the Fast Fourier Transform.
The elastic energy~\eqref{def:LinearElasticFunctional} is then exactly computable, while the misfit energy~\eqref{def:MisfitEnergy} can be approximated by uniform quadrature, a straightforward calculation leading to:
\begin{equation}\label{eq:DiscreteMisfitEnergy}
        E_N^\mathrm{misfit}(\bU) := \frac {1}{2 N^2}  \sum_{\bn \in \mathcal{G}_N}\left [  \phi \left( 2\pi \left ( \bn - 2 \mE_2^{-1} \bU^N_\bn \right ) \right)  + \phi \left(2\pi \left ( \bn - 2 \mE_1^{-1} \bU^N_\bn \right ) \right) \right ].
\end{equation}
An explicit expression can also be obtained for the gradient of these energies.
We implemented this approach in the Julia language~\cite{bezanson2017julia}, using the limited-memory BFGS quasi-Newton algorithm from the \texttt{Optim.jl} Julia library to minimize numerically the resulting total energy.

\paragraph{Results and Discussion}
We first discuss numerical results obtained for the twist angle $\theta = 0.3^\circ$, presented in Figure~\ref{fig:hull_space}. The grid size is chosen as $N = 144$.
\begin{figure}[t]
    \centering
    \begin{subfigure}[t]{.45\textwidth}
        \centering
        \includegraphics[height=1.3in]{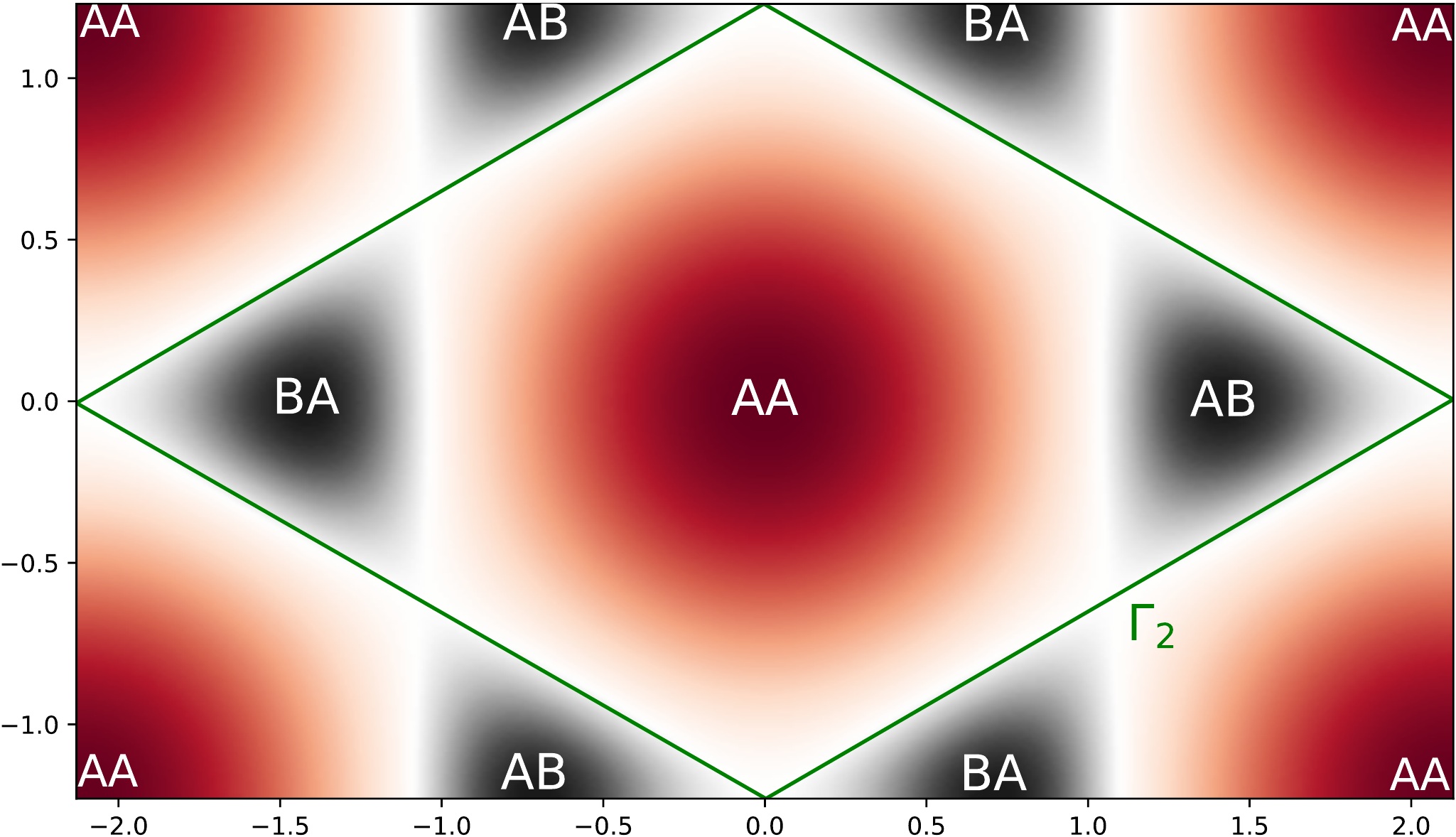}
        \caption{GSFE landscape before relaxation}\label{fig:hull_space_a}
    \end{subfigure}
    \begin{subfigure}[t]{.54\textwidth}
        \centering
        \includegraphics[height=1.326in]{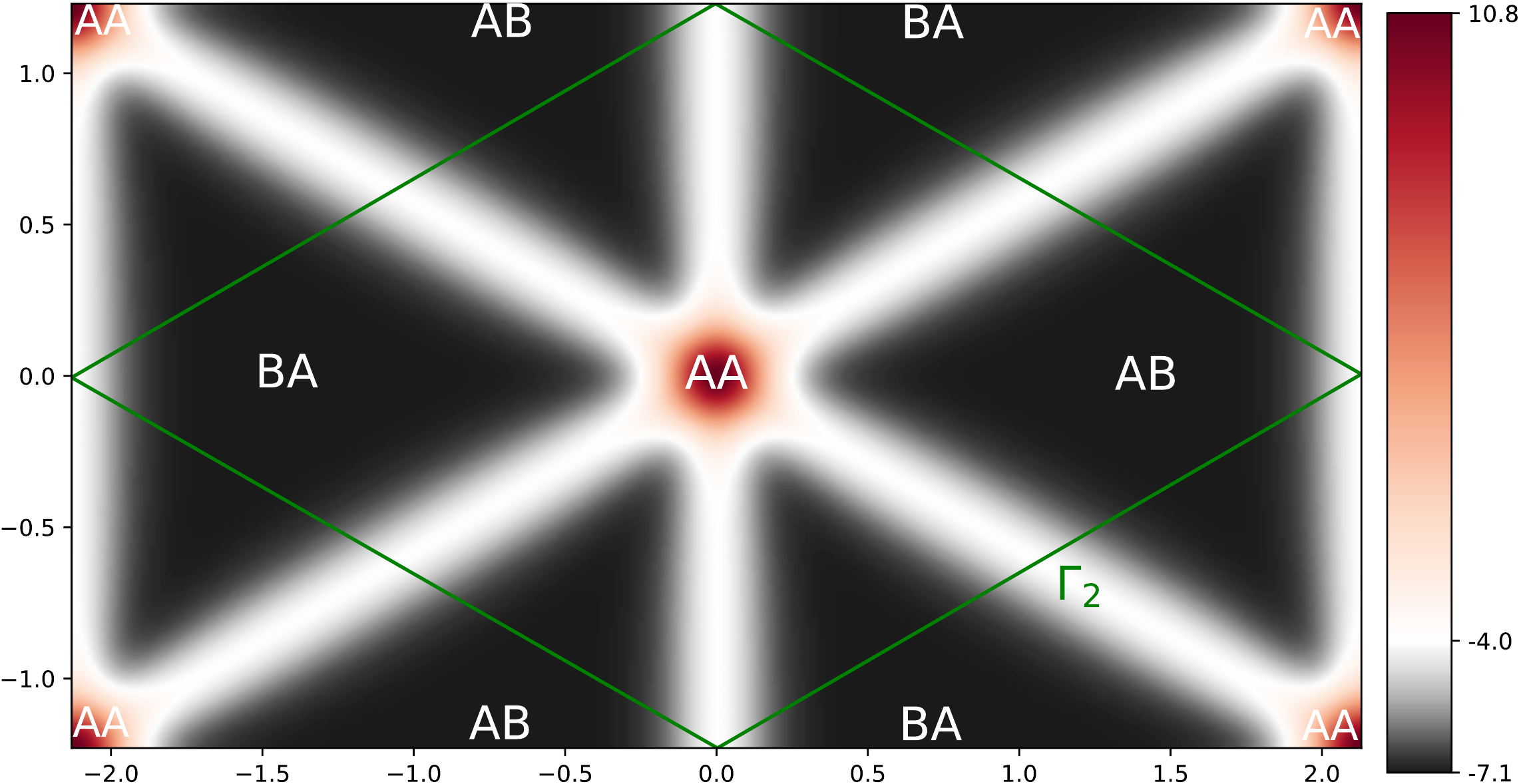}
        \caption{GSFE landscape after relaxation}\label{fig:hull_space_b}
    \end{subfigure}

    \begin{subfigure}[t]{.455\textwidth}
        \centering
        \includegraphics[height=1.3in]{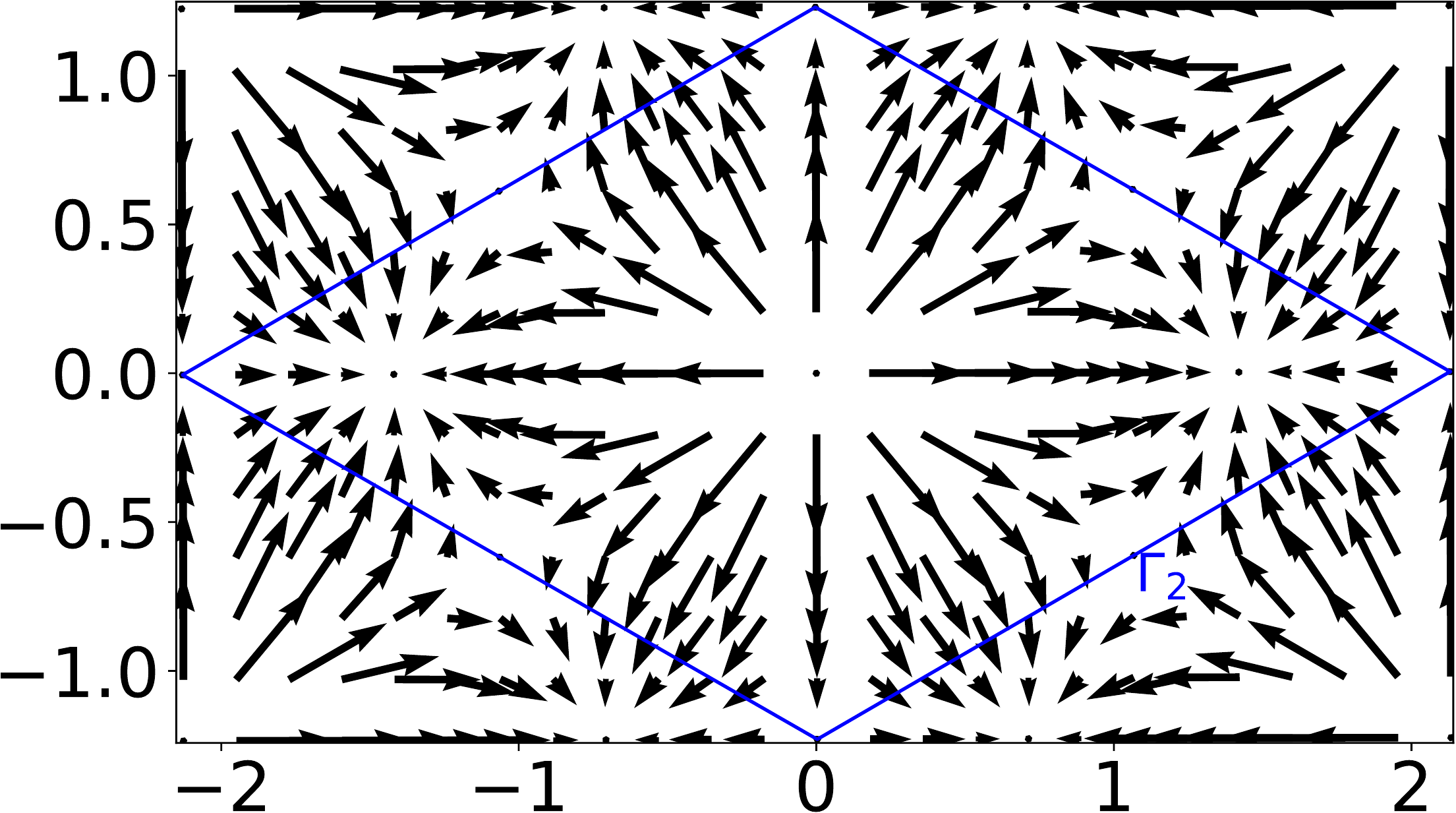}
        \caption{Displacement $-\bu_1(\bgamma_2)$ on the unit cell}
        \label{fig:hull_space_c}
    \end{subfigure}
    \begin{subfigure}[t]{.535\textwidth}
        \centering
        \includegraphics[height=1.3in]{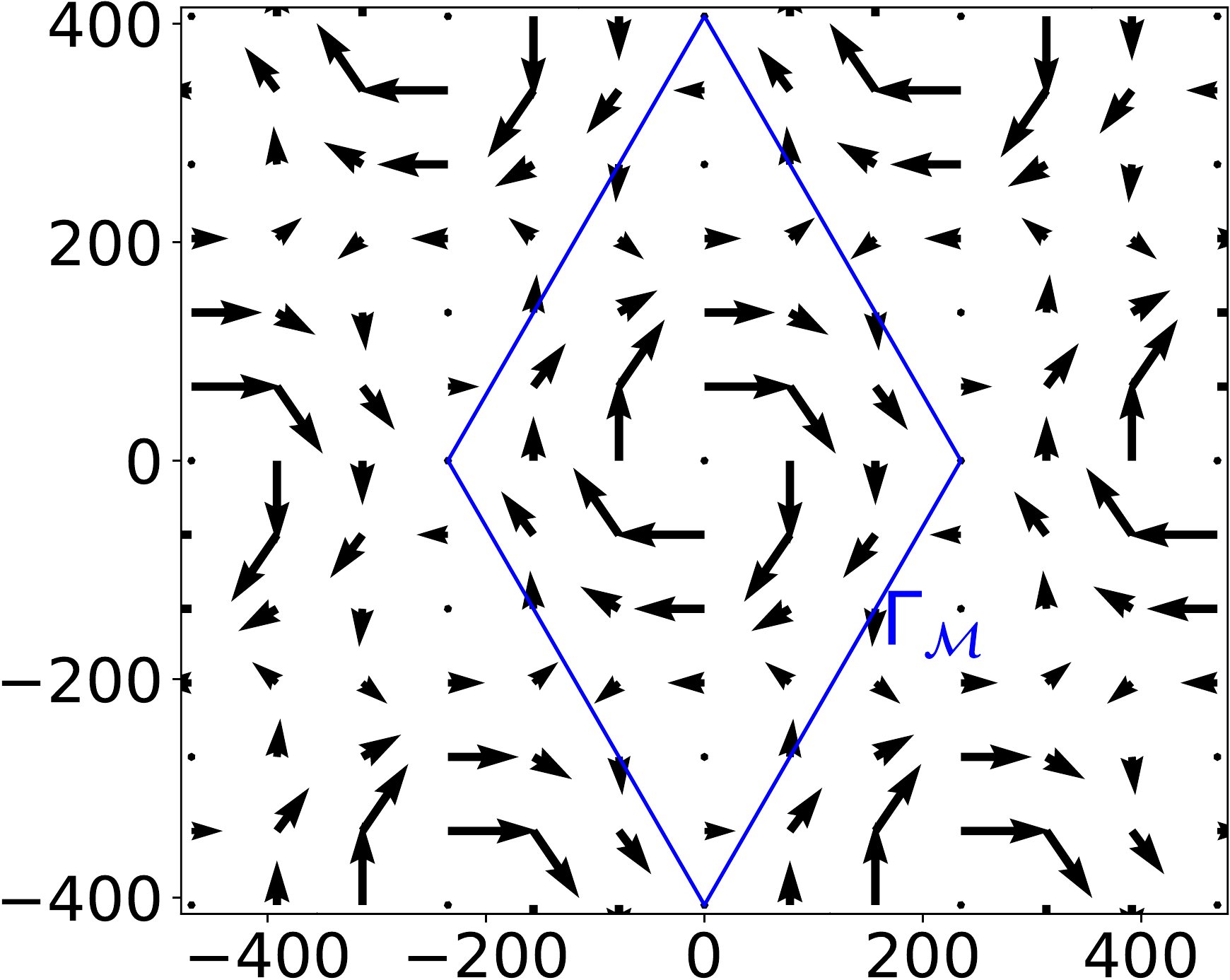}
        \caption{Displacement $\bu_{\mathcal{M},1}(\bx)$ on the moiré cell}
        \label{fig:hull_space_d}
    \end{subfigure}
    \caption{Relaxation for a graphene bilayer with a $0.3^\circ$ relative twist angle. (a): GSFE functional $\Phi_{1+}$ in the reference configuration space $\Gamma_2$, showing the maxima and minima at shifts corresponding respectively to AA and AB/BA stackings. (b) Relaxed local misfit energy as a function of original shift, $\Phi_{1+}\left (\bgamma_2 - 2 \bu_1(\bgamma_2) \right )$. (c) Displacement of atoms in layer 1 in configuration space $-\bu_1$, corresponding to the change in the local stacking configuration if layer $2$ were not moving. (d) Displacement field of atoms in layer 1 in real space, with the periodicity of the moiré cell. }
    \label{fig:hull_space}
\end{figure}
 We observe the formation of a well-defined triangular pattern in configuration space, see Figure~\ref{fig:hull_space_b}. In real space, this corresponds to the triangular meso-scale pattern observed in experiments results~\cite{KimRelax18}: relaxation causes the expansion of the regions of lowest-energy stacking, where the lattices are staggered (graphite or Bernal stacking), into well-defined triangular domains arranged in a hexagonal pattern, and the contraction of higher-energy stacking regions, in particular where lattices are vertically aligned. Note that expansion or contraction in configuration space (as seen on Figure~\ref{fig:hull_space_c}) corresponds to a rotation in real space (as seen on Figure~\ref{fig:hull_space_d}), such that the effective local twist angle is enhanced in regions of higher energy and reduced in regions of lower energy. Thus, the layers are brought into almost perfect alignment with the AB/BA triangular domains.

Next, let us discuss briefly the atomistic mappings corresponding to the deformations computed above. To account for the hexagonal (multilattice) nature of the graphene lattice, two sublattices denoted by A, B can be introduced with respective shifts $\mp a_0/2 \be_1$ in each of the layers before the twist is taken into account. We then make the simplification that the shift field~\cite{van2013symmetries} is fixed: for each of the two layers, the deformed sublattices from a starting configuration $\omega = (\bgamma_1, \bgamma_2)$ are given by the atomistic deformations:
\begin{align*}\label{def:MoireMappings2}
    \left \{ \begin{aligned}
        \bY^\omega_{1,A}(\br_1) &= \br_{1} - a_0/2 \mR_{-\theta/2} \be_1 + \bu_1(\bgamma_2 - \br_1) , \\
         \bY^\omega_{1,B}(\br_1) &= \br_{1} + a_0/2 \mR_{-\theta/2} \be_1 + \bu_1(\bgamma_2 - \br_1) ,
    \end{aligned}  \right. \qquad \text{for } \br_1 \in \bgamma_1 + \cR_1, \\
    \left \{ \begin{aligned}
        \bY^\omega_{2,A}(\br_1) &= \br_{2} - a_0/2 \mR_{+\theta/2} \be_1 + \bu_2(\bgamma_1 - \br_2), \\
         \bY^\omega_{2,B}(\br_1) &= \br_{2} + a_0/2 \mR_{+\theta/2} \be_1 + \bu_2(\bgamma_1 - \br_2),
    \end{aligned}  \right. \qquad \text{for } \br_2 \in \bgamma_2 + \cR_2.
\end{align*}
These mappings allow us to visualize the domain structures at the atomistic level in real space, as presented in Figure~\ref{fig:moire_relax} for a $3^\circ$ twist angle\footnote{Note that interlayer coupling used in the computation leading to Figure~\ref{fig:moire_relax} was artifically enhanced by a factor of $100$ with respect to the coefficients in~\ref{table:GrapheneCoefficients} such that the scales of both unit and moiré cells are visible while still ensuring significant relaxation.}.

A closer inspection of these structures reveals that the pattern is \textit{not periodic} with the moiré lattice $\Gamma_\mathcal{M}$, although the calculation of the hull functions $\bu_1$ and $\bu_2$ uses periodic boundary conditions, and the moiré cell $\Gamma_\mathcal{M}$ introduced in paragraph~\ref{sec:moirecell} resembles a real-space supercell. In this sense, our method allows us to truly model the incommensurability of the lattices $\cR_1$ and $\cR_2$ without the need for constructing an appropriate supercell as in previous works~\cite{Wijk2015,Dai2016,zhang2017energy}.

\section*{Conclusion}
    We have presented a novel elastostatics model~\eqref{def:CBEnergy}-\eqref{def:CBModel} for the relaxation of an incommensurate multi-layered structure.
    Our derivation is based on the ergodic properties~\eqref{def:discreteBirkhoff} with respect to the translation group of local configurations for incommensurate multi-layers,
    a gradient for the atomistic deformation~\eqref{def:GradAtomDef}, and the modulated local disregistry~\eqref{def:ModDisregistry3}.
    We have presented a rigorous analysis in the bilayer case ($p = 2$), showing how the continuum relaxation problem is well-posed and amenable to numerical simulations.
    We can construct the actual relaxed aperiodic atomistic positions as in~\eqref{eq:atomdeformation}, i.e., by sampling the displacement field at the aperiodic atomistic positions of the unrelaxed incommensurate heterostructure, as seen in Figure~\ref{fig:moire_relax}.
    This provides the link between the atomistic picture and existing continuum bilayer models~\cite{Dai2016}.
    These relaxed atomistic configurations can then be used to compute diffraction patterns~\cite{KimRelax18}, the electronic density of states, and transport properties~\cite{Kubo2017,MassattDOS16}.

    Future perspectives include the further study of the proposed models in particular for $p > 2$ layers, where one must face difficulties due to the lack of ellipticity of the elastic energy functional~\eqref{def:CBEnergy} from the analytical point of view, and the curse of dimensionality from the numerical point of view as the dimension of the configuration space becomes larger than $4$.

\section*{Acknowledgments}
This work was supported in part by ARO MURI Award W911NF-14-1-0247 and by the National Science Foundation under NSF Award DMS-1819220.

\appendix

\section{Proof of Prop.~\ref{prop:Interpolants}} \label{sec:appendix}
In this appendix, we detail the technical proof of Proposition~\ref{prop:Interpolants}, which we recall first for ease of reading.

\begin{proposition}
Let $\bu \in W^{2,q}(X)$, then:
    \begin{equation}\label{est:InterpolantDifference_app}
         \left \Vert \widehat{\bu}_j - \widetilde{\bu}_j \right \Vert_{L^q(\Omega)} \leq \left ( \frac{\vert \Gamma_j \vert}{2q+1} \right )^{1/q} \theta ^2 \; \left \Vert \nabla^2_\omega \bu_j \right \Vert_{L^q(X_j)}
    \end{equation}
    where $X_j = \bigtimes_{i \neq j} \Gamma_i$ is the subset of the transversal corresponding to lattice sites of layer $j$, $\nabla^2_\omega \bu_j$ is understood as a $2$-linear form for which the norm is defined as $\Vert \ell \Vert := \sup_{\vert \mathbf{h}_1 \vert = \vert \mathbf{h}_2 \vert = 1 } \left \vert \ell[\mathbf{h}_1, \mathbf{h}_2] \right \vert $ and
    \begin{equation}\label{def:theta_app}
        \theta = \sqrt{p} \sup_{1 \leq i,j \leq p} \left \Vert \mE_i - \mE_j \right \Vert, \qquad \text{where } \Vert \cdot \Vert \text{ denotes the $\bbR^2$-operator norm}.
    \end{equation}
\end{proposition}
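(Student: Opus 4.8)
The plan is to recognise the difference $\widetilde{\bu}_j - \widehat{\bu}_j$ as the error of a bilinear interpolation and to control it by the second configuration-space derivative of $\bu_j$. Fix $\omega = (\bgamma_1,\dots,\bgamma_p)\in\Omega$ and write $\bgamma_j = \mE_j\begin{bmatrix} s\\ t\end{bmatrix}$ with $0\le s,t<1$, so the corners of~\eqref{def:Corners} satisfy $\br_{00} = \bgamma_j$ and $\br_{ab} = \br_{00} - a\,\mE_j\be_1 - b\,\mE_j\be_2$. The configuration $\mathtt{T}_{-\br_{ab}}\omega$ has vanishing $j$-th coordinate, hence lies in $X_j$, and since $\mE_i\be_k\in\cR_i$ its $i$-th coordinate, reduced modulo $\cR_i$, equals $\bgamma_i - \br_{00} + a(\mE_j-\mE_i)\be_1 + b(\mE_j-\mE_i)\be_2$. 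Thus, although the four corner configurations $\mathtt{T}_{-\br_{ab}}\omega$ are genuinely far apart, once carried into the torus $X_j$ they differ only by the small vectors $\bh^{ab} := \big((a-s)(\mE_j-\mE_i)\be_1 + (b-t)(\mE_j-\mE_i)\be_2\big)_{i\neq j}$, for which $\vert\bh^{ab}\vert^2 \le \theta^2\big((a-s)^2 + (b-t)^2\big)$ (since $\sum_{i\neq j}\Vert\mE_j-\mE_i\Vert^2\le\theta^2$); and, exactly as in the computation preceding~\eqref{def:RefDisregistry}, the $i$-th slot of the $ab$-corner is $\mathbf{b}_{j\to i}(\omega) + \bh^{ab}_i$, whose average over $a,b\in\{0,1\}$ against the weights $\lambda_{00}=(1-s)(1-t)$, $\lambda_{10}=s(1-t)$, $\lambda_{01}=(1-s)t$, $\lambda_{11}=st$ equals $\mathbf{b}_{j\to i}(\omega)$. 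Identifying $\Pi_j\omega$ with $(\mathbf{b}_{j\to i}(\omega))_{i\neq j}\in X_j$, the $ab$-corner is therefore $\Pi_j\omega + \bh^{ab}$ with weighted average $\Pi_j\omega$, so that
\[
\widetilde{\bu}_j(\omega) - \widehat{\bu}_j(\omega) = \sum_{ab}\lambda_{ab}\,\big(\bu_j(\Pi_j\omega + \bh^{ab}) - \bu_j(\Pi_j\omega)\big),
\]
which is precisely the bilinear interpolation error at the node $\Pi_j\omega$.

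Reducing to smooth $\bu$ by density in $W^{2,q}(X)$, the second-order Taylor formula applies; because $\bh^{ab}$ is affine in $(a,b)$ and the $\lambda_{ab}$ average $(a,b)$ to $(s,t)$, we have $\sum_{ab}\lambda_{ab}\bh^{ab}=\bzero$, so the zeroth- and first-order terms cancel:
\[
\widetilde{\bu}_j(\omega) - \widehat{\bu}_j(\omega) = \sum_{ab}\lambda_{ab}\int_0^1 (1-\tau)\,\nabla^2_\omega\bu_j\big(\Pi_j\omega + \tau\bh^{ab}\big)\big[\bh^{ab},\bh^{ab}\big]\,d\tau.
\]
Estimating $\big\vert\nabla^2_\omega\bu_j(\cdot)[\bh^{ab},\bh^{ab}]\big\vert \le \theta^2\big((a-s)^2+(b-t)^2\big)\Vert\nabla^2_\omega\bu_j(\cdot)\Vert$ and using the elementary identity $\sum_{ab}\lambda_{ab}\big((a-s)^2+(b-t)^2\big) = s(1-s)+t(1-t) =: w(s,t)$ yields, with $\nu_{s,t}$ a probability measure on $\{00,10,01,11\}\times[0,1]$ (the factor $\tfrac12$ being $\int_0^1(1-\tau)\,d\tau$),
\[
\big\vert\widetilde{\bu}_j(\omega) - \widehat{\bu}_j(\omega)\big\vert \le \theta^2\,\frac{w(s,t)}{2}\int \big\Vert\nabla^2_\omega\bu_j(\Pi_j\omega + \tau\bh^{ab})\big\Vert\,d\nu_{s,t}(ab,\tau).
\]

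To conclude, raise to the power $q$, apply Jensen's inequality in $\nu_{s,t}$, and integrate over $\Omega$, separating $\bgamma_j$ from $(\bgamma_i)_{i\neq j}$. The key point is that, for fixed $\bgamma_j$, the map $(\bgamma_i)_{i\neq j}\mapsto \Pi_j\omega + \tau\bh^{ab}$ is a translation of the torus $X_j = \bigtimes_{i\neq j}\Gamma_i$, hence measure-preserving, so $\int_{X_j}\Vert\nabla^2_\omega\bu_j(\Pi_j\omega + \tau\bh^{ab})\Vert^q = \Vert\nabla^2_\omega\bu_j\Vert_{L^q(X_j)}^q$ independently of $(ab,\tau)$; integrating out $\bgamma_j$ then contributes the factor $\vert\Gamma_j\vert\int_0^1\!\!\int_0^1\big(\tfrac{w(s,t)}{2}\big)^q ds\,dt$. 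By convexity $\big(\tfrac{w(s,t)}{2}\big)^q \le \tfrac12\big((s(1-s))^q + (t(1-t))^q\big)$, so this integral is at most $\int_0^1(s(1-s))^q\,ds = B(q+1,q+1)\le\tfrac{1}{2q+1}$, the last bound following from the log-convexity of the Gamma function. Assembling these estimates gives exactly~\eqref{est:InterpolantDifference_app}. The step I expect to be the real obstacle is the first one: establishing that the four corner configurations cluster within distance $O(\theta)$ once carried into $X_j$, and that their weighted average is exactly $\Pi_j\omega$; once this representation is in hand, the rest is routine bookkeeping, with the Beta-function estimate supplying the sharp constant $\tfrac{1}{2q+1}$.
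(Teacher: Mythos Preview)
Your proof is correct and follows essentially the same route as the paper: identify $\mathtt{T}_{-\br_{ab}}\omega = \Pi_j\omega + \bh^{ab}$ with the weighted average of the $\bh^{ab}$ vanishing, apply the integral Taylor remainder so the affine part cancels, then exploit that for fixed $\bgamma_j$ the map $(\bgamma_i)_{i\neq j}\mapsto \Pi_j\omega+\tau\bh^{ab}$ is a translation of $X_j$ to reduce the $L^q$ integral to $\Vert\nabla^2_\omega\bu_j\Vert_{L^q(X_j)}$. The only difference is how the constant $\tfrac{1}{2q+1}$ appears: the paper uses the uniform bound $\Vert\delta\omega_{ab}\Vert\le\sqrt{2}\,\theta$ so that the constant arises directly as $\int_0^1(1-h)^{2q}\,dh$, whereas you keep the sharper $s,t$-dependent bound $\vert\bh^{ab}\vert^2\le\theta^2\big((a-s)^2+(b-t)^2\big)$ and recover the same constant via $\int_0^1\!\!\int_0^1\big(\tfrac{w(s,t)}{2}\big)^q\,ds\,dt\le B(q+1,q+1)\le\tfrac{1}{2q+1}$; your route is slightly more elaborate but would in fact yield a strictly smaller constant if you did not relax it at the last step.
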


\begin{proof}
    It is natural to relate this result to local finite element interpolant error estimation and follow similar steps. Let $\omega \in \Omega$ be an arbitrary configuration, $j$ be an arbitrary layer number. Consider a Taylor expansion up to degree 1 of $\bu_j$ around the point $\Pi_j\omega$:
    \[
        T^1_\omega \bu_j(\delta \omega) = \bu_j \left ( \Pi_j \omega \right ) + \nabla_\omega \bu_j \left(\Pi_j \omega \right )\cdot \delta \omega \text{ where } \delta \omega = (\boldsymbol{\delta \omega}_1, \ldots, \boldsymbol{\delta \omega}_p) \in \bbR^{2p},\ \boldsymbol{\delta \omega}_j = \bzero.
    \]
    Using the integral formula for the residual of the Taylor series, we have:
    \begin{align*}
        \bu_j(\Pi_j \omega + \delta \omega) - T^1_\omega \bu_j(\delta \omega) &= \int_0^1 \frac{(1-h)^2}{2}\frac{d^2\bu_j}{dh^2} (\Pi_j \omega + h \delta \omega)d h\\
        &= \int_0^1 \frac{(1-h)^2}{2} \nabla_\omega^2 \bu_j (\Pi_j \omega + h \delta \omega) [\delta \omega, \delta \omega] d h,
    \end{align*}
    so by Jensen's inequality we obtain the estimate:
    \begin{equation}\label{est:Taylor}
        \Vert \bu_j(\Pi_j \omega + \delta \omega) - T^1_\omega \bu_j(\delta \omega) \Vert^q \leq  \Vert \delta \omega \Vert^{2q} \int_0^1  \frac{(1-h)^{2q}}{2^q} \left \Vert \nabla_\omega^2 \bu_j (\Pi_j \omega + h \delta \omega) \right \Vert^q d h.
    \end{equation}
    As earlier (see~\eqref{def:InterpolantDisplacement}), let us write $\bgamma_j = \mE_j \begin{bmatrix} s \\ t \end{bmatrix}$ with $0 \leq s, t < 1$.
The lattice sites in layer $j$ around the origin are located at the four points $\br_{ab}$ with $a, b \in \{ 0,1 \}$ defined as in~\eqref{def:Corners}.
One checks easily from the definition~\eqref{def:InterpolantConfiguration} that
\[
    \mathtt{T}_{-\br_{ab}} \omega = \Pi_j \omega + \delta \omega_{ab} \qquad \text{for } a,b \in \{0,1\},
\]
where the shifts $\delta \omega_{00}$, $\delta \omega_{10}$, $\delta \omega_{01}$, $\delta \omega_{11}$ can be chosen as
\begin{align*}
    & \delta \omega_{00} = \left ( (\mE_1 - \mE_j) \begin{bmatrix} s\\t \end{bmatrix}, \ldots, (\mE_p - \mE_j) \begin{bmatrix} s\\t \end{bmatrix} \right ), \\
    & \delta \omega_{10} = \left ( (\mE_1 - \mE_j) \begin{bmatrix} s-1\\t \end{bmatrix}, \ldots, (\mE_p - \mE_j) \begin{bmatrix} s-1\\t \end{bmatrix} \right ), \\
    & \delta \omega_{01} = \left ( (\mE_1 - \mE_j) \begin{bmatrix} s\\t-1 \end{bmatrix}, \ldots, (\mE_p - \mE_j) \begin{bmatrix} s\\t-1 \end{bmatrix} \right ), \\
    & \delta \omega_{11} = \left ( (\mE_1 - \mE_j) \begin{bmatrix} s-1\\t-1 \end{bmatrix}, \ldots, (\mE_p - \mE_j) \begin{bmatrix} s-1\\t-1 \end{bmatrix} \right )
\end{align*}
since $\delta \omega_{ab}$ is defined on $\Omega$ and is thus invariant under lattice shifts.
Note that for $a,b \in \{0,1\}$,
\[
    \delta \omega_{ab,j} = \bzero \qquad\text{and}\qquad \Vert \delta \omega_{ab} \Vert \leq \sqrt{2}\;  \theta,
\]
where $\theta$ is defined in~\eqref{def:theta}.
Furthermore, the weighted average of these shifts is zero:
\[
    \sum_{a,b \in \{0, 1 \}}\alpha_{ab}\; \delta \omega_{ab}  = 0,
\]
where we have introduced the bilinear weights
\[
    \alpha_{00} = (1-s)(1-t), \quad \alpha_{10} = s(1-t), \quad \alpha_{01} = (1-s)t, \quad \alpha_{11} = st.
\]
As a consequence, by the affine character of the Taylor approximant $T_\omega^1 \bu_j$ defined above,
\begin{equation}\label{id:SmoothInterpolantTaylor}
    \widehat{\bu}_j(\omega) = T^1_\omega \bu_j(0) = \sum_{a,b \in \{ 0, 1 \} } \alpha_{ab} \; T^1_\omega \bu_j(\delta \omega_{ab}).
\end{equation}
Let us now rewrite the definition~\eqref{def:InterpolantDisplacement} of the bilinear interpolant $\widetilde{\bu}_j$ as:
\begin{equation}\label{id:DiscreteInterpolantShifts}
    \widetilde{\bu}_j(\omega) = \sum_{a,b \in \{ 0, 1 \} } \alpha_{ab} \; \bu_j(\Pi_j \omega + \delta \omega_{ab}).
\end{equation}
Taking the difference of the identities~\eqref{id:SmoothInterpolantTaylor},~\eqref{id:DiscreteInterpolantShifts} and using convexity of the norm and the above Taylor estimate~\eqref{est:Taylor}, we find the pointwise estimate
\begin{equation*}
    \begin{aligned}
      \left \Vert \widehat{\bu}_j(\omega) - \widetilde{\bu}_j(\omega) \right \Vert^q  \leq \theta^{2q} \int_0^1 (1-h)^{2q}
                \sum_{a,b \in \{ 0, 1 \} } \alpha_{ab} \;  \left \Vert \nabla_\omega^2 \bu_j (\Pi_j \omega + h \delta \omega_{ab}) \right \Vert^q d h.
    \end{aligned}
\end{equation*}
We may now integrate over the configuration parameter $\omega = (\bgamma_1, \ldots, \bgamma_p)$. The difference between $\omega$ and $\left ( \Pi_j \omega + h \delta \omega_{ab}\right )$ depends only on $s$, $t$ and $h$, for example
\[
    \omega - \left ( \Pi_j \omega + h \delta \omega_{00}\right ) = - \left ( \left ( h \mE_j + (1-h) \mE_1  \right) \begin{bmatrix} s \\ t \end{bmatrix} , \ldots, \left ( h \mE_j + (1-h) \mE_p  \right) \begin{bmatrix} s \\ t \end{bmatrix}  \right ).
\]
Integrating over the variables $\bgamma_i$ for $i \neq j$ with fixed values of $\bgamma_j= \mE_j \begin{bmatrix} s \\ t \end{bmatrix}$ and $h$ we find that by translation invariance of the Lebesgue measure,
\[
    \idotsint_{\bigtimes_{i \neq j} \Gamma_i } \left \Vert \nabla_\omega^2 \bu_j (\Pi_j \omega + h \delta \omega_{ab}) \right \Vert^q_\mathrm{op} \prod_{i \neq j} d \bgamma_i = \left \Vert \nabla^2_\omega \bu_j \right \Vert_{L^q(X_j, \Vert \cdot \Vert_\mathrm{op})}^q.
\]
This leads to
\begin{equation*}
    \begin{aligned}
      \idotsint_{\bigtimes_{i \neq j} \Gamma_i } \left \Vert \widehat{\bu}_j(\omega) - \widetilde{\bu}_j(\omega) \right \Vert^q \prod_{i \neq j} d \bgamma_i \leq \theta^{2q}
                \left \Vert \nabla^2_\omega \bu_j \right \Vert_{L^q(X_j, \Vert \cdot \Vert_\mathrm{op})}\int_0^1 (1-h)^{2q}  d h.
    \end{aligned}
\end{equation*}
Since the right-hand side does not depend on the remaining variable $\bgamma_j$, one last integration over it yields:
\begin{equation*}
      \left \Vert \widehat{\bu}_j(\omega) - \widetilde{\bu}_j(\omega) \right \Vert_{L^q(\Omega)}^q \leq \frac{\vert \Gamma_j \vert }{2q+1} \theta^{2q} \; \left \Vert \nabla^2_\omega \bu_j \right \Vert_{L^q(X_j, \Vert \cdot \Vert_\mathrm{op})}^q,
\end{equation*}
which proves the desired estimate~\eqref{est:InterpolantDifference}.
\end{proof}


\begin{thebibliography}{10}

\bibitem{AubryLeDaeron1983}
{\sc S.~Aubry and P.~Le~Daeron}, {\em The discrete {F}renkel-{K}ontorova model
  and its extensions. {I.} {E}xact results for the ground-states}, Physica D, 8
  (1983), pp.~381--422.

\bibitem{bellissard2003coherent}
{\sc J.~Bellissard}, {\em Coherent and dissipative transport in aperiodic
  solids}, in Lecture Notes in Physics, vol.~597, Springer, 2003, pp.~413--486.

\bibitem{bellissard1994noncommutative}
{\sc J.~Bellissard, A.~van Elst, and H.~Schulz-Baldes}, {\em The noncommutative
  geometry of the quantum {H}all effect}, Journal of Mathematical Physics, 35
  (1994), pp.~5373--5451.

\bibitem{bezanson2017julia}
{\sc J.~Bezanson, A.~Edelman, S.~Karpinski, and V.~B. Shah}, {\em Julia: A
  fresh approach to numerical computing}, SIAM review, 59 (2017), pp.~65--98.

\bibitem{blanc2007atomistic}
{\sc X.~Blanc, C.~Le~Bris, and P.-L. Lions}, {\em Atomistic to continuum limits
  for computational materials science}, ESAIM: Mathematical Modelling and
  Numerical Analysis, 41 (2007), pp.~391--426.

\bibitem{Kubo2017}
{\sc E.~Canc{\`e}s, P.~Cazeaux, and M.~Luskin}, {\em Generalized {K}ubo
  formulas for the transport properties of incommensurate {2D} atomic
  heterostructures}, Journal of Mathematical Physics, 58 (2017), p.~063502.

\bibitem{Cao2018correlated}
{\sc Y.~Cao, V.~Fatemi, A.~Demir, S.~Fang, S.~L. Tomarken, J.~Y. Luo, J.~D.
  Sanchez-Yamagishi, K.~Watanabe, T.~Taniguchi, E.~Kaxiras, and
  P.~Jarillo-Herrero}, {\em Correlated insulator behaviour at half-filling in
  magic-angle graphene superlattices}, Nature, 556 (2018), pp.~80--84.

\bibitem{Cao2018superconductivity}
{\sc Y.~Cao, V.~Fatemi, S.~Fang, K.~Watanabe, T.~Taniguchi, E.~Kaxiras, and
  P.~Jarillo-Herrero}, {\em Unconventional superconductivity in magic-angle
  graphene superlattices}, Nature, 556 (2018), pp.~43--50.

\bibitem{carr2019minimal}
{\sc S.~Carr, S.~Fang, Z.~Zhu, and E.~Kaxiras}, {\em Minimal model for
  low-energy electronic states of twisted bilayer graphene}.
\newblock arXiv preprint arXiv:1901.03420, 2019.

\bibitem{carr2017}
{\sc S.~Carr, D.~Massatt, S.~Fang, P.~Cazeaux, M.~Luskin, and E.~Kaxiras}, {\em
  Twistronics: Manipulating the electronic properties of two-dimensional
  layered structures through their twist angle}, Physical Review B, 95 (2017),
  p.~075420.

\bibitem{carrrelax}
{\sc S.~Carr, D.~Massatt, S.~B. Torrisi, P.~Cazeaux, M.~Luskin, and
  E.~Kaxiras}, {\em {Relaxation and Domain Formation in Incommensurate 2D
  Heterostructures}}.
\newblock arXiv preprint arXiv:1805.06972, 2018.

\bibitem{neto2009electronic}
{\sc A.~H. Castro~Neto, F.~Guinea, N.~M.~R. Peres, K.~S. Novoselov, and A.~K.
  Geim}, {\em The electronic properties of graphene}, Reviews of Modern
  Physics, 81 (2009), p.~109.

\bibitem{Cazeaux2016}
{\sc P.~Cazeaux, M.~Luskin, and E.~B. Tadmor}, {\em Analysis of rippling in
  incommensurate one-dimensional coupled chains}, Multiscale Modeling \&
  Simulation, 15 (2017), pp.~56--73.

\bibitem{ciarlet2013linear}
{\sc P.~G. Ciarlet}, {\em Linear and nonlinear functional analysis with
  applications}, vol.~130, Siam, 2013.

\bibitem{constantinescu2013stacking}
{\sc G.~Constantinescu, A.~Kuc, and T.~Heine}, {\em Stacking in bulk and
  bilayer hexagonal boron nitride}, Physical review letters, 111 (2013),
  p.~036104.

\bibitem{dacorogna2007direct}
{\sc B.~Dacorogna}, {\em Direct methods in the calculus of variations},
  vol.~78, Springer Science \& Business Media, 2007.

\bibitem{Dai2016}
{\sc S.~Dai, Y.~Xiang, and D.~J. Srolovitz}, {\em Twisted bilayer graphene:
  Moiré with a twist}, Nano Letters, 16 (2016), pp.~5923--5927.

\bibitem{Espanol2d}
{\sc M.~I. Espa{\~n}ol, D.~Golovaty, and J.~P. Wilber}, {\em
  Discrete-to-continuum modelling of weakly interacting incommensurate
  two-dimensional lattices}, Proceedings of the Royal Society of London A:
  Mathematical, Physical and Engineering Sciences, 474 (2018).

\bibitem{Evans2010}
{\sc L.~C. Evans}, {\em Partial Differential Equations: Second Edition},
  Springer series in computational physics, American Mathematical Society,
  Providence, Rhode Island, 2010.

\bibitem{Geim2013}
{\sc A.~K. Geim and I.~V. Grigorieva}, {\em {Van der Waals heterostructures}},
  Nature, 499 (2013), pp.~419--425.

\bibitem{gong2014stacking}
{\sc X.~Gong and E.~Mele}, {\em Stacking textures and singularities in bilayer
  graphene}, Physical Review B, 89 (2014), p.~121415.

\bibitem{Kim3364}
{\sc K.~Kim, A.~DaSilva, S.~Huang, B.~Fallahazad, S.~Larentis, T.~Taniguchi,
  K.~Watanabe, B.~J. LeRoy, A.~H. MacDonald, and E.~Tutuc}, {\em Tunable
  moir{\'e} bands and strong correlations in small-twist-angle bilayer
  graphene}, Proceedings of the National Academy of Sciences, 114 (2017),
  pp.~3364--3369.

\bibitem{kolmogorov2000smoothest}
{\sc A.~N. Kolmogorov and V.~H. Crespi}, {\em Smoothest bearings: interlayer
  sliding in multiwalled carbon nanotubes}, Physical Review Letters, 85 (2000),
  p.~4727.

\bibitem{marom2010stacking}
{\sc N.~Marom, J.~Bernstein, J.~Garel, A.~Tkatchenko, E.~Joselevich, L.~Kronik,
  and O.~Hod}, {\em Stacking and registry effects in layered materials: the
  case of hexagonal boron nitride}, Physical Review Letters, 105 (2010),
  p.~046801.

\bibitem{MassattDOS16}
{\sc D.~Massatt, M.~Luskin, and C.~Ortner}, {\em Electronic density of states
  for incommensurate layers}, Multiscale Modeling \& Simulation, 15 (2017),
  pp.~476--499.

\bibitem{nam2017lattice}
{\sc N.~N. Nam and M.~Koshino}, {\em Lattice relaxation and energy band
  modulation in twisted bilayer graphene}, Physical Review B, 96 (2017),
  p.~075311.

\bibitem{Novoselov666}
{\sc K.~S. Novoselov, A.~K. Geim, S.~V. Morozov, D.~Jiang, Y.~Zhang, S.~V.
  Dubonos, I.~V. Grigorieva, and A.~A. Firsov}, {\em Electric field effect in
  atomically thin carbon films}, Science, 306 (2004), pp.~666--669.

\bibitem{ortner2011priori}
{\sc C.~Ortner}, {\em A priori and a posteriori analysis of the quasinonlocal
  quasicontinuum method in 1d}, Mathematics of Computation, 80 (2011),
  pp.~1265--1285.

\bibitem{ortner2013justification}
{\sc C.~Ortner and F.~Theil}, {\em Justification of the {Cauchy--Born}
  approximation of elastodynamics}, Archive for Rational Mechanics and
  Analysis, 207 (2013), pp.~1025--1073.

\bibitem{prodan2012quantum}
{\sc E.~Prodan}, {\em Quantum transport in disordered systems under magnetic
  fields: A study based on operator algebras}, Applied Mathematics Research
  eXpress, 2013 (2013), pp.~176--265.

\bibitem{su2017continuous}
{\sc X.~Su and R.~de~la Llave}, {\em A continuous family of equilibria in
  ferromagnetic media are ground states}, Communications in Mathematical
  Physics, 354 (2017), pp.~459--475.

\bibitem{2DPerturb15}
{\sc G.~A. Tritsaris, S.~N. Shirodkar, E.~Kaxiras, P.~Cazeaux, M.~Luskin,
  P.~Plech\'a\v{c}, and E.~Canc\`es}, {\em Perturbation theory for weakly
  coupled two-dimensional layers}, Journal of Materials Research, 31 (2016),
  pp.~959--966.

\bibitem{van2013symmetries}
{\sc B.~Van~Koten and C.~Ortner}, {\em Symmetries of 2-lattices and second
  order accuracy of the {C}auchy--{B}orn model}, Multiscale Modeling \&
  Simulation, 11 (2013), pp.~615--634.

\bibitem{Wijk2015}
{\sc M.~M. van Wijk, A.~Schuring, M.~I. Katsnelson, and A.~Fasolino}, {\em
  Relaxation of moir\'e patterns for slightly misaligned identical lattices:
  graphene on graphite}, 2D Materials, 2 (2015), p.~034010.

\bibitem{vitek1968intrinsic}
{\sc V.~Vitek}, {\em Intrinsic stacking faults in body-centred cubic crystals},
  Philosophical Magazine, 18 (1968), pp.~773--786.

\bibitem{wen2018dihedral}
{\sc M.~Wen, S.~Carr, S.~Fang, E.~Kaxiras, and E.~B. Tadmor}, {\em
  Dihedral-angle-corrected registry-dependent interlayer potential for
  multilayer graphene structures}, Physical Review B, 98 (2018), p.~235404.

\bibitem{KimRelax18}
{\sc H.~Yoo, R.~Engelke, S.~Carr, S.~Fang, K.~Zhang, P.~Cazeaux, S.~H. Sung,
  R.~Hovden, A.~W. Tsen, T.~Taniguchi, K.~Watanabe, G.-C. Yi, M.~Kim,
  M.~Luskin, E.~B. Tadmor, E.~Kaxiras, and P.~Kim}, {\em Atomic and electronic
  reconstruction at van der waals interface in twisted bilayer graphene}.
\newblock arXiv preprint arXiv:1804.03806, 2018.

\bibitem{zhang2017energy}
{\sc K.~Zhang and E.~B. Tadmor}, {\em Energy and moir{\'e} patterns in {2D}
  bilayers in translation and rotation: {A} study using an efficient
  discrete--continuum interlayer potential}, Extreme Mechanics Letters, 14
  (2017), pp.~16--22.

\bibitem{zhou2015van}
{\sc S.~Zhou, J.~Han, S.~Dai, J.~Sun, and D.~J. Srolovitz}, {\em {Van der Waals
  bilayer energetics: Generalized stacking-fault energy of graphene, boron
  nitride, and graphene/boron nitride bilayers}}, Physical Review B, 92 (2015),
  p.~155438.

\end{thebibliography}
\end{document}